%% file: QAT.tex
\documentclass[acmsmall,screen,nonacm]{acmart}

\input{includes}

\begin{document}

\title[The Time--Space Complexity of Checking Multiple Assertions in Quantum Programs]{The Time--Space Complexity of Checking\\Multiple Assertions in Quantum Programs}

\author{Shengyuan Yang}
\orcid{0009-0006-3298-7999}
\author{Charles Yuan}
\orcid{0000-0002-4918-4467}
\affiliation{%
  \institution{University of Wisconsin--Madison}
  \country{USA}
}
\email{syang686@wisc.edu, charlesyuan@cs.wisc.edu}

\renewcommand{\shortauthors}{Yang and Yuan.}

\settopmatter{}
\authorsaddresses{}

\begin{abstract}
Runtime assertions are a promising mechanism for testing and debugging quantum programs.
But unlike the classical world, checking a quantum program that contains multiple assertions often requires using additional space or running the program additional times.
For example, on current quantum hardware where mid-circuit measurement is restricted or costly, an assertion's pass/fail outcome cannot be revealed immediately.
Instead, it is routed into an ancilla qubit during execution and read out by a terminal measurement.
For a program with $n$ assertions, a naive strategy uses $n$ ancillas to learn all $n$ outcomes, while an alternative uses one ancilla but repeats program execution over $n$ rounds, checking one assertion per round. Both satisfy $S \cdot T = O(n)$, where $S$ is the number of ancillas and $T$ the number of executions: a fundamental time--space trade-off.

Can one do asymptotically better? We reveal that the answer depends sharply on the information to be learned. Reporting the outcomes of all assertions requires linear complexity, but two partial-information tasks of detecting whether any assertion fails, and of identifying the first failing assertion, require only logarithmic complexity --- an asymptotic improvement. Moreover, the checking strategies for these tasks can trade time for space in useful ways.
In this work, we formalize the complexity of checking multiple assertions in a quantum program.
Using this definition, we establish its landscape of asymptotic lower bounds and constructive upper bounds.
We confirm via a case study on Grover's algorithm that the resource costs of constructed strategies match theoretical predictions, illustrating the practical design space for quantum programmers.
\end{abstract}

\begin{CCSXML}
<ccs2012>
   <concept>
       <concept_id>10010520.10010521.10010542.10010550</concept_id>
       <concept_desc>Computer systems organization~Quantum computing</concept_desc>
       <concept_significance>500</concept_significance>
       </concept>
   <concept>
       <concept_id>10003752.10010070</concept_id>
       <concept_desc>Theory of computation~Theory and algorithms for application domains</concept_desc>
       <concept_significance>500</concept_significance>
       </concept>
   <concept>
       <concept_id>10011007.10011074.10011099.10011102.10011103</concept_id>
       <concept_desc>Software and its engineering~Software testing and debugging</concept_desc>
       <concept_significance>500</concept_significance>
       </concept>
   <concept>
       <concept_id>10003752.10010124.10010138.10010144</concept_id>
       <concept_desc>Theory of computation~Assertions</concept_desc>
       <concept_significance>500</concept_significance>
       </concept>
 </ccs2012>
\end{CCSXML}

\maketitle

\input{main/intro}
\input{main/prelim}
\input{main/formal}

\input{main/shared}
\input{main/exist}
\input{main/first}

\input{main/list}
\input{main/landscape}

\input{main/rela}
\input{main/imply}

\bibliographystyle{ACM-Reference-Format}
\bibliography{ref/ref}

\appendix
\input{appendix/applying}
\input{appendix/stability}
\input{appendix/exist-lb}
\input{appendix/alternative}
\input{appendix/first-lb}

\end{document}

%% file: includes.tex
\usepackage{threeparttable}
\usepackage{quantikz}
\usetikzlibrary{fit,backgrounds}
\usepackage{amsmath}
\usepackage{xspace}
\usepackage{relsize,tipx}
\usepackage{array,xspace,listings}
\usepackage{longtable}
\usepackage{enumitem}
\usepackage{tabularx}
\usepackage{subcaption}
\usepackage{graphicx}
\usepackage{color}
\usepackage{tikz}
\usepackage{colortbl}
\usepackage{multirow,makecell}
\usepackage{dsfont}
\usepackage{proof}
\usepackage{mathpartir}
\usepackage{wrapfig}
\usepackage{setspace}
\usepackage{stmaryrd}
\usepackage[linesnumbered,ruled,vlined]{algorithm2e}
\usepackage{booktabs,longtable}
\usepackage[dvipsnames,table]{xcolor}
\usepackage{placeins}
\usepackage{calc}
\usepackage{mdframed}
\usepackage{needspace}
\usepackage{cleveref}
\usepackage{pifont}

\DeclareMathAlphabet{\pazocal}{OMS}{zplm}{m}{n}

\newcommand{\ketbra}[2]{|{#1}\rangle\langle{#2}|}
\newcommand{\braketU}[3]{\langle{#1}|{#2}|{#3}\rangle}

\newcommand{\strategyhead}[2]{%
  \shortstack[c]{\strut\textsf{#1}\\[-1ex]\strut {\footnotesize Strategy~\ref{#2}}}%
}

\definecolor{mygreen}{rgb}{0.913,0.9535,0.8925}

\newtheorem{theorem}{Theorem}[section]

\AddToHook{env/assumption/begin}{\crefalias{theorem}{assumption}}

\makeatletter
\theoremstyle{acmdefinition}

\undef\definition
\undef\enddefinition
\newtheorem{definition}[theorem]{Definition}

\undef\strategy
\undef\endstrategy
\newtheorem{strategy}[theorem]{Strategy}

\theoremstyle{acmplain}
\makeatother

\makeatletter
\def\@acmplainindent{0pt}
\def\@proofindent{}
\makeatother

%% file: main/intro.tex
\setlength{\textfloatsep}{10pt}
\setlength{\intextsep}{10pt}

\section{Introduction}
\label{sec:intro}
Quantum computation offers the promise of asymptotic advantages for problems such as factoring, search, and simulation by manipulating \emph{qubits} -- or quantum bits -- whose state is a \emph{superposition} of zero and one.
Recent hardware advances~\cite{google2025quantum,bluvstein2023} have made increasingly realistic the prospect of running complex quantum programs needed by practical applications.
A barrier to the development of complex quantum programs, however, is that they are difficult to test and
debug: inspecting an intermediate state of a quantum computation requires performing a \emph{measurement}, a physical operation that in general can perturb the program's state and thereby corrupt its output.

To address this challenge, researchers have proposed \emph{quantum runtime assertions}, schemes to instrument the execution of a quantum program using measurements but minimize any effect on the output.
A growing body of work has studied the formal expressiveness of assertion
predicates~\cite{statisticalAssertions2019,ApproximateAssertion2021,ProjectionBased2020},
circuit constructions that check individual predicates~\cite{RuntimeAssertionCircuts2020,ApproximateAssertion2021,ProjectionBased2020}, tailoring of predicates to program structure~\cite{moveAssertionsAround}, and optimization of checks on hardware~\cite{assertionSlicing}.
But an essential question remains less explored: \emph{how, and how efficiently, can we check multiple assertions in a program?}

\paragraph{Multiple Assertion Checking}
To check a classical program with many assertions, one evaluates each on the instantaneous program state, reads out the pass/fail outcome immediately at negligible cost, and aborts upon failure.
But in a quantum program, reading out an assertion outcome requires measurement. Performing it immediately would mean measuring \emph{mid-circuit} before the program terminates --- incurring significant latency, noise, and reset time in hardware~\cite{mid-circuitLatency2021,mid-circuitNoise2021,mid-circuitNoise2025,reset2025}.

Prior work makes disparate assumptions on this issue.
\textsf{Proq}~\cite{ProjectionBased2020} assumes mid-circuit measurement is broadly available, and uses projective measurements at runtime to abort at the earliest failure.
Other proposals for quantum assertions~\cite{statisticalAssertions2019,RuntimeAssertionCircuts2020,ApproximateAssertion2021,assertionSlicing} target a \emph{terminal-measurement} model that performs measurement only at the end of execution.
This more restrictive model, commonly known as the static quantum circuit model~\cite{ibmStaticAndDynamic}, corresponds to a broad range of current devices where mid-circuit measurement is costly, error-prone, and only sometimes available~\cite{IBM-DynamicCircuits2025,ibmClassicalFeedforward2025}.

\paragraph{Running Example.}
\begin{wrapfigure}[15]{r}{0.3\textwidth}
\vspace*{-1ex}
\includegraphics[width=0.3\textwidth]{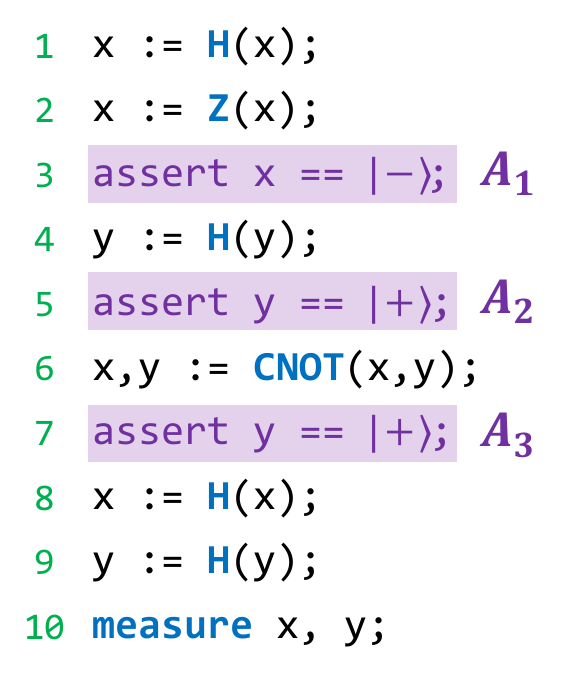}
\vspace*{-5.5ex}
\caption{A quantum program with three assertions. Both qubits \texttt{x} and \texttt{y} are initialized to zero.}
\label{fig:motiv-example}
\end{wrapfigure}

To illustrate checking of multiple assertions in this terminal-measurement model, we present the program in Fig.~\ref{fig:motiv-example}, which performs a series of \emph{quantum logic gates} on two qubits \verb|x| and \verb|y| and then measures both to produce its output.
Assuming the program is bug-free, the final output is $\texttt{x}=1$ and $\texttt{y}=0$.
To aid debugging, the program uses runtime assertions $A_1$, $A_2$, and $A_3$ to check that \texttt{x} and \texttt{y} are in prescribed states at specific points.

One can check these assertions even without the ability to perform mid-circuit measurement.
Instead, one inserts a \emph{checker circuit} that computes an assertion outcome into an \emph{ancilla} (helper) qubit, flipping it to 1 on failure and leaving it as 0 otherwise.
For example, the checker for $A_1$ can be implemented as two Hadamard gates and one CNOT~\cite{RuntimeAssertionCircuts2020,ApproximateAssertion2021}.
The ancilla qubit persists until the end of the program, where it is measured -- together with the output qubits \texttt{x} and \texttt{y} -- to reveal the corresponding assertion outcome.

Even for this small program, multiple checking strategies are possible.
We could (a) use three ancillas, one per assertion, and gather all three outcomes in one program execution; or (b) use one ancilla and run the program three times, with only one assertion enabled per execution.
These two strategies suggest a necessary trade-off between time and space.
For a program with $n$ assertions, these \emph{single-round/$n$-ancillas} and \emph{single-ancilla/$n$-rounds} strategies both satisfy $T \,{\cdot}\, S = n$, where $T$ is the number of program executions ending in measurement and $S$ the number of ancilla qubits used per execution.
This trade-off raises a natural question: \emph{can we do asymptotically better?}

\paragraph{{Assertion-Checking Tasks}}
A key insight is that the answer depends on the desired \emph{task}:
\begin{itemize}[topsep=2pt,itemsep=1pt]
\item \textsc{ListAll}: output the list of all failing assertions, if any exist.
\item \textsc{ExistFail}: output whether there exists a failing assertion.
\item \textsc{FirstFail}: output the index of the earliest failing assertion, if one exists.
\end{itemize}

Prior work implicitly targets \textsc{ListAll}, using either $n$ rounds~\cite{assertionSlicing,statisticalAssertions2019} or $n$ ancillas~\cite{RuntimeAssertionCircuts2020,ApproximateAssertion2021}, the two extremes in our running example.
Even prior work~\cite{ProjectionBased2020} that assumes the availability of mid-circuit measurement uses $n$ measurements in the general case, which exhibits the linear scaling.

\paragraph{Key Results}
We present two surprising results.
First, \textsc{ExistFail} and \textsc{FirstFail} are asymptotically cheaper than \textsc{ListAll}, even if we can measure only at the end.
We present strategies with logarithmic complexity solving these two tasks for arbitrary programs, whereas \textsc{ListAll} requires linear complexity.
\textsc{ExistFail} and \textsc{FirstFail} need not separately preserve every assertion outcome, and indeed, \textsc{FirstFail} coincides with the earliest failure revealed by mid-circuit measurement.
But remarkably, this information can be efficiently recovered by terminal measurement alone.

Second, strategies can tunably trade time for space.
For \textsc{ExistFail} and \textsc{FirstFail}, we prove upper and lower bounds of $S = \Theta(\log (1 + n / T))$ in the \emph{disjoint multi-round} setting where each program execution checks a disjoint set of assertions by terminal measurement.
The disjoint setting subsumes the $T=1,S=n$ and $S=1,T=n$ extremes above and enables interpolation between them.
Interestingly, if we further permit assertions to overlap across executions (the \emph{general multi-round} setting), \textsc{ExistFail}'s cost drops further, whereas \textsc{FirstFail} remains comparatively costly.

Table~\ref{tab:complexity} summarizes the complexity landscape we derive across the three tasks in the terminal-measurement model.
Our bounds are two-sided: in nearly all regimes, we provide constructive and matching upper and worst-case lower bounds that give asymptotically tight characterizations.

\paragraph{Classical Comparison}
Our results for \textsc{ExistFail} and \textsc{FirstFail} reveal where classical intuition fails to transfer to the quantum world:
one requires more cost than expected, the other less.

The former is \textsc{ExistFail}: a classical computer solves it in $O(1)$ space, whereas we prove that the quantum setting requires logarithmic space. The reason is \emph{reversibility}, a fundamental requirement of unitary quantum computation implying that one cannot simply collapse all possible patterns of assertion outcomes into a constant-size flag. Our upper bound uses a reversible counter of failures, while the matching lower bound introduces proof techniques that hinge on reversibility.

The latter is \textsc{FirstFail}.
Classically, we could use a single $O(\log n)$-size register to hold the index of the first failing assertion, updated only when no failure has yet been seen.
This conditional update performs a many-to-one mapping, which is not reversible, and a naive attempt to make this update reversible by writing down additional information blows up the space to $O(n)$.
By contrast, we give a novel construction that solves \textsc{FirstFail} in logarithmic space with no such blowup.

\begin{table}[t]
\centering
\caption{Summary of time--space complexity results. Each row depicts one assertion-checking task, and each column depicts how assertions may be distributed across program executions: the first permits only one execution, the second permits multiple executions but checks each assertion in at most one of them, and the third permits any assertion to be checked on any execution. Each cell gives upper and lower bounds for the number of ancilla qubits $S$ in terms of the numbers of assertions $n$ and program executions $T$.\vspace{-1.5ex}}
\small
\renewcommand{\arraystretch}{0}
\setlength{\tabcolsep}{4pt}
\begin{tabular}{
    >{\centering\arraybackslash}m{1.6cm}
    >{\centering\arraybackslash}m{2.8cm}
    >{\centering\arraybackslash}m{3.8cm}
    >{\centering\arraybackslash}m{4.0cm}
}
\specialrule{0.75pt}{0pt}{6pt}
{Task} & {Single-Round ($T = 1$)} & {Disjoint Multi-Round ($T \ge 1$)} & {General Multi-Round ($T \ge 1$)} \\
\specialrule{0.5pt}{4pt}{4pt}
\makecell{ \specialrule{0pt}{1pt}{1pt} \textsc{ExistFail}}
& \makecell{$S = \Theta(\log n)$\\ {\footnotesize Theorem~\ref{thm:single-round-exist-tightness}}}
& \makecell{$S = \Theta(\log (1 + \frac{n}{T}))$\\ {\footnotesize Theorem~\ref{thm:multi-round-disjoint-exist-complexity}}}
& \makecell{$S = \Theta\big(\frac{\log n}{T}\big) \ \ \text{\footnotesize $\forall T \,{\in}\, O\big(\frac{\log n}{\log \log n}\big)$}$\\ {\footnotesize Theorem~\ref{thm:multi-round-general-exist-loglog-tight}}} \\
\addlinespace[3pt]
\makecell{ \specialrule{0pt}{1pt}{1pt} \textsc{FirstFail}}
& \makecell{$S = \Theta(\log n)$\\ {\footnotesize Theorem~\ref{thm:single-round-first-tightness}}}
& \makecell{$S = \Theta(\log (1 + \frac{n}{T}))$\\ {\footnotesize Theorem~\ref{thm:multi-round-disjoint-first-complexity}}}
& \makecell{$S = \Theta(\log(1 + \frac{n}{T}))$\\ {\footnotesize Theorem~\ref{thm:multi-round-general-first-complexity}}} \\
\addlinespace[4pt]
\makecell{ \specialrule{0pt}{1.5pt}{1.5pt} \textsc{ListAll} }
& \makecell{$S = \Theta(n)$\\ {\footnotesize Corollary~\ref{cor:single-round-list-complexity}}}
& \makecell{$S = \Theta(\frac{n}{T})$\\ {\footnotesize Theorem~\ref{thm:multi-round-list-complexity}}}
& \makecell{$S = \Theta(\frac{n}{T})$\\ {\footnotesize Theorem~\ref{thm:multi-round-list-complexity}}} \\
\addlinespace[3pt]
\specialrule{0.75pt}{0pt}{0pt}
\end{tabular}\label{tab:complexity}
\end{table}

\paragraph{Cost Context}
\vspace{-0.5ex}
In this work, we study the cost of coordinating multiple assertions in an arbitrary program --- a cost orthogonal to checking individual assertions~\citep{statisticalAssertions2019,RuntimeAssertionCircuts2020,ApproximateAssertion2021,ProjectionBased2020} or specializing assertions to program structure~\citep{moveAssertionsAround}.
In the near term, our results enable testing of quantum programs on hardware for which mid-circuit measurement is costly or restricted~\cite{IBM-DynamicCircuits2025,ibmClassicalFeedforward2025}, and where compilers actively seek to minimize it~\cite{ReducingMidCircuitMeasurement2025,ReducingMidCircuitMeasurements2024}.
As mid-circuit measurement becomes more mature, the number of measurements becomes a primary resource to account for alongside the number of ancillas.

Our analysis is future-proofed for this progress.
As we prove, every single-round lower bound we derive for the number of ancillas in the terminal-measurement setting also lower-bounds the number of measurements plus the number of ancilla qubits in the mid-circuit measurement setting.
Moreover, each of our disjoint multi-round strategies is also a constructive strategy to trade ancillas for measurements in quantum assertion schemes~\cite{ProjectionBased2020} that use mid-circuit measurement.

\paragraph{Contributions}
\vspace{-0.5ex}
To summarize, in this work, we present the following contributions:%
{
\clubpenalty=100
\widowpenalty=100
\begin{itemize}[leftmargin=12pt, topsep=2pt, itemsep=2pt]
\item In Sec.~\ref{sec:problem-setting}, we introduce a general framework to analyze the time--space complexity of checking multiple assertions in a quantum program --- a cost that is orthogonal to the checking circuits for individual assertion predicates identified by prior work. We formalize how \emph{strategies} use ancillas and executions ending in measurement to coordinate assertion checking for different \emph{tasks}.

\item In Sec.~\ref{sec:shared}, we introduce a proof technique to tackle complexity lower bounds in the analysis of quantum programs.
The technique establishes lower bounds for predicate-checking tasks over arbitrary programs by relating them to the difficulty of distinguishing different patterns among fixed-length bit strings, which we can then analyze with automata-theoretic tools.

\item In Secs.~\ref{sec:exist}--\ref{sec:listAll}, we establish the complexity results in Table~\ref{tab:complexity} by adapting the technique of Sec.~\ref{sec:shared} to each assertion-checking task.
Our constructive strategies make no assumptions on the structure of programs and assertions, and they realize a tunable time--space trade-off for each task.

\item
In Sec.~\ref{sec:landscape}, we present a case study of assertion checking for Grover's quantum search algorithm, illustrating how the different checking strategies can be instantiated on a concrete program.
We provide a Qiskit implementation that reproduces this study in simulation, confirming that the concrete costs of strategies are consistent with our asymptotic predictions.
\end{itemize}}

Our work extends to the quantum setting a tradition of research that characterizes the complexity of foundational program analysis tasks~\cite{complexityOfPredictingAtomicityViolations,testingMessagePassingComplexity,optimalConsistencyChecking}, here shaped by constraints inherent to quantum computation: reversibility and the destructiveness of measurement.
Moving forward, as quantum applications demand programs that are increasingly complex yet robust, this work expands the programmer's arsenal of tools to test and debug programs, offers rigorous complexity guarantees, and reveals fundamental differences between the quantum and classical worlds.

%% file: main/prelim.tex
\section{Background on Quantum Computation}
\label{sec:background}
In this section, we provide the key concepts in quantum computation that are relevant to this work. For a comprehensive reference, please refer to the textbook of \citet{QCQI}.

\paragraph{Qubits and Quantum States.}
\vspace{-0.5ex}
The state of a single \emph{qubit} is a unit vector in a two-dimensional complex Hilbert space $\mathcal{H}_2$, conventionally written as $\lambda_0 \ket{0} + \lambda_1\ket{1}$ where $\ket{0}$ and $\ket{1}$ are the \emph{computational basis states} and $\lambda_0,\lambda_1 \in \mathbb{C}$ are \emph{amplitudes} satisfying $|\lambda_0|^2 + |\lambda_1|^2 = 1$. The basis states $\ket{0}$ and $\ket{1}$ correspond to classical bit values. When both amplitudes are nonzero, the qubit is in a \emph{superposition}; common examples of superpositions are {$\ket{+} = \tfrac{1}{\sqrt{2}}(\ket{0}+\ket{1})$} and {$\ket{-} = \tfrac{1}{\sqrt{2}}(\ket{0}-\ket{1})$}.

\vspace{-0.25ex}
More generally, an $n$-qubit system has a state space given by the tensor product $\mathcal{H}_{2^n} \triangleq \bigotimes_{i=1}^{n} \mathcal{H}_2$. Its computational basis is $\{\ket{x} | \, x \,{\in}\, \{0,1\}^n\}$, so any $n$-qubit \emph{pure state} can be expressed as a superposition of $n$-bit strings, i.e., $\ket{\psi}=\sum_{x\in \{0,1\}^n}\lambda_x\ket{x}$ with $\sum_x |\lambda_x|^2=1$. For example, \smash{$\tfrac{1}{\sqrt{2}}(\ket{00}+\ket{11})$} is a two-qubit state in superposition of the basis states $\ket{00}$ and $\ket{11}$. As is customary in quantum computation, we use $\ket{xy}$, $\ket{x}\ket{y}$, and $\ket{x,y}$ interchangeably to denote the tensor product $\ket{x}\otimes\ket{y}$.

\paragraph{Quantum Gates and Unitary Operators.}
\vspace{-0.5ex}
A \emph{quantum logic gate} manipulates the bit strings and their amplitudes within a quantum state without collapsing the state from superposition.
The semantics of a quantum gate is a unitary operator $U$, which is linear and norm-preserving. In particular, for any state \smash{$\ket{\psi} \,{=}\, \sum_{x\in \{0,1\}^n}\lambda_x\ket{x}$}, applying $U$ yields the new state \smash{$U\ket{\psi} \,{=}\, \sum_{x\in \{0,1\}^n} \lambda_x ~U\ket{x}$} without collapsing it from superposition.
Notably, any unitary operator is invertible, and $U^{-1}=U^{\dagger}$, with $U^{\dagger}$ its conjugate transpose. Hence, any quantum gate must be \emph{reversible}.
Examples include:
\begin{itemize}[topsep=2pt]
\item $X$ -- the bit-flip (NOT) gate, which maps $\ket{0}$ to $\ket{1}$ and maps $\ket{1}$ to $\ket{0}$.
\item $Z$ -- the phase-flip gate, which maps $\ket{1}$ to $-\ket{1}$ and leaves $\ket{0}$ unchanged.
\item $H$ -- the Hadamard gate, which maps $\ket{0}$ to $\ket{+}$ and maps $\ket{1}$ to $\ket{-}$.
\end{itemize}

\vspace{-0.25ex}
Quantum gates can be \emph{controlled} by other qubits to form larger unitaries. For example, the two-qubit CNOT gate, with the first qubit as control and the second as target, acts as $\ket{0, x} \mapsto \ket{0, x}$ and $\ket{1, x} \mapsto \ket{1, \text{NOT}~x}$ for $x \,{\in}\, \{0,1\}$; equivalently, it can be written as $\ket{y, x} \mapsto \ket{y, x\oplus y}$. More generally, multi-controlled-$X$ gates apply $X$ conditioned on multiple control qubits being $\ket{1}$.

\paragraph{Measurement.}
\vspace{-0.5ex}
A unitary operator evolves a quantum state while preserving its superposition. By contrast, \emph{measurement} extracts classical information from a quantum state and collapses the superposition. For example, measuring a single-qubit state $\lambda_0 \ket{0} + \lambda_1\ket{1}$ in the computational basis yields the classical outcome $0$ with probability $|\lambda_0|^2$ and $1$ with probability $|\lambda_1|^2$. Unless otherwise specified, all measurements in this paper are performed in the computational basis.

\paragraph{Projector.}
\vspace{-0.5ex}
{
\clubpenalty=0
\widowpenalty=0
Let $X$ be a closed subspace of $\mathcal{H}$. The projector onto $X$, denoted $P_X$, is a linear operator on $\mathcal{H}$ satisfying $P_X^2 \,{=}\, P_X$ and $P_X^{\dagger} \,{=}\, P_X$, with image $\mathrm{Im}(P_X)=X$. For any $|\psi\rangle\in\mathcal{H}$, we can decompose $|\psi\rangle=|\psi_X\rangle+|\psi_{\perp}\rangle$ where $|\psi_X\rangle\in X$ and $|\psi_{\perp}\rangle\in X^{\perp}$. Applying the projector thus yields $P_X|\psi\rangle=|\psi_X\rangle$; in particular, $|\psi\rangle$ lies in $X$ if and only if $P_X|\psi\rangle=|\psi\rangle$.
}

Projectors can be written in terms of outer products.
We write $\langle\psi|$ for the conjugate transpose of $|\psi\rangle$, so that $\langle\psi|\phi\rangle$ denotes the inner product of two states, and the \emph{outer product} $|\psi\rangle\langle\phi|$ denotes the linear operator that maps any state $|\xi\rangle$ to $\langle\phi|\xi\rangle\,|\psi\rangle$.
For a unit vector $|\psi\rangle$, the outer product $|\psi\rangle\langle\psi|$ is exactly the projector onto the one-dimensional subspace spanned by $|\psi\rangle$; more generally, the projector onto a subspace with orthonormal basis $\{|\psi_j\rangle\}_j$ is $\sum_j |\psi_j\rangle\langle\psi_j|$.

%% file: main/formal.tex
\section{Complexity Framework for Quantum Assertion Checking}
\label{sec:problem-setting}

In this section, we formalize a framework to analyze the time--space complexity of checking multiple runtime assertions in a quantum program.
First, we define (Sec.~\ref{sec:object}) quantum programs that use multiple assertions.
We then define tasks (Sec.~\ref{sec:tasks}) and strategies (Sec.~\ref{sec:strategy}) for assertion checking in the terminal-measurement model, together with their cost metrics.
We relate (Sec.~\ref{sec:midcircuit-bridge}) this model to mid-circuit measurements.
Finally, we lift (Sec.~\ref{sec:prob-tasks}) the definitions to the probabilistic setting.

\input{main/formal-object}
\input{main/formal-task}

\input{main/formal-strategy}
\input{main/formal-mid}
\input{main/formal-prob}

%% file: main/formal-object.tex
\subsection{Quantum Programs with Multiple Assertions}
\label{sec:object}

In general, a quantum program that contains multiple assertions is a sequence of unitary operations, interleaved with $n$ assertions.
Each assertion tests whether specific program qubits are in a prescribed state, and the program ends with a terminal measurement of the output state.

\begin{definition}[Program with Assertions]\label{def:prog}
We define a program with assertions as a tuple
\[
\pazocal{Q} \ = \ \big(\textsf{prog};\,|\varphi_0\rangle;\,U_0,\, A_1,\, U_1,\, A_2,\, \dots,\, A_n,\, U_n;\ \pazocal{M}\big),\quad \text{where}
\]
\begin{itemize}[topsep=1pt,itemsep=2pt,leftmargin=20pt]

\item[$-$] $\textsf{prog}$ is a multi-qubit quantum register that takes on the initial state $|\varphi_0\rangle$.

\item[$-$] Each of $U_0, U_1, \ldots, U_n$ is an arbitrary quantum logic gate on the qubits of \textsf{prog}; each may itself be a sequence of logic gates, since any sequence of unitary operations composes into a single unitary operation, and may also be empty (i.e., when two assertions are adjacent).

\item[$-$] $\pazocal{M}$ is a terminal measurement of (a subset of) \textsf{prog}, which produces the program output.

\item[$-$] Each $A_i$ is an assertion $\textsf{assert}(\bar q_i; P_i)$, whose predicate is given in projection-based form~\cite{ProjectionBased2020} by an ordered list $\bar q_i$ of checked program qubits and a projector $P_i$: the predicate is satisfied when the joint state $|\phi\rangle$ of the qubits in $\bar q_i$ lies entirely in $\mathrm{Im}(P_i)$, i.e., when $P_i|\phi\rangle \,{=}\, |\phi\rangle$.
For instance, $A_1$ and $A_2$ in Fig.~\ref{fig:motiv-example} can be expressed as $\textsf{assert}(x; |{-}\rangle\langle{-}|)$ and $\textsf{assert}(y; |{+}\rangle\langle{+}|)$.
This form uniformly expresses the predicates of prior assertion schemes~\cite{statisticalAssertions2019,RuntimeAssertionCircuts2020,ProjectionBased2020,ApproximateAssertion2021}.
The semantics of checking $A_i$ is captured by a corresponding \emph{checker unitary} $C_i$, which we formalize shortly.
\end{itemize}

\vspace{0.25ex}

We call the \emph{bare execution} of $\pazocal{Q}$ the execution obtained by removing all assertions. The bare execution applies the unitaries  $U_\iota$ in order and then performs $\pazocal{M}$.
We write $|\varphi_i\rangle \triangleq U_{i-1}\cdots U_0|\varphi_0\rangle$ for the state that is to be checked by the $i$-th assertion $A_i$, i.e., the bare-execution state after $U_{i-1}$.
\end{definition}

This definition reflects a computational model in which the program evolves unitarily and all measurements occur at the end, i.e., a \emph{static} circuit, in the terminology of \citet{ibmStaticAndDynamic}, as opposed to a dynamic one with mid-circuit measurement. How the program is written -- in any particular syntax, or given in any gate set -- is immaterial, provided it can be unrolled into a finite unitary sequence ending with measurement, interleaved with the assertions to be checked.

\paragraph{Checker Unitary.}
To capture the semantics of $\textsf{assert}(\bar q_i; P_i)$, we introduce the abstraction of a \emph{checker unitary} $C_i$.
Conceptually, $C_i$ records the runtime outcome of $A_i$ into an ancilla, flipping it exactly when the checked state lies in the failing subspace. Formally, acting on $\mathcal{H}_{\bar q_i} \otimes \mathcal{H}_2$, where $\mathcal{H}_{\bar q_i}$ is the joint state space of the checked qubits $\bar q_i$ and $\mathcal{H}_2$ that of the ancilla,
\begin{equation}
\label{eq:checker}
C_i ~\triangleq~ P_i \otimes I + (I - P_i) \otimes X,
\vspace{-0.25ex}
\end{equation}
such that for any state\footnote{For simplicity, as a matter of notation throughout the paper, an operator defined on a subset of qubits is implicitly extended by the identity when applied to a larger register. For example, applied to the full program register, $P_i$ acts as $P_i \otimes I_{\scriptscriptstyle \textsf{prog} \setminus \bar q_i}$. In particular, by linearity, the action~\eqref{eq:checker-action} holds verbatim when $\ket{\phi}$ is replaced by any state of the full program register, including states entangled across \smash{$\bar q_i$} and the remaining qubits.} $\ket{\phi}$ on the qubits $\bar q_i$ and any ancilla basis state $\ket{c}$,
\begin{equation}
\label{eq:checker-action}
C_i\big(\ket{\phi}_{\bar q_i} \otimes \ket{c}\big) ~ = ~\underbrace{\smash{P_i\ket{\phi}_{\bar q_i} \otimes \ket{c}}}_{\text{passing branch}} ~ + ~ \underbrace{\smash{(I - P_i)\ket{\phi}_{\bar q_i} \otimes \ket{c \oplus 1}}}_{\text{failing branch}}.
\end{equation}

This checker unitary thus routes the outcome into the ancilla \emph{coherently}, without measurement: the $\ket{c}$ branch carries the component of the state $\ket{\phi}$ that passes the assertion, and the $\ket{c \oplus 1}$ branch carries the component that fails.
A key property shared by non-destructive assertion schemes~\cite{ProjectionBased2020,RuntimeAssertionCircuts2020,ApproximateAssertion2021}, and captured by our checker unitary abstraction, is that checking an assertion leaves the program state unchanged whenever the assertion predicate is satisfied. If $\ket{\phi} {\in}\, \textrm{Im}(P_i)$, or in other words $P_i\ket{\phi} {=}\, \ket{\phi}$, then the failing branch vanishes and the passing branch leaves $\ket{\phi}_{\bar q_i} \otimes \ket{c}$.

Prior work offers constructions to realize $C_i$ as a quantum circuit for many classes of predicates $P_i$.
In App.~\ref{app:A}, we show that existing circuit constructions for quantum assertion predicates~\cite{ProjectionBased2020,RuntimeAssertionCircuts2020,ApproximateAssertion2021} are either already of the form in~\eqref{eq:checker}, or can be brought to it with minor modification.

By taking the abstraction of a checker unitary as primitive, we decouple the process and cost of realizing each individual assertion, which is the focus of prior research, from the \emph{strategy} that orthogonally governs how checkers are deployed: how many ancillas are used, how the $C_i$ are distributed and coordinated across rounds, and how to compress or aggregate recorded outcomes before the terminal readout.
All strategies and bounds derived in the rest of the paper hold regardless of the choices of assertion predicates $P_i$, checker circuits $C_i$, and program instructions $U_i$.

%% file: main/formal-task.tex
\subsection{Assertion Checking Tasks}
\label{sec:tasks}
Using the above, we formalize three fundamental assertion checking tasks that gather different levels of information over assertion failure patterns, i.e., length-$n$ bit vectors recording the definite pass/fail outcome of each assertion.
In Sec.~\ref{sec:prob-tasks}, we lift these definitions to the probabilistic case.

\begin{definition}[Deterministic Failure Pattern]
\label{def:classical}
A program with assertions $\pazocal{Q}$ has a \emph{deterministic failure pattern} $F = (F_1, \ldots, F_n) \in \{0,1\}^n$ if, for every $i \in [n]$ where $[n]$ abbreviates $\{1,\ldots,n\}$, the bare execution state $\ket{\varphi_i}$ (Def.~\ref{def:prog}) checked by the $i$-th program assertion lies entirely in the passing subspace
$\mathrm{Im}(P_i)$ when $F_i = 0$, or in its orthogonal complement
$\mathrm{Im}(P_i)^{\perp}$ when $F_i = 1$.
\end{definition}

\vspace{-0.25ex}
Given a program-with-assertions having deterministic failure pattern $F$, the checking tasks are:
\begin{itemize}
\item \textsc{ListAll}: output the set $\{\,i\in[n]\mid F_i=1\,\}$, or equivalently, the full failure pattern $F$.
\item \textsc{ExistFail}: decide whether $\exists i\in[n]$ such that $F_i=1$.
\item \textsc{FirstFail}: output $\min\{\,i\in[n]\mid F_i=1\,\}$, or $\bot$ if no such index exists.
\end{itemize}

We then say that a strategy \emph{solves} one of the tasks above if it answers correctly, with certainty, given any program with assertions that has a deterministic failure pattern.

%% file: main/formal-strategy.tex
\subsection{Assertion Checking Strategies}
\label{sec:strategy}

Next, we define the concept of an \emph{assertion-checking strategy}, which consists of one or more \emph{rounds}, each one being an instrumented execution of the program under study. In each round, the strategy may allocate ancillas, insert checker unitaries to route assertion outcomes to ancillas, and apply ancilla-processing logic to aggregate information.
Finally, it applies a decoder to the terminal measurement outcomes on the ancillas collected across the rounds, yielding an answer to the target task (Sec.~\ref{sec:tasks}).
This conceptualization of strategies is intended to capture the debugging workflows available to a programmer, and to apply to any instance of a program with assertions.

\begin{definition}[Instrumentation]\label{def:instr}
An \emph{instrumentation} $\pazocal{I}$ is a program transformation that specifies:
\begin{enumerate}[topsep=1pt,itemsep=1pt,leftmargin=20pt]
\item[(1)] the size of an ancilla register $\textsf{anc}$, denoted as $m$, with $\textsf{anc}$ initialized as $|0^m\rangle$;
\item[(2)] a set $\mathcal{E}\subseteq [n]$ of enabled assertion indices; assertions outside $\mathcal{E}$ will be skipped;
\item[(3)] for each assertion $i \,{\in}\, \mathcal{E}$, an \emph{assertion-handling block} $B_i = G_\ell \cdots G_1$ (for some $\ell \ge 0$); that is, a sequence of unitary operations in which each $G_j$ either (i) applies the checker unitary $C_i$ of $A_i$ (cf.\ \eqref{eq:checker}) with a designated ancilla qubit $a \,{\in}\, \textsf{anc}$, denoted $C_{i\to a}$ for short, or (ii) applies an arbitrary unitary $V$ on $\textsf{anc}$ alone to coordinate or process the recorded assertion outcomes.
\end{enumerate}
The transformed program $\pazocal{I}(\pazocal{Q})$ is obtained from $\pazocal{Q}$ by replacing, within its interleaved sequence ($U_0, A_1, U_1, \ldots, A_n, U_n$) each assertion $A_i$ with $B_i$ if $i \,{\in}\, \mathcal{E}$ and with the identity if $i \,{\notin}\, \mathcal{E}$.
$\pazocal{I}(\pazocal{Q})$ is an executable sequence of unitary operations, and references the input program $\pazocal{Q}$ only through the count and positions of its assertions.
\end{definition}

\begin{definition}[Single-Round Strategy]\label{def:stra}
\vspace{-0.25ex}
A \emph{single-round strategy} consists of one instrumentation $\pazocal{I}$ with a classical decoder \textsf{Dec}. On any input program-with-assertions $\pazocal{Q}$, the strategy executes $\pazocal{I}(\pazocal{Q})$ from the initial state $|\varphi_0\rangle_{\textsf{prog}}\otimes |0^{m}\rangle_{\textsf{anc}}$, and measures (a subset of) the ancilla qubits at termination. \textsf{Dec} then maps the measurement outcome, a bit string $R$, to an answer for the target task.
\end{definition}

\paragraph{Checking Contract}
The definitions above imply usual semantic separations between programming and testing.
First, a strategy is \emph{uniform}: it is fixed before the program under test is provided, referencing it only through the assertion count and positions, and the decoder reads the ancilla outcomes alone. A strategy does not inspect the program text or the assertion predicates, nor exploit the program output, whose correlation with the assertion failures is specific to each program. A procedure that answers by inspecting the source would not be performing runtime checking.

Second, a strategy \emph{observes without rewriting}: information flows from \textsf{prog} to the ancillas only through the checker unitaries of the declared assertions $A_i$, and all further processing acts on the state of \textsf{anc} alone.
This separation is required to ensure that the checking of assertions cannot cheat by moving quantum information out of the program or corrupt the program state.
In the following lemma, we prove this property holds for any instrumentation under our definition.

\begin{lemma}[Program-State Invariance]
\label{lem:invariance}
For any instrumentation $\pazocal{I}$ and any program with assertions $\pazocal{Q}$ that has a deterministic failure pattern, the execution of $\pazocal{I}(\pazocal{Q})$ evolves the program register in the same way as the bare execution of $\pazocal{Q}$ and leaves it unentangled from the ancillas.
\end{lemma}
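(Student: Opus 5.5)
The plan is to prove a stronger invariant by induction along the interleaved sequence $U_0, B_1\text{ (or identity)}, U_1, \ldots, U_n$ of $\pazocal{I}(\pazocal{Q})$. The invariant is that, just before the slot of assertion $A_i$, the joint state has the product form
\[
|\varphi_i\rangle_{\textsf{prog}} \otimes |\alpha_{i}\rangle_{\textsf{anc}}
\]
for some ancilla state $|\alpha_{i}\rangle$, where $|\varphi_i\rangle$ is the bare-execution state of Def.~\ref{def:prog}. For the base case we use the initial state $|\varphi_0\rangle\otimes|0^m\rangle$ and the fact that $U_0$ acts on \textsf{prog} alone. This gives $|\varphi_1\rangle\otimes|0^m\rangle$. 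In the same way, each $U_i$ maps $|\varphi_i\rangle\otimes|\alpha\rangle$ to $|\varphi_{i+1}\rangle\otimes|\alpha\rangle$. The final state is therefore $|\varphi_{n+1}\rangle\otimes|\alpha_{\mathrm{fin}}\rangle$, with $|\varphi_{n+1}\rangle \triangleq U_n|\varphi_n\rangle$, and this is exactly the bare-execution state. It is unentangled from \textsf{anc}.

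The inductive step concerns the slot of $A_i$. If $i\notin\mathcal{E}$, the slot is the identity and there is nothing to prove. If $i\in\mathcal{E}$, we apply the gates $G_1,\ldots,G_\ell$ of $B_i$ one at a time and maintain the sub-invariant that \textsf{prog} stays in $|\varphi_i\rangle$ in product with some ancilla state. There are two kinds of gate. A gate of type (ii), an arbitrary $V$ on \textsf{anc}, maps $|\varphi_i\rangle\otimes|\beta\rangle$ to $|\varphi_i\rangle\otimes V|\beta\rangle$. A gate of type (i), $C_{i\to a}$, is handled using the deterministic failure pattern (Def.~\ref{def:classical}). Expand the ancilla state in the computational basis as $|\beta\rangle=\sum_{c}\beta_c|c\rangle$, where $c$ runs over bit strings of \textsf{anc}, and apply \eqref{eq:checker-action} term by term by linearity, using the footnoted extension to the full program register. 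We split into two cases.

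If $F_i=0$, then $P_i|\varphi_i\rangle=|\varphi_i\rangle$ and $(I-P_i)|\varphi_i\rangle=0$. Hence $C_{i\to a}$ acts as the identity on $|\varphi_i\rangle\otimes|\beta\rangle$.

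If $F_i=1$, then $|\varphi_i\rangle\in\mathrm{Im}(P_i)^\perp$, so $P_i|\varphi_i\rangle=0$ and $(I-P_i)|\varphi_i\rangle=|\varphi_i\rangle$. Hence $C_{i\to a}$ maps the state to $|\varphi_i\rangle\otimes X_a|\beta\rangle$.

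In both cases the result is again a product state with \textsf{prog} in $|\varphi_i\rangle$. Since every gate of $B_i$ is one of these two kinds by Def.~\ref{def:instr}, the slot of $A_i$ ends in the form $|\varphi_i\rangle\otimes|\alpha_i'\rangle$, and $U_i$ then restores the outer invariant at index $i+1$.

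The main subtlety is the type (i) step, specifically the interpretation of $P_i$ acting on the full register. Here $P_i$ means $P_i\otimes I_{\textsf{prog}\setminus\bar q_i}$. Def.~\ref{def:classical} asks that the state of the checked qubits lie in $\mathrm{Im}(P_i)$ or its complement, and this must be read as the condition on the full state $|\varphi_i\rangle$ under this extension. I will state this reading explicitly, since for a mixed reduced state it is exactly what makes the branch in \eqref{eq:checker-action} vanish. I will also note that a block may invoke $C_{i\to a}$ several times and on different ancillas. This causes no problem, because the argument is gate-by-gate and never uses anything about the ancilla state $|\beta\rangle$, which may be arbitrary and entangled within \textsf{anc}.
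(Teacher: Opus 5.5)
Your proposal is correct and follows essentially the same route as the paper: an induction over the steps of $\pazocal{I}(\pazocal{Q})$. In that induction, program segments act on \textsf{prog} alone and ancilla-processing unitaries act on \textsf{anc} alone. Each checker call meets a state lying entirely in one subspace of its predicate, so one branch of~\eqref{eq:checker-action} vanishes. Your explicit identification of the surviving action as $I$ or $X_a$ according to $F_i$ spells out what the paper's proof states tersely, and what it later makes explicit in the proof of Lemma~\ref{lem:ancilla-evolution}.
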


\begin{proof}[Proof]
\vspace{-0.25ex}
Every step of $\pazocal{I}(\pazocal{Q})$ preserves the product form of the initial state. Program segments $U$ act on $\textsf{prog}$ alone. Within each block $B_i$, each operation either applies an ancilla-processing unitary $V$ on $\textsf{anc}$ alone, or is a checker unitary call $C_{i\to a}$ that acts on a state lying entirely in one subspace of its predicate, so one branch of~\eqref{eq:checker-action} vanishes and the call reduces to an ancilla-only operation, leaving the program register intact. Induction over all steps completes the argument.
\end{proof}

\vspace{-0.25ex}
Strategies can vary in their number of rounds:

\vspace{-0.25ex}
\begin{definition}[Multi-Round Strategy]\label{def:multi-stra}
A \emph{multi-round strategy} consists of a sequence of instrumentations $\pazocal{I}_1, \ldots, \pazocal{I}_T$ with a classical decoder \textsf{Dec}. 
On any input $\pazocal{Q}$, it executes $\pazocal{I}_t(\pazocal{Q})$ independently for $T$ rounds,\footnote{By two rounds, we refer to two different instrumented executions of the program that enable different assertions or perform different assertion processing logic, rather than two repetitions of the same execution solely for confidence amplification of probabilistic outcomes. If desired, any fixed single-round strategy can still be repeated for sampling-based amplification.} re-initializing \textsf{prog} and \textsf{anc} and measuring (a subset of) the ancilla qubits at termination in each round; \textsf{Dec} then maps the outcomes $(R_1, \ldots, R_T)$ to an answer for the target task.
\end{definition}

We further classify multi-round strategies using the additional criterion of \emph{disjointness}.
Let $\mathcal{E}_t$ be the enabled assertion indices used by $\pazocal{I}_t$.
The strategy is \emph{disjoint} if $\mathcal{E}_t \cap \mathcal{E}_{t'}=\emptyset$ for all $t\neq t'$.
We introduce the criterion of disjointness to separate the two regimes: permitting multiple rounds to re-check the same assertion under different ancilla-processing logic effectively provides multiple encodings of the same failure pattern, which can yield better time--space complexity for some tasks.

\paragraph{Cost Metrics}
In this work, we study the complexity of assertion checking along two dimensions:
\renewcommand{\labelitemi}{${-}$}
\begin{itemize}[itemsep=1pt, topsep=1pt, leftmargin=20pt, labelsep=3pt]
    \item \emph{Time (rounds)}: $T(\pazocal{S}) \triangleq \text{the number of rounds, i.e., the length of its instrumentation sequence}$.
    \item \emph{Space (ancillas)}: $S(\pazocal{S}) \triangleq \max_t \{\text{size of the ancilla register } \textsf{anc} \text{ in } \pazocal{I}_t \}$.
\end{itemize}

\vspace{0.25ex}
For all proposed explicit strategies, we further report their operational costs as follows:
\begin{itemize}[itemsep=1pt, topsep=1pt, leftmargin=20pt, labelsep=3pt]
    \item \emph{Measurements}: $M(\pazocal{S}) \triangleq$ the total number of single-qubit measurements performed on \textsf{anc} at the end of each round, summed over all $\pazocal{I}_t$.
    \item \emph{Checker Calls}: $C(\pazocal{S}) \triangleq$ the number of checker unitaries $C_i$ applied per assertion $A_i$.
    \item \emph{Additional Gates}: $G(\pazocal{S}) \triangleq$ the total number of additional gates used over all $\pazocal{I}_t$, excluding the original program $\pazocal{Q}$ itself and the internals of the checker unitaries $C_i$.
\end{itemize}

The additional gate cost $G$ is introduced by the assertion-handling blocks $B_i$. For sake of concreteness, we count $G$ with respect to the elementary gate set $\{{X}, \textrm{C}^k\textrm{NOT}\mid k\ge 1\}$.

\begin{strategy}[single-round/$n$-ancillas]
\label{str:one-round-n-anc}
Our first strategy from the introduction (Sec~\ref{sec:intro}) uses one instrumentation $\pazocal{I}_1$ with $\mathsf{anc}$ register size $n$ and $\mathcal{E}_1\,{=}\,[n]$. The assertion-handling block for the $i$-th assertion is simply its checker unitary, writing the outcome into the $i$-th qubit of $\mathsf{anc}$. The terminal measurement of $\mathsf{anc}$ reveals all assertion outcomes. Thus $T\,{=}\,1$, $S\,{=}\,n$, $M\,{=}\,n$, $C\,{=}\,1$, and $G\,{=}\,0$.
\end{strategy}

\begin{strategy}[single-ancilla/$n$-rounds]
\label{str:n-round-one-anc}
The other strategy uses instrumentations $\pazocal{I}_1,\ldots,\pazocal{I}_n$, where $\pazocal{I}_k$ sets $\mathcal{E}_k\,{=}\,\{k\}$ and allocates one ancilla. In the $k$-th round, the assertion-handling block for the $k$-th assertion calls its checker unitary, writing the outcome into the sole ancilla, and the terminal measurement reveals that assertion's outcome. Thus $T\,{=}\,n$, $S\,{=}\,1$, $M\,{=}\,n$, $C\,{=}\,1$, and $G\,{=}\,0$.
\end{strategy}

%% file: main/formal-mid.tex
\subsection{Relationship to Mid-Circuit Measurement Model}
\label{sec:midcircuit-bridge}

Though our definition of assertion checking strategies is given in a setting permitting only terminal measurement, it is directly linked to the setting where mid-circuit measurement is available.

\begin{definition}
\label{def:mid-str}
A \emph{mid-circuit strategy} is a single-round strategy (Def.~\ref{def:stra}) whose assertion-handling blocks may additionally perform \emph{measure-and-reset} steps: a designated ancilla qubit is measured in the computational basis, the outcome is recorded, and the same qubit is reset to $\ket{0}$ for reuse.\footnote{Coupling re-initialization to measurement reflects hardware practice, where a reset is realized by at least a measurement.}
The decoder reads all recorded mid-circuit outcomes together with the terminal readout, so the cost ${M}$ counts every single-qubit measurement, whether mid-circuit or terminal.
\end{definition}

In this setting, intermediate readout allows assertions to be checked sequentially as the program executes forward.
Thus, a single forward execution with immediate readout, rather than repeated rounds, is the natural unit of analysis (i.e., $T$ is fixed to 1).
The number of logical measurements $M$ becomes a relevant architectural cost as a first-class parameter alongside the ancilla count $S$.

The key connection between the two settings is that lower and upper bounds on the complexity of the terminal-measurement model transfer to the mid-circuit measurement model:

\begin{theorem}
\label{thm:mc-deferral}
Every mid-circuit strategy $\pazocal{S}_{\mathrm{mid}}$ (Def.~\ref{def:mid-str}) using ${S}$ ancillas and ${M}$ measurements that solves an assertion checking task induces a single-round terminal-measurement strategy for the same task using at most ${S} + {M}$ ancillas. Consequently, any single-round space lower bound for all terminal-measurement strategies is also a lower bound on ${S} + {M}$ for all mid-circuit strategies.
\end{theorem}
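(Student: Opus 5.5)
We will prove this by the standard \emph{principle of deferred measurement}, applied to each measure-and-reset step of $\pazocal{S}_{\mathrm{mid}}$. Enumerate the mid-circuit measure-and-reset steps as $1,\dots,M'$, where $M' \le M$ since terminal measurements are also counted in $M$. To the original $S$ ancillas we adjoin $M'$ fresh ancillas $r_1,\dots,r_{M'}$, initialized to $\ket{0}$. This gives a register of size at most $S+M$.

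The $k$-th measure-and-reset on ancilla $a$ is replaced by a purely unitary gadget. It is the SWAP of $a$ with $r_k$, which is a unitary on $\textsf{anc}$ alone and hence allowed as a $V$ in Def.~\ref{def:instr}(3)(ii). After the swap, $a$ holds $\ket{0}$ exactly as a reset would leave it, and $r_k$ holds the pre-measurement state of $a$. We leave $r_k$ untouched until the end, and the terminal readout of the new strategy measures $r_1,\dots,r_{M'}$ together with the original terminally measured qubits. The new decoder is $\textsf{Dec}$ of $\pazocal{S}_{\mathrm{mid}}$, fed the bits of $r_k$ in place of the $k$-th mid-circuit outcome. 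Every other operation (checker calls $C_{i\to a}$ and ancilla unitaries) is copied verbatim, so the result is a legitimate single-round instrumentation per Def.~\ref{def:stra}.

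The main step is to show that the joint distribution of (mid-circuit outcomes, terminal outcome) in $\pazocal{S}_{\mathrm{mid}}$ equals the distribution of the bits $(r_1,\dots,r_{M'},\text{terminal})$ in the new strategy. We argue by induction over the steps.

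\begin{itemize}
\item \textbf{Program register.} By the argument of Lemma~\ref{lem:invariance}, under a deterministic failure pattern every checker acts as an ancilla-only operation, so $\textsf{prog}$ stays in product with the ancillas in both executions. We may therefore restrict attention to the ancilla state.
\item \textbf{Branch correspondence.} In $\pazocal{S}_{\mathrm{mid}}$, after $k$ measurements the ancilla state is a mixture over outcome strings $o_1\cdots o_k$, with branch $\ket{\psi_{o}}$ of probability $\|\ket{\psi_o}\|^2$ (unnormalized branches). In the deferred version, the state is $\sum_{o}\ket{\psi_o}_{\textsf{anc}}\ket{o}_{r_1\dots r_k}$.
\item \textbf{Inductive step.} The swap-then-continue gadget maps one representation to the other, because swapping a computational-basis decomposition of $a$ into $r_k$ is exactly the Kraus action $\ket{0}\bra{o}$ tensored with recording $\ket{o}$. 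Subsequent operations act only on the original ancillas, so they commute with measurement of the $r$ registers.
\item \textbf{Conclusion.} Measuring all of the $r$ registers at the end therefore reproduces the same joint statistics.
\end{itemize}

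The decoder consequently answers correctly with certainty whenever the original did.

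We expect the main obstacle to be an adaptivity subtlety. If Def.~\ref{def:mid-str} allowed later operations to be classically conditioned on earlier outcomes, the copied operations would not be verbatim. In that case we would replace each classically controlled unitary by the same unitary quantum-controlled on $r_k$, which is still an ancilla-only unitary, and the same induction goes through.

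The "consequently" clause is then immediate. A mid-circuit strategy with $S+M<L$ would yield a terminal strategy with fewer than $L$ ancillas, contradicting a single-round lower bound of $L$.
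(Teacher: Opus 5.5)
Your proposal is correct and matches the paper's proof: replace each measure-and-reset by a SWAP onto a fresh $\ket{0}$ ancilla that is never touched again, defer its measurement to the end, and count at most $M$ new ancillas. Your branch-correspondence induction simply spells out the deferred-measurement principle that the paper cites. The adaptivity paragraph is unnecessary, since Def.~\ref{def:mid-str} permits no classical feedforward. In any case, quantum-controlling a conditioned checker call would not give an ancilla-only unitary.
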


\begin{proof}
Given $\pazocal{S}_{\mathrm{mid}}$, a single-round terminal-measurement strategy $\pazocal{S}_{\mathrm{end}}$ replaces each measure-and-reset of a qubit $a$ with a $\mathrm{SWAP}$ onto a fresh $\ket{0}$ ancilla.
Since nothing later acts on this ancilla, its measurement may be safely deferred to the end of the circuit~\cite{QCQI} with the outcome distribution unchanged. There are at most $M$ new ancillas, meaning that $\pazocal{S}_{\mathrm{end}}$ uses at most $S + M$ ancillas.
\end{proof}

We next relate the upper bounds, subject to a mild structural condition that holds for all strategies proposed in this paper: the sets of enabled assertions $\mathcal{E}_1, \dots, \mathcal{E}_T$ are each contiguous in indices.

\begin{theorem}
\label{thm:mc-recast}
Every contiguous disjoint $T$-round terminal-measurement strategy $\pazocal{S}_{\mathrm{end}}$ that uses $S$ ancillas induces a mid-circuit strategy for the same task using $S$ ancillas and ${M} \le T \cdot S$ measurements.
\end{theorem}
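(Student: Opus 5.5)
The plan is to simulate the $T$ rounds of $\pazocal{S}_{\mathrm{end}}$ one after another within a single forward execution, using measure-and-reset steps to separate the rounds. First, relabel the rounds so that the contiguous, pairwise disjoint index sets satisfy $\max \mathcal{E}_1 < \min \mathcal{E}_2 < \cdots < \max \mathcal{E}_T$. Disjointness together with contiguity makes this possible: $T$ disjoint intervals of $[n]$ can be sorted. Reordering the rounds only permutes the outcome tuple $(R_1,\ldots,R_T)$, and the decoder can undo that permutation. The mid-circuit strategy allocates a single register $\textsf{anc}$ of $S$ qubits, starting in $\ket{0^S}$. For each $i \in \mathcal{E}_t$ it uses the block $B_i$ from $\pazocal{I}_t$, with each instrumentation's smaller register embedded into the first qubits of $\textsf{anc}$. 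It skips every $i \notin \bigcup_t \mathcal{E}_t$. At the end of the block $B_{\max \mathcal{E}_t}$ it measures and resets the qubits that $\pazocal{I}_t$ measures at termination. Before entering $\mathcal{E}_{t+1}$ it also needs every ancilla back in $\ket{0}$, so it measures and resets all $S$ qubits there. Round $t$ therefore costs at most $S$ measurements, giving $M \le T\cdot S$ in total. The final round's measurements can be taken as the terminal readout.

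Correctness rests on showing that, at every step, the joint state factorizes as (program state of the bare execution) $\otimes$ (ancilla state). This is Lemma~\ref{lem:invariance}, applied to the instrumentation that merges all blocks. Since $\pazocal{Q}$ has a deterministic failure pattern, each checker call $C_{i\to a}$ acts on \textsf{anc} as either $I$ or $X$ on qubit $a$, and the program register is left untouched. Hence the ancilla evolution during segment $t$ depends only on $F$ restricted to $\mathcal{E}_t$ and on the blocks $B_i$. This is exactly the ancilla evolution in round $t$ of $\pazocal{S}_{\mathrm{end}}$, because that round also starts from $\ket{0^S}$ with the program register in the same bare-execution state $\ket{\varphi_i}$ at each assertion position. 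The outcomes recorded at the end of segment $t$ therefore have the same distribution as $R_t$. Because the ancillas are reset and remain unentangled from \textsf{prog}, the segments are independent, just as the rounds are. The mid-circuit decoder then feeds $(R_1,\ldots,R_T)$ to \textsf{Dec} and obtains the same answer.

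The step that needs the most care is the claim that resetting is sound. A measure-and-reset of some ancilla qubits could in principle disturb the program register. Lemma~\ref{lem:invariance} rules this out, since the ancillas are in a product state with \textsf{prog}; the ancilla state itself may be a superposition, which is fine. I would also check that the reset preserves this product form going into the next segment. A second, minor point is that contiguity and disjointness are what let the segments be processed in forward program order. Without them, rounds would interleave along the timeline and would need separate ancilla registers at the same time.
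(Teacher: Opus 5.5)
Your proposal is correct and follows essentially the same route as the paper. Both re-index the rounds into program order, run all blocks on one shared $S$-qubit register in a single forward execution, measure-and-reset the register at each segment boundary (at most $S$ measurements per segment), and invoke Lemma~\ref{lem:invariance} to conclude that each segment reproduces the corresponding round's final ancilla state. Your added care about permuting the transcript, embedding smaller registers, and checking that measure-and-reset preserves the product form is consistent with this argument.
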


\begin{proof}[Proof]
\vspace{-0.5ex}
Assuming contiguity as above, we may re-index the rounds so that the indices across $\mathcal{E}_1, \dots, \mathcal{E}_T$ of $\pazocal{S}_{\mathrm{end}}$ appear in increasing order. We then construct a mid-circuit strategy $\pazocal{S}_{\mathrm{mid}}$ on an $S$-qubit ancilla register as follows. Each assertion $A_i$ retains the handling block $B_i$ it has in the round of $\pazocal{S}_{\mathrm{end}}$ that enables $A_i$. The single forward execution of $\pazocal{S}_{\mathrm{mid}}$ is the concatenation of $T$ program segments, with the $t$-th checking the assertions in $\mathcal{E}_t$. At each segment boundary, the whole ancilla register is measured and reset. Because checking within a segment never disturbs the program register by Lem.~\ref{lem:invariance}, the ancilla state at the end of segment $t$ equals the final ancilla state of round $t$ in $\pazocal{S}_{\mathrm{end}}$. At most $S$ measurements occur per segment, hence $M \le T \cdot S$.
\end{proof}

\vspace{-0.5ex}
One implication of the above theorems is that assertion schemes built on mid-circuit projective measurement~\cite{ProjectionBased2020} can reduce their measurement count, a potential benefit on hardware where mid-circuit measurement is available but still costly.
Rather than performing $\Theta(n)$ measurements to check $n$ assertions one by one, the strategies summarized in Table~\ref{tab:complexity} and formalized in the following sections provide the ability to aggregate outcomes within segments, measuring only at segment boundaries.
For \textsc{ExistFail} and \textsc{FirstFail}, the total complexity drops to ${S} + {M} = \Theta(\log n)$.

%% file: main/formal-prob.tex
\subsection{Semantic Stability of the Checking Tasks}
\label{sec:prob-tasks}

In this section, we extend the three checking tasks to the general probabilistic setting.
We show that in this setting, \textsc{ExistFail} and \textsc{FirstFail} have well-defined answers independently of the strategy used to check them --- they are \emph{semantically stable}.
By contrast, \textsc{ListAll} is not, a distinction with consequences for how a programmer may distribute the checking of assertions across rounds.

\paragraph{Probabilistic Case.}
In general, an assertion may fail only with some probability, i.e., the checked program state lies neither entirely inside nor entirely outside the asserted subspace $\mathrm{Im}(P_i)$.
The failure probability of $A_i$ is by definition
$p_i \triangleq \Vert(I-P_i)|\varphi_i\rangle \Vert^2 = \langle \varphi_i |(I-P_i)|\varphi_i\rangle$.
We then impose a standard gap promise~\cite{quantumPropertyTesting, quantumStateCertification}: passing an assertion means meeting its predicate exactly, and a bug manifests as a detectable violation. Formally, for some fixed $\eta \,{>}\, 0$, the promise assumes that $p_i \,{\in}\, \{0\}\cup[\eta,1]$ for all $i$. The probabilistic versions of the tasks are then:
\begin{itemize}
\item $\textsc{ListAll}_\eta(\pazocal{Q}) \triangleq \{\,i\in[n]\mid p_i  \ge \eta\,\}$;
\item $\textsc{ExistFail}_\eta(\pazocal{Q}) \triangleq$ decide whether $\exists i\in[n]$ such that $p_i  \ge \eta$;
\item $\textsc{FirstFail}_\eta(\pazocal{Q}) \triangleq \min\{\,i\in[n]\mid p_i \ge \eta\,\}$ if such $i$ exists, and $\bot$ otherwise.
\end{itemize}

For brevity, our main bounds in the rest of this paper (see Table~\ref{tab:complexity}) are stated for the deterministic case. The lower bounds carry over to the probabilistic case, as deterministic failure patterns are exactly the special case $p_i \,{\in}\, \{0,1\}$. On the other side, a chosen checking strategy can be repeated to amplify confidence: the strategy is executed repeatedly and its outcomes are aggregated classically, leaving the per-round space $S$ unchanged while increasing the number of samples.

\paragraph{Semantic Stability}
In the deterministic setting, Lem.~\ref{lem:invariance} shows that checking leaves the program register unchanged.
In the probabilistic setting, the same does not hold, and the destructiveness of quantum measurement means that checking one assertion can change the state subsequently seen by later ones. Thus, perhaps unintuitively, the marginal failure probability of a later assertion may depend on whether earlier assertions are even checked.
Such inter-assertion disturbance is a known phenomenon~\cite{assertionSlicing,sequentialProjectiveM}. The structural conditions under which this disturbance exists, in terms of the assertion predicates and the program, are an interesting question in its own right.

In App.~\ref{app:robust-proof} (Thm.~\ref{thm:instable}), we prove that \textsc{ListAll}$_\eta$ is the very task whose answer can be affected by this instability. Conceptually, the act of checking can itself change what there is to report, and thus the recovered failure set may depend on the particular strategy used to check it.
This gives a programmer targeting \textsc{ListAll}$_\eta$ a clear decision rule.
When the assertions are known not to disturb one another, several may be checked per round.
Absent such knowledge, the reference \textsc{ListAll}$_\eta$ answer is guaranteed by running one instrumented execution per assertion, with only that assertion enabled via its checker unitary; this is exactly the single-ancilla/$n$-rounds strategy. Redistributing assertions can produce an incorrect answer, as witnessed by Example~\ref{exp:instability} in App.~\ref{app:robust-proof}.

By contrast, in Thm.~\ref{thm:stable} we prove that \textsc{ExistFail}$_\eta$ and \textsc{FirstFail}$_\eta$ admit strategy-independent task semantics --- a programmer may freely distribute assertions across rounds.
Conceptually, any disturbance to later assertions can arise only after some earlier assertion has failure probability at least $\eta$. Once this happens, the output of \textsc{ExistFail}$_\eta$ is already true, and the output of \textsc{FirstFail}$_\eta$ is already determined by this prefix.
Together with the instability of \textsc{ListAll}$_{\eta}$, our results give a full formal understanding of semantic stability for all three quantum assertion checking tasks.

%% file: main/shared.tex
\section{Shared Proof Ingredients: The Lower-Bound Transfer}
\label{sec:shared}
In this section, we formalize a proof technique that underpins the lower-bound arguments in the following sections.
Specifically, we show that every assertion-checking strategy corresponds to a mathematical object, which we call a \emph{finite-dimensional unitary transition system}, that solves a problem of distinguishing patterns among bit-strings.
A lower bound on the size of this transition system directly transfers to a lower bound on the space complexity of the checking strategy.

\begin{definition}[Finite-Dimensional Unitary Transition System]
\label{def:unitary-transition-system}
A one-round finite-dimensional unitary transition system for input bit-strings $x$ of length $n$ is a 4-tuple
{\setlength{\abovedisplayskip}{3pt}
\setlength{\belowdisplayskip}{2pt}
\setlength{\abovedisplayshortskip}{0pt}
\setlength{\belowdisplayshortskip}{0pt}
\begin{align*}
\pazocal{U}_n =
(\pazocal{H}, ~\{\pazocal{T}^{\scriptscriptstyle (i)}_0\}_{i=1}^n, ~\{\pazocal{T}^{\scriptscriptstyle (i)}_1\}_{i=1}^n, ~\ket{\psi_0}),
\end{align*}}

\noindent
where $\pazocal{H}$ is a finite-dimensional Hilbert space, each $\pazocal{T}^{\scriptscriptstyle (i)}_b{:}\, \pazocal{H}\,{\to}\,\pazocal{H}$ is a unitary for $i\,{\in}\,[n]$ and $b\,{\in}\,\{0,1\}$, and $\ket{\psi_0}\,{\in}\,\pazocal{H}$ is the initial state.
Given an input string $x=b_1\cdots b_n\in\{0,1\}^n$, the final state is
{
\setlength{\abovedisplayskip}{3pt}
\setlength{\belowdisplayskip}{2pt}
\setlength{\abovedisplayshortskip}{0pt}
\setlength{\belowdisplayshortskip}{0pt}
\begin{align*}
|\psi(x)\rangle
~\triangleq~
(\pazocal{T}^{\scriptscriptstyle (n)}_{b_n}\cdots \pazocal{T}^{\scriptscriptstyle (2)}_{b_2}\pazocal{T}^{\scriptscriptstyle (1)}_{b_1})\ket{\psi_0}.
\end{align*}}

A \emph{$T$-round unitary transition system} is a tuple $(\pazocal{U}^{1}_n, \ldots, \pazocal{U}^{T}_n)$ of unitary transition systems that share the same input length $n$. Given input $x$, it produces the transcript $(r_1, \ldots, r_T)$ by measuring each final state $|\psi^{t}(x)\rangle$ of $\pazocal{U}^{t}_n$ independently in the computational basis.
\end{definition}

\begin{definition}[Bit-String Pattern Distinguishing Tasks]
\label{def:distinguish}
Given an input string $x = b_1 \cdots b_n \in \{0,1\}^n$, we define three tasks $\textsc{List}$, $\textsc{Exist}$, and $\textsc{First}$ that are identical to the three tasks in Sec.~\ref{sec:tasks} respectively with the input string $x$ replacing the failure pattern $F$.
A $T$-round unitary transition system \emph{solves} each task if there exists a classical decoder $\textsf{Dec}$ such that, on every input $x \,{\in}\, \{0,1\}^n$, applying $\textsf{Dec}$ to the transcript $(r_1, \ldots, r_T)$ outputs the correct answer to the task with certainty.
\end{definition}

We relate lower bounds for assertion checking strategies to those for transition systems via two steps. First, we show that an instrumentation fixes a family of ancilla-only unitaries (Lem.~\ref{lem:ancilla-evolution}). Second, these unitaries assemble into a unitary transition system that solves the
corresponding distinguishing task, with the failure pattern playing the role of the input string (Lem.~\ref{lem:reduction}).

\begin{lemma}[Induced Ancilla Evolution]
\label{lem:ancilla-evolution}
Let $\pazocal{I}$ be an instrumentation with an ancilla register of size $m$.
For each $i \,{\in}\, [n]$, there exists a pair of ancilla-only unitaries $W^{i}_0, W^{i}_1$, determined by $\pazocal{I}$ alone, such that for every $\pazocal{Q}$ with deterministic failure pattern $F$, the state of\,\ $\pazocal{I}(\pazocal{Q})$ immediately after the $i$-th assertion position is equal to $|\varphi_i\rangle_{\textsf{prog}} \otimes (W^{i}_{F_i} \cdots W^{1}_{F_1})|0^{m}\rangle$.
\end{lemma}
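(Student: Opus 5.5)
We define $W^i_0$ and $W^i_1$ explicitly from the assertion-handling block and then argue by induction over assertion positions. This is essentially a refinement of the proof of Lem.~\ref{lem:invariance} that tracks the ancilla state.

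\textbf{Construction.} Fix $i\in[n]$. If $i\notin\mathcal{E}$, set $W^i_0 = W^i_1 = I$. Otherwise write $B_i = G_\ell\cdots G_1$. For $b\in\{0,1\}$, define $W^i_b = \tilde G_\ell^{(b)}\cdots\tilde G_1^{(b)}$, where:
\begin{itemize}
\item if $G_j$ is an ancilla-only unitary $V$, then $\tilde G_j^{(b)} = V$;
\item if $G_j = C_{i\to a}$, then $\tilde G_j^{(b)} = X_a^{\,b}$, i.e., the identity when $b=0$ and a bit flip on ancilla $a$ when $b=1$.
\end{itemize}
These operators are unitary and depend only on $\pazocal{I}$ (the block structure and designated ancillas), not on $\pazocal{Q}$.

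\textbf{Key one-step fact.} Suppose $\ket{\varphi_i}$ satisfies $P_i\ket{\varphi_i}=\ket{\varphi_i}$ (when $F_i=0$) or $P_i\ket{\varphi_i}=0$ (when $F_i=1$). Then for any ancilla state $\ket{\chi}$,
\[
C_{i\to a}\bigl(\ket{\varphi_i}\otimes\ket{\chi}\bigr)=\ket{\varphi_i}\otimes X_a^{F_i}\ket{\chi}.
\]
To verify this, expand $\ket{\chi}$ in the computational basis and apply~\eqref{eq:checker-action} termwise. By the footnote following~\eqref{eq:checker-action}, this is valid for full-register states, including the remaining qubits of $\textsf{prog}$. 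In each term exactly one branch vanishes, and linearity reassembles the result. Ancilla-only $V$ trivially acts as $I\otimes V$. Composing over $j=1,\dots,\ell$ gives $B_i(\ket{\varphi_i}\otimes\ket{\chi}) = \ket{\varphi_i}\otimes W^i_{F_i}\ket{\chi}$.

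\textbf{Induction.} We prove the statement about the state right after position $i$ by induction on $i$.
\begin{itemize}
\item \emph{Base case.} Before $A_1$, the state is $U_0\ket{\varphi_0}\otimes\ket{0^m}=\ket{\varphi_1}\otimes\ket{0^m}$. Applying the one-step fact for $i=1$ gives the claim.
\item \emph{Inductive step.} Assume the claim holds after position $i-1$. The program segment $U_{i-1}$ acts on $\textsf{prog}$ only, so it maps $\ket{\varphi_{i-1}}\otimes\ket{\chi_{i-1}}$ to $\ket{\varphi_i}\otimes\ket{\chi_{i-1}}$, using $\ket{\varphi_i} = U_{i-1}\ket{\varphi_{i-1}}$. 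Here $\ket{\chi_{i-1}} = W^{i-1}_{F_{i-1}}\cdots W^1_{F_1}\ket{0^m}$ by hypothesis. Applying the one-step fact at position $i$ (or the identity if $i\notin\mathcal{E}$) yields the claim.
\end{itemize}
The deterministic failure pattern (Def.~\ref{def:classical}) supplies exactly the subspace hypothesis needed at each step.

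\textbf{Main obstacle.} The only delicate point is the one-step fact with an arbitrary, possibly superposed, ancilla state rather than a basis state $\ket{c}$, together with a checked register that is a subset of $\textsf{prog}$. Both are handled by linearity and the identity extension of $P_i$, so no real difficulty is expected.
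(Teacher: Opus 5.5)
Your proposal is correct and takes essentially the same route as the paper. It uses the same construction of $W^{i}_b$ (keep each ancilla-only $V$, replace each checker call $C_{i\to a}$ by $I$ or $X_a$ according to $b$), and the same argument that a deterministic failure pattern makes one branch of~\eqref{eq:checker-action} vanish, so by linearity each call acts as $I_{\textsf{prog}}\otimes X_a^{F_i}$; the only difference is that you prove the product-form trajectory by explicit induction on positions, whereas the paper cites Lem.~\ref{lem:invariance} for it and then chains the per-position ancilla actions.
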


\vspace{-2ex}
\begin{proof}
If $i \,{\notin}\, \mathcal{E}$, set $W^{i}_0 = W^{i}_1 \triangleq I$. If $i \,{\in}\, \mathcal{E}$, write the assertion-handling block as $B_i = G_\ell \cdots G_1$ (Def.~\ref{def:instr}) and set \smash{$W^{i}_b \triangleq g_\ell \cdots g_1$} for $b \,{\in}\, \{0,1\}$, where \smash{$g_j \,{\triangleq}\, V$} if $G_j$ applies an ancilla-processing unitary $V$, and if \smash{$G_j = C_{i \to a}$}, then \smash{$g_j \,{\triangleq}\, I$} in $W^{i}_0$ and \smash{$g_j \,{\triangleq}\, X_a$} in $W^{i}_1$, with $X_a$ the $X$ gate on the designated ancilla $a$. Each pair is fixed by $\pazocal{I}$ alone; the failure pattern enters only by selecting which of the two describes the block's run-time action.

By Lem.~\ref{lem:invariance}, the joint state of $\pazocal{I}(\pazocal{Q})$ remains a product whose program factor follows the bare-execution trajectory, so it suffices to show that the $i$-th assertion position applies $W^{i}_{F_i}$ to the ancilla factor. This is immediate for $i \,{\notin}\, \mathcal{E}$.
For $i \,{\in}\, \mathcal{E}$, every checker call of the block meets the program factor $\ket{\varphi_i}$, which the deterministic failure pattern places entirely in one subspace of~\eqref{eq:checker-action}.
Only one branch survives: on $F_i \,{=}\, 0$ the passing branch, fixing the ancilla, and on $F_i \,{=}\, 1$ the failing branch, flipping it.
Extended by linearity over the components of the designated ancilla, the call thus acts as $I_{\textsf{prog}} \otimes (X_a)^{F_i}$, while ancilla-processing gates act as $I_{\textsf{prog}} \otimes V$ by definition.
Composing the block's operations yields $I_{\textsf{prog}} \otimes W^{i}_{F_i}$, and chaining over $i$ from the initial $|0^{m}\rangle$ gives the claim.
\end{proof}

\vspace{-0.5ex}
\begin{lemma}
\label{lem:reduction}
Let $\pazocal{S}$ be a $T$-round strategy using $S_t$ ancillas in round $t$. If $\pazocal{S}$ solves \emph{\textsc{ListAll}}, \emph{\textsc{ExistFail}}, or \emph{\textsc{FirstFail}}, then there exists a $T$-round unitary transition system whose round-$t$ dimension is $2^{S_t}$ that solves the corresponding bit-string pattern distinguishing task on every input $x \in \{0,1\}^n$.
\end{lemma}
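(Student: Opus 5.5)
The plan is to build the transition system directly from Lem.~\ref{lem:ancilla-evolution}, applied round by round, and then to argue that every bit string actually arises as the failure pattern of some program. For each round $t$, the instrumentation $\pazocal{I}_t$ has an ancilla register of size $S_t$. Lem.~\ref{lem:ancilla-evolution} yields ancilla-only unitaries $W^{t,i}_0, W^{t,i}_1$ on $\mathcal{H}_{2^{S_t}}$, and these are determined by $\pazocal{I}_t$ alone. We set
\[
\pazocal{U}^t_n \triangleq \big(\mathcal{H}_{2^{S_t}},\ \{W^{t,i}_0\}_{i=1}^n,\ \{W^{t,i}_1\}_{i=1}^n,\ \ket{0^{S_t}}\big).
\]
The round-$t$ dimension is then $2^{S_t}$, as the lemma requires. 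If the strategy measures only a subset of the ancillas, we instead measure all of them in $\pazocal{U}^t_n$ and let the decoder discard the unmeasured bits. This does not change the marginal distribution of the bits the strategy does read.

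The key step is realizability: every $x\in\{0,1\}^n$ must be the deterministic failure pattern of some program $\pazocal{Q}_x$ with $n$ assertions at the positions the strategy expects. We take a one-qubit \textsf{prog} in state $\ket{0}$ with all $U_i = I$, and let $A_i = \textsf{assert}(q;\ketbra{0}{0})$ when $b_i=0$ and $\textsf{assert}(q;\ketbra{1}{1})$ when $b_i=1$. Then $\ket{\varphi_i}=\ket{0}$ lies in $\mathrm{Im}(P_i)$ exactly when $b_i=0$, and in $\mathrm{Im}(P_i)^\perp$ otherwise, so $F=x$. The strategy is uniform: it references $\pazocal{Q}$ only through the count and positions of its assertions, so it must answer correctly on every such $\pazocal{Q}_x$.

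Now apply Lem.~\ref{lem:ancilla-evolution} to $\pazocal{Q}_x$ in round $t$. Immediately after the last assertion, the state is $\ket{\varphi_n}\otimes\ket{\psi^t(x)}$. The trailing $U_n$ and the terminal measurement act only on \textsf{prog}, so the ancilla factor stays a product with \textsf{prog}. Measuring the ancillas of $\pazocal{I}_t(\pazocal{Q}_x)$ in the computational basis therefore gives exactly the outcome distribution of measuring $\ket{\psi^t(x)}$ in $\pazocal{U}^t_n$. Rounds are re-initialized and independent, so the joint transcript distributions coincide as well.

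Finally, we define the transition-system decoder as the strategy's \textsf{Dec}, composed with the projection onto the measured bits. Every transcript with nonzero probability on input $x$ is then a possible strategy transcript on $\pazocal{Q}_x$. The strategy answers correctly with certainty, and its task on $F$ is by Def.~\ref{def:distinguish} the same as the distinguishing task on $x$, so the system solves \textsc{List}, \textsc{Exist}, or \textsc{First} respectively. The main obstacle is the realizability step together with the uniformity argument, which together justify that correctness on programs implies correctness on all strings. The rest is bookkeeping.
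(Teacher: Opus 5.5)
Your proposal is correct and follows the paper's proof essentially step for step. You build the round-$t$ system from the ancilla-only unitaries $W^{t,i}_0, W^{t,i}_1$ of Lem.~\ref{lem:ancilla-evolution} with initial state $|0^{S_t}\rangle$, witness realizability with the same trivial one-qubit program the paper gives in its footnote, use the product-state structure and independent rounds to match transcript distributions, and take as decoder \textsf{Dec} applied only to the bits the strategy actually measures.
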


\begin{proof}
\vspace{-0.5ex}
For each round $t$, let $W^{t,i}_0, W^{t,i}_1$ ($i \in [n]$) denote the ancilla-only unitaries induced by $\pazocal{I}_t$ (Lem.~\ref{lem:ancilla-evolution}), and construct the $T$-round unitary transition system $(\pazocal{U}^1_n, \ldots, \pazocal{U}^T_n)$ with
{\setlength{\abovedisplayskip}{3pt}
\setlength{\belowdisplayskip}{2pt}
\setlength{\abovedisplayshortskip}{0pt}
\setlength{\belowdisplayshortskip}{0pt}
\begin{align*}
\pazocal{U}^t_n \triangleq \big(\pazocal{H}_{\textsf{anc}_t},\, \{W^{t,i}_0\}_{i=1}^n,\, \{W^{t,i}_1\}_{i=1}^n,\, |0^{S_t}\rangle\big).
\end{align*}
}
Each pair $W^{t,i}_0 $ and $ W^{t,i}_1$ is fixed by $\pazocal{I}_t$ alone, meaning that the constructed system is independent of any program instance, and has per-round dimension $\dim \pazocal{H}_{\textsf{anc}_t} = 2^{S_t}$. On input $x = b_1 \cdots b_n \in \{0,1\}^n$, the final state of round $t$ is $|\psi^t(x)\rangle = (W^{t,n}_{b_n} \cdots W^{t,1}_{b_1})|0^{S_t}\rangle$.

For any $x \,{\in}\, \{0,1\}^n$, let $\pazocal{Q}$ be any program-with-assertions with deterministic failure pattern $F = x$.\footnote{
As a technical note, the existence of such $\pazocal{Q}$ for every $x \,{\in}\, \{0,1\}^n$ is guaranteed by a trivial instance: take a single fresh program qubit with no further updates, and let the $i$-th assertion check whether the qubit is in $\ket{0}$ if $x_i \,{=}\, 0$, or in $\ket{1}$ if $x_i \,{=}\, 1$.
} By Lem.~\ref{lem:ancilla-evolution} at position $n$, each round $t$ of $\pazocal{S}$ on $\pazocal{Q}$ terminates, after the final segment $U_n$, in the product state $U_n|\varphi_n\rangle_{\textsf{prog}} \otimes (W^{t,n}_{F_n} \cdots W^{t,1}_{F_1})|0^{S_t}\rangle$, whose ancilla part is exactly the final state \smash{$|\psi^t(x)\rangle$} of \smash{$\pazocal{U}^t_n$} on input $x$. 
Because this state is a product across $\textsf{prog}$ and $\textsf{anc}_t$, and the $T$ rounds are executed independently with fresh initializations, the readout $(R_1, \ldots, R_T)$ of $\pazocal{S}$ on $\pazocal{Q}$ is distributed exactly as the transcript $(r_1, \ldots, r_T)$ of the transition system on $x$, read only at the positions $\pazocal{S}$ measures in each round.
Finally, equip the system with a decoder that reads each $r_t$ at those positions and invokes \textsf{Dec}. 
With the input bit-string playing the role of the failure pattern, the correct answers of the assertion checking and bit-string distinguishing tasks coincide on $\pazocal{Q}$ and $x$. 
Since $\pazocal{S}$ solves the task on $\pazocal{Q}$, the constructed decoder answers correctly on $x$.
As $x$ ranges over $\{0,1\}^n$, the transition system solves the corresponding distinguishing task.
\end{proof}

\vspace{-0.25ex}
As a consequence of Lem.~\ref{lem:reduction}, any lower bound on the dimension of transition systems solving a distinguishing task is a lower bound on the ancilla budget of strategies solving the corresponding checking task. If there existed any strategy that beats the ancilla bound, it would yield a transition system that beats the dimension bound and solves the same distinguishing task.

\vspace{0.5ex}
Finally, we state one more correctness lemma that will be repeatedly invoked in the following sections. Whenever two inputs to a transition system have different correct answers, some round of the transition system must tell them apart at the level of measurement outcomes.

\begin{lemma}[Answer Separation]\label{lem:answer-separation}
Let $(\pazocal{U}^1_n,\dots,\pazocal{U}^T_n)$ be a $T$-round unitary transition system that solves one of the distinguishing tasks of Def.~\ref{def:distinguish} with decoder $\textsf{Dec}$. Then for any two inputs $u,v \,{\in}\, \{0,1\}^n$ whose correct answers differ, there exists a round $t \,{\in}\, [T]$ in which the final states $|\psi^t(u)\rangle$ and $|\psi^t(v)\rangle$ are supported on disjoint sets of computational basis states. In particular, $\langle \psi^t(u) \,|\, \psi^t(v)\rangle = 0$.
\end{lemma}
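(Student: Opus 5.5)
We argue by contradiction, using only the definition of \emph{solves} in Def.~\ref{def:distinguish}, which demands correctness with certainty on every input. Fix inputs $u,v$ whose correct answers differ, and suppose that in every round $t \in [T]$ the final states $|\psi^t(u)\rangle$ and $|\psi^t(v)\rangle$ share at least one computational basis state in their supports. For each $t$, pick such a common basis string $r_t$. Then $|\langle r_t|\psi^t(u)\rangle|^2 > 0$ and $|\langle r_t|\psi^t(v)\rangle|^2 > 0$.

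Next we use the independence of the rounds in Def.~\ref{def:unitary-transition-system}. Each round is measured independently, so the transcript $(r_1,\dots,r_T)$ occurs on input $u$ with probability $\prod_t |\langle r_t|\psi^t(u)\rangle|^2 > 0$, and likewise on input $v$. The decoder $\textsf{Dec}$ is a deterministic classical function, so it maps this single transcript to a single answer $a$. Solving with certainty requires every transcript of positive probability on $u$ to decode to the correct answer for $u$, so $a$ is the answer for $u$. The same reasoning shows $a$ is the answer for $v$. These answers differ by hypothesis, which is a contradiction.

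Hence some round $t$ has final states whose computational-basis supports are disjoint. Orthogonality then follows directly. Expand the inner product as $\langle\psi^t(u)|\psi^t(v)\rangle = \sum_r \overline{\langle r|\psi^t(u)\rangle}\langle r|\psi^t(v)\rangle$; by disjointness of supports, every term in this sum vanishes.

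The only subtle step is handling a possibly randomized decoder. The definition treats $\textsf{Dec}$ as a classical map, and we read it as deterministic. If randomness were allowed, the same argument still works: certainty forces every output of positive probability on the shared transcript to be correct for both $u$ and $v$, which is impossible.
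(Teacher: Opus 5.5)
Your proof is correct and follows the paper's own argument: assume every round's two final states share a basis outcome, pick one per round, use independence of rounds to get a transcript with positive probability on both inputs, and conclude that the decoder must return the same answer on both, a contradiction. Your explicit expansion of the inner product for the orthogonality consequence and your remark on randomized decoders add detail the paper leaves implicit; neither is needed, since the paper treats \textsf{Dec} as deterministic.
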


\begin{proof}
\vspace{-0.5ex}
Assume toward contradiction that in every round $t$, some computational basis state has nonzero amplitude in both $|{\psi^t(u)}\rangle$ and $|{\psi^t(v)}\rangle$.
Select one such outcome $r_t$ per round. Since the rounds are measured independently, the transcript $(r_1, \dots, r_T)$ occurs with nonzero probability on both inputs, so the deterministic decoder returns the same answer on $u$ and $v$, contradicting the fact that their correct answers differ.
\end{proof}

%% file: main/exist.tex
\section{Complexity of the \textsc{ExistFail} Task}
\label{sec:exist}

This section presents our complexity results for \textsc{ExistFail} in three settings: single-round (Sec.~\ref{sec:single-round-exist}), multi-round with disjoint assertions (Sec.~\ref{sec:disjoint-multi-round-exist}), and general multi-round (Sec.~\ref{sec:general-multi-round-exist}).

\input{main/exist-single}
\input{main/exist-disjoint-multi}

\input{main/exist-general-multi}

%% file: main/exist-single.tex
\subsection{Single-Round Space Complexity of \textsc{ExistFail}}
\label{sec:single-round-exist}

We first establish the lower bound $S \,{\ge}\, \lceil \log(n\,{+}\,1)\rceil$ in Thm.~\ref{thm:single-round-exist-lower-bound}, and then present a strategy using $S \,{\le}\, \lceil \log(n\,{+}\,1)\rceil\,{+}\,1$ ancillas to witness the upper bound in Thm.~\ref{thm:single-round-exist-upper-bound}. As a consequence, the single-round space complexity of \textsc{ExistFail} is $S = \Theta(\log n)$, shown in Thm.~\ref{thm:single-round-exist-tightness}.

\begin{theorem}
\label{thm:single-round-exist-lower-bound}
    For any single-round strategy (as defined in Def.~\ref{def:stra}) that solves \emph{\textsc{ExistFail}}, it holds that $S \ge \lceil \log(n\,{+}\,1)\rceil$, with $n$ the number of assertions.
\end{theorem}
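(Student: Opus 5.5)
The plan is to reduce to transition systems and then run a dimension-counting induction that relies only on unitarity. By Lem.~\ref{lem:reduction} with $T=1$, a single-round strategy with $S$ ancillas that solves \textsc{ExistFail} yields a one-round unitary transition system $(\pazocal{H},\{\pazocal{T}^{(i)}_0\},\{\pazocal{T}^{(i)}_1\},\ket{\psi_0})$ with $\dim\pazocal{H}=2^S$ that solves \textsc{Exist}. It therefore suffices to show $\dim\pazocal{H}\ge n+1$; then $2^S\ge n+1$, and since $S$ is an integer, $S\ge\lceil\log(n+1)\rceil$.

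First fix notation. Let $\ket{\psi_k}\triangleq \pazocal{T}^{(k)}_0\cdots\pazocal{T}^{(1)}_0\ket{\psi_0}$ be the state after the all-zero prefix $0^k$. Let $R_k\subseteq\pazocal{H}$ be the set of states reached after the first $k$ steps on prefixes $p\in\{0,1\}^k$ that contain at least one $1$, and let $L_k\triangleq\operatorname{span}R_k$. Also let $A_k\triangleq \pazocal{T}^{(n)}_0\cdots\pazocal{T}^{(k+1)}_0$ be the unitary that appends the all-zero suffix.

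The first key step is to show $L_k\perp\ket{\psi_k}$ for every $k$. Take any prefix $p$ containing a $1$. The inputs $p\,0^{n-k}$ and $0^n$ have different correct \textsc{Exist} answers, so Lem.~\ref{lem:answer-separation} gives $\langle\psi(0^n)|\psi(p0^{n-k})\rangle=0$. Rewriting both final states through $A_k$, this inner product equals $\langle A_k\psi_k\,|\,A_k\phi_p\rangle=\langle\psi_k|\phi_p\rangle$, where $\phi_p$ is the state after prefix $p$. Hence every element of $R_k$ is orthogonal to $\ket{\psi_k}$, and by linearity so is $L_k$. At $k=n$ this reads $L_n\perp\ket{\psi(0^n)}$.

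The second step proves $\dim L_k\ge k$ by induction; I expect this to be the main obstacle, since it is where reversibility must be used rather than merely stated. The trick is to push everything through the \emph{same} unitary $\pazocal{T}^{(k)}_1$. Any prefix of length $k-1$ that contains a $1$, extended by bit $1$, still contains a $1$; so does $0^{k-1}1$. Thus $R_k\supseteq \pazocal{T}^{(k)}_1R_{k-1}\cup\{\pazocal{T}^{(k)}_1\ket{\psi_{k-1}}\}$. Because $\pazocal{T}^{(k)}_1$ is unitary, it preserves dimension and orthogonality. So $\pazocal{T}^{(k)}_1L_{k-1}$ has dimension $\dim L_{k-1}\ge k-1$, and it is orthogonal to the nonzero vector $\pazocal{T}^{(k)}_1\ket{\psi_{k-1}}$ by the first step at index $k-1$. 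Hence $L_k$ contains their orthogonal direct sum, giving $\dim L_k\ge k$; the base case $L_0=\{0\}$ is trivial.

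Combining the two steps, $L_n$ has dimension at least $n$ and is orthogonal to the unit vector $\ket{\psi(0^n)}$. Therefore $\dim\pazocal{H}\ge n+1$, and the bound on $S$ follows as described at the start. The intuition matches the paper's remark on reversibility: each new position at which the first failure can occur forces a fresh orthogonal direction, so the failure flag cannot be collapsed into $O(1)$ qubits.
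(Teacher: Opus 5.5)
Your proof is correct and is essentially the paper's argument. The reduction via Lem.~\ref{lem:reduction} and Lem.~\ref{lem:answer-separation} is identical, and your Step~1 is the pumping property of Lem.~\ref{lem:pumping} at cut $i=0$, which is equivalent to the general-cut version since $0^{i}u$ is itself a prefix containing a $1$. Unrolling your induction, the bound $\dim L_n \ge n$ is witnessed by the states reached on $0^{n-k}1^{k}$ for $k \in [n]$. Together with $|\psi(0^n)\rangle$, you are implicitly producing exactly the paper's pairwise orthogonal family, with transport through the unitary $\pazocal{T}^{(k)}_1$ playing the role of the paper's cancellation of the common suffix $1^{i}$.
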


\begin{proof}
\vspace{-0.5ex}
Such a strategy solves \textsc{ExistFail} over all $2^n$ deterministic failure patterns. By Lemma~\ref{lem:reduction}, it thus yields a one-round unitary transition system\,\,{\small \smash{$\pazocal{U}_n = (\pazocal{H}, \{\pazocal{T}^{\scriptscriptstyle (i)}_0\}_{i=1}^n, \{\pazocal{T}^{\scriptscriptstyle (i)}_1\}_{i=1}^n, |0^S\rangle)$}} with $\dim(\pazocal{H}) = 2^S$ that solves the corresponding task \textsc{Exist} (Def.~\ref{def:distinguish}).

The correct answers to \textsc{Exist} on $0^n$ and on any $x \neq 0^n$ differ, so Lemma~\ref{lem:answer-separation} (with $T=1$) gives
$\langle \psi(0^n)\,|\,\psi(x)\rangle = 0$ for every $x \neq 0^n$.
By the following lemma (Lem.~\ref{thm:unitary-dim-lb}), any such transition system satisfies $\dim(\pazocal{H}) \ge n + 1$. It follows that $2^S \ge n+1$, and $S \ge \lceil \log(n+1)\rceil$ as required.
\end{proof}

\begin{lemma}
\label{thm:unitary-dim-lb}
Let\,\,{\small $\pazocal{U}_n=(\pazocal{H},\{\pazocal{T}^{\scriptscriptstyle (i)}_0\}_{i=1}^n,\{\pazocal{T}^{\scriptscriptstyle (i)}_1\}_{i=1}^n,|\psi_0\rangle)$} be a one-round finite-dimensional unitary transition system as in Def.~\ref{def:unitary-transition-system}.
Assuming that $\forall x\neq 0^n$, { $\langle \psi(0^n) \,|\, \psi(x)\rangle$} $ = 0$, we have $\dim(\pazocal{H})\ge n+1$.
\end{lemma}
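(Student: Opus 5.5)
The plan is to track, at every intermediate time step, the span of all states reachable from nonzero prefixes, and to show that this span grows by at least one dimension per step while staying orthogonal to the all-zero trajectory. Concretely, for $0 \le k \le n$ I will write $|\phi_k\rangle \triangleq \pazocal{T}^{\scriptscriptstyle (k)}_0\cdots\pazocal{T}^{\scriptscriptstyle (1)}_0|\psi_0\rangle$ for the state after the all-zero prefix $0^k$. For a prefix $y=b_1\cdots b_k$, I will write $|\psi_k(y)\rangle \triangleq \pazocal{T}^{\scriptscriptstyle (k)}_{b_k}\cdots\pazocal{T}^{\scriptscriptstyle (1)}_{b_1}|\psi_0\rangle$. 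Finally, let $Z_k \triangleq \mathrm{span}\{|\psi_k(y)\rangle : y\in\{0,1\}^k,\ y\neq 0^k\}$.

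\emph{Step 1 (orthogonality propagates backward).} Let $S_k \triangleq \pazocal{T}^{\scriptscriptstyle (n)}_0\cdots\pazocal{T}^{\scriptscriptstyle (k+1)}_0$ be the all-zero suffix unitary. For any nonzero prefix $y$ of length $k$, the string $x = y0^{n-k}$ is nonzero, so $S_k|\psi_k(y)\rangle = |\psi(x)\rangle$ is orthogonal to $S_k|\phi_k\rangle = |\psi(0^n)\rangle$ by hypothesis. Since $S_k$ is unitary, it preserves inner products, so $|\phi_k\rangle \perp Z_k$ for every $k$.

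\emph{Step 2 (dimension grows by one per step).} The space $Z_{k+1}$ contains $\pazocal{T}^{\scriptscriptstyle (k+1)}_1 Z_k$, since appending a $1$ to a nonzero prefix gives a nonzero prefix. It also contains $\pazocal{T}^{\scriptscriptstyle (k+1)}_1|\phi_k\rangle$, since $0^k1 \neq 0^{k+1}$. Because $\pazocal{T}^{\scriptscriptstyle (k+1)}_1$ is unitary, the image $\pazocal{T}^{\scriptscriptstyle (k+1)}_1 Z_k$ has dimension $\dim Z_k$. By Step 1, the unit vector $\pazocal{T}^{\scriptscriptstyle (k+1)}_1|\phi_k\rangle$ is orthogonal to that image, so it lies outside it. Hence $\dim Z_{k+1}\ge \dim Z_k+1$. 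Starting from $Z_0=\{0\}$ (or directly from $Z_1 \ni \pazocal{T}^{\scriptscriptstyle (1)}_1|\psi_0\rangle \neq 0$), induction gives $\dim Z_n\ge n$.

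\emph{Step 3 (conclude).} The unit vector $|\psi(0^n)\rangle=|\phi_n\rangle$ is orthogonal to $Z_n$, which has dimension at least $n$. Therefore $\mathrm{span}(Z_n\cup\{|\psi(0^n)\rangle\})\subseteq\pazocal{H}$ has dimension at least $n+1$, giving $\dim(\pazocal{H})\ge n+1$.

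The only step that needs care is Step 2. A naive attempt would use $\pazocal{T}^{\scriptscriptstyle (k+1)}_0 Z_k$, but the new vector need not avoid that image, so there is no guaranteed gain. The key choice is to push both $Z_k$ and $|\phi_k\rangle$ through the \emph{same} unitary $\pazocal{T}^{\scriptscriptstyle (k+1)}_1$, which preserves their orthogonality and so guarantees the one-dimensional gain. This is exactly where reversibility enters, and why no such bound holds classically.
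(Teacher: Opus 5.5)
Your proof is correct and is essentially the paper's argument. Step 1 is the paper's pumping lemma: the all-zero trajectory state $|s_k\rangle$ is orthogonal to every nonzero-prefix state, which follows by appending the all-zero suffix and using unitarity. Step 2, once unrolled, builds exactly the paper's witnesses $|\psi(0^{n-k}1^k)\rangle$, and your step of pushing both objects through the same $\pazocal{T}^{\scriptscriptstyle (k+1)}_1$ plays the role of the paper's cancellation of the common $1$-suffix. The only difference is packaging: you count the dimension of the growing span $Z_k$ inductively, whereas the paper names the $n+1$ pairwise orthogonal final states explicitly.
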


\begin{proof}[Proof Sketch]
\vspace{-0.5ex}
We sketch the key idea; the full proof is in Appendix~\ref{app:unitary-dim-lb-proof}.
For each $k \,{\in}\, \{0,\ldots,n\}$, let $|s_k\rangle$ be the intermediate state after reading $k$ zeros along the all-zero string transition, and denote the unitary segment induced by reading $u\in\{0,1\}^{\ell}$ from step $i+1$ to $i+\ell$ as \smash{$\pazocal{T}^{[i]}_{u}$}. The key observation is a pumping-style orthogonality property:
{\setlength{\abovedisplayskip}{2pt}
\setlength{\belowdisplayskip}{3pt}
\setlength{\abovedisplayshortskip}{0pt}
\setlength{\belowdisplayshortskip}{0pt}
\begin{align*}
u\neq 0^\ell
\quad \text{implies} \quad
\langle s_{i+\ell}\mid \pazocal{T}^{[i]}_{u}\mid s_i\rangle = 0.
\end{align*}
}

\noindent
Otherwise, the input $0^i u\,0^{n-i-\ell}$ would produce a final state not orthogonal with \smash{$|\psi(0^n)\rangle$},
contradicting the assumption that \smash{$|\psi(0^n)\rangle$} is orthogonal to \smash{$|\psi(x)\rangle$} for every $x\neq 0^n$.

\smallskip
This property yields a counting argument. Consider the $n\,{+}\,1$ strings $x_k \,{\triangleq}\, 0^{n-k}1^k$, where $0 \,{\le}\, k \,{\le}\, n$, and let $|t_k\rangle\triangleq |\psi(x_k)\rangle$ be their final states.
We claim they are pairwise orthogonal.
First, $|t_0\rangle=|\psi(0^n)\rangle$ is orthogonal to every $|t_k\rangle$ with $k \,{\ge}\, 1$ by assumption.
For $1 \,{\le}\, i \,{<}\, j \,{\le}\, n$, factoring out of $\langle t_i \,|\, t_j\rangle$ the common suffix evolution of the last $i$ ones reduces it to \smash{$\langle s_{n-i}\mid \pazocal{T}^{[n-j]}_{1^{j-i}}\mid s_{n-j}\rangle$}, which is zero by the pumping property, since $1^{j-i} \,{\neq}\, 0^{j-i}$.
Hence, the $n+1$ states $|t_0\rangle,|t_1\rangle,\ldots,|t_n\rangle$ are pairwise orthogonal and $\pazocal{H}$ contains at least $n+1$ linearly independent vectors, so $\dim(\pazocal{H})\ge n+1$.
\end{proof}

\begin{theorem}
\label{thm:single-round-exist-upper-bound}
There exists a single-round strategy solving \emph{\textsc{ExistFail}} with $S \le \lceil \log(n \,{+}\, 1)\rceil \,{+}\, 1$.
\end{theorem}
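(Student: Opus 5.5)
We construct an explicit single-round strategy: a reversible counter of failures, as the introduction suggests. Let $k \triangleq \lceil \log(n+1)\rceil$, so that a $k$-qubit register can hold every integer in $\{0,\ldots,n\}$. The instrumentation allocates $m = k+1$ ancillas: a counter register $\textsf{cnt}$ of $k$ qubits and one flag qubit $f$. All assertions are enabled, $\mathcal{E} = [n]$. For each $i$, the assertion-handling block is $B_i = X_f \cdot \textsf{INC} \cdot C_{i\to f}$. The first gate $C_{i\to f}$ writes the outcome of $A_i$ into the clean flag. The second, $\textsf{INC}$, is an ancilla-only unitary that increments $\textsf{cnt}$ modulo $2^k$ controlled on $f$. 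The last, $X_f$, is a controlled uncomputation. More precisely, we replace the final step by a second checker call $C_{i\to f}$ so that the flag returns to $\ket{0}$. Calling the checker twice is allowed by Def.~\ref{def:instr}, since a block may contain any sequence of checker calls and ancilla-processing unitaries. The decoder measures $\textsf{cnt}$ and outputs ``a failure exists'' iff the outcome is nonzero.

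For correctness we use Lem.~\ref{lem:ancilla-evolution}, which says that on a program with deterministic failure pattern $F$ each checker call acts on the ancillas as $X_f^{F_i}$. The block for assertion $i$ therefore acts as $X_f^{F_i}\,\textsf{INC}_{\text{ctrl }f}\,X_f^{F_i}$. Starting from $\ket{c}_{\textsf{cnt}}\ket{0}_f$, this produces $\ket{c + F_i \bmod 2^k}_{\textsf{cnt}}\ket{0}_f$. By induction on $i$, the final ancilla state is the computational basis state $\ket{\sum_i F_i \bmod 2^k}\ket{0}$. Since $\sum_i F_i \le n < 2^k$, no wraparound occurs. The counter is therefore $0$ iff $F = 0^n$, and measurement reveals this with certainty. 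This gives $S = k+1 = \lceil\log(n+1)\rceil + 1$.

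The incrementer must also be realizable with the permitted gates. The controlled modular increment on $k$ qubits is a permutation of basis states, so it is unitary. It decomposes into the standard cascade of multi-controlled NOTs. For $j = k-1$ down to $0$, flip bit $j$ controlled on $f$ and bits $0,\ldots,j-1$ all being $1$. This uses only $\mathrm{C}^{\ell}\mathrm{NOT}$ gates and no extra ancillas. The only subtle point is the flag reset. Without the second checker call, the flag would retain $F_i$ and the next assertion would need a fresh qubit. Uncomputing via a repeated checker call is what keeps the space logarithmic. We could also avoid the reset entirely: use $C_{i\to f}$, then $\textsf{INC}$ controlled on $f$, then XOR the flag into itself through the counter parity. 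The recomputation approach is cleaner, so we adopt it.
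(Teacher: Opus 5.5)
Your construction is the paper's single-round modulo-increment strategy: the same $\lceil\log(n+1)\rceil$-qubit counter, one flag, and block $C_{i\to f};\,V_{\mathrm{CINC}};\,C_{i\to f}$. The only difference is that you measure the whole counter, whereas the paper appends $V_{\neq 0}$ to route ``counter nonzero'' into the flag and measures that single qubit; this raises $M$ from $1$ to $\lceil\log(n+1)\rceil$ but leaves $S$ unchanged. Your discarded aside about resetting the flag via counter parity would not work, because for $i\ge 2$ the reachable states $(c,f)=(1,0)$ and $(1,1)$ would both have to map to $(1,0)$, which no unitary can do. Since you adopt the second checker call instead, the proof is correct.
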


\begin{proof}
\vspace{-0.25ex}
We propose a single-round \textsc{ExistFail} strategy using
$S=\lceil \log(n \,{+}\, 1)\rceil \,{+}\, 1$ ancilla qubits:

\begin{strategy}[\textsc{\emph{Single-Round Modulo Increment}}]
\label{str:single-round-modulo-increment}
\vspace{-0.5ex}
The instrumentation $\pazocal{I}$ partitions $\textsf{anc}$ into a \emph{counter register} $\textsf{ctr}$ of size $\ell\triangleq \lceil \log(n \,{+}\, 1)\rceil$, intended to encode a value in $\{0,1,\ldots,n\}$, along with a \emph{single-qubit flag} $\textsf{fail}$, both initialized to zero. For each assertion position $i$, $\pazocal{I}$ inserts these steps:
\begin{enumerate}[topsep=1pt,itemsep=1pt,leftmargin=2em]
\item[(1)] apply the checker unitary $C_{i}$ with $\textsf{fail}$ as the designated target, denoted by $C_{i \to \textsf{fail}}$;
\item[(2)] apply ancilla-only unitary $V_{\mathrm{CINC}}$ that, conditioned on $\textsf{fail}=\ket{1}$, increments $\textsf{ctr}$ by $1$ modulo $2^\ell$;
\item[(3)] apply $C_{i}$ with $\textsf{fail}$ again to uncompute $\textsf{fail}$ back to $\ket{0}$. 
Formally, the assertion-handling block
\end{enumerate}
{\setlength{\abovedisplayskip}{3pt}
\setlength{\belowdisplayskip}{3pt}
\setlength{\abovedisplayshortskip}{0pt}
\setlength{\belowdisplayshortskip}{0pt}
\begin{align*}
B_i  ~=~
C_{i \to\textsf{fail}};
\ V_\mathrm{CINC};
\ C_{i \to\textsf{fail}}.
\end{align*}
}

\noindent
In addition, the handling block $B_n$ of the final assertion appends one more ancilla-only unitary $V_{\neq 0}$, which routes into \textsf{fail} whether $\textsf{ctr}$ is nonzero.\footnote{Formally, $V_{\neq 0} \triangleq |0^\ell\rangle\langle 0^\ell| \otimes I + (I - |0^\ell\rangle\langle 0^\ell|) \otimes X$, acting on $(\textsf{ctr}, \textsf{fail})$, with \textsf{ctr} as control and \textsf{fail} as target.}
At the end of executing the transformed program, \textsf{ctr} equals the number of failing assertions, so $\textsf{fail} = \ket{1}$ iff some assertion fails. Thus, measuring the single qubit \textsf{fail} at termination and applying $\textsf{Dec}$ that outputs its value solves \textsc{ExistFail}.
\end{strategy}

\vspace{-0.5ex}
Strategy~\ref{str:single-round-modulo-increment} uses $S=\lceil \log(n \,{+}\, 1)\rceil \,{+}\, 1$ ancilla qubits, which proves the theorem.
\end{proof}

\begin{example}
To visualize how Strategy~\ref{str:single-round-modulo-increment} operates, let the program contain $n \,{=}\, 4$ assertions and suppose the failure pattern is $F \,{=}\, (0,1,0,1)$.
Then $\ell \,{=}\, \lceil \log 5\rceil \,{=}\, 3$, so the strategy uses a $3$-qubit counter $\textsf{ctr}$ and a one-qubit flag $\textsf{fail}$.
All passing assertions leave $\textsf{ctr}$ unchanged, while failing assertions increment it by one.
Therefore, the evolution over ancillas is

\vspace{-2ex}
{\small
\setlength{\fboxsep}{0pt}
\setlength{\fboxrule}{0pt}
\[
\begin{aligned}
\ket{000}_{\textsf{ctr}}\ket{0}_{\textsf{fail}}\,
&\xrightarrow{\cdots} \,
\fcolorbox{gray!100}{gray!7.5}{$ \xrightarrow[\text{unchanged}]{\text{A}_1~\text{pass}} \ket{000}\ket{0} $\,} \,\xrightarrow{\cdots} \, \fcolorbox{gray!100}{gray!7.5}{$ \xrightarrow[\text{write \textsf{fail}}]{C_{2 \to \textsf{fail}}} \ket{000}\ket{1} \xrightarrow[\text{increment}]{V_{\mathrm{CINC}}} \ket{001}\ket{1} \xrightarrow[\text{uncompute \textsf{fail}}]{C_{2 \to \textsf{fail}}} \ket{001}\ket{0}$} \,\,\,A_2 \text{ fail}
\\[0ex]
&\xrightarrow{\cdots} \,
\fcolorbox{gray!100}{gray!7.5}{$\displaystyle \xrightarrow[\text{unchanged}]{\text{A}_3~\text{pass}} \ket{001}\ket{0} $\,} \, \xrightarrow{\cdots} \,\fcolorbox{gray!100}{gray!7.5}{$\displaystyle \xrightarrow[\text{write \textsf{fail}}]{C_{4 \to \textsf{fail}}} \ket{001}\ket{1} \xrightarrow[\text{increment}]{V_{\mathrm{CINC}}} \ket{010}\ket{1} \xrightarrow[\text{uncompute \textsf{fail}}] {C_{4 \to \textsf{fail}}} \ket{010}\ket{0}$} \,\,\,A_4 \text{ fail}
\\[0ex]
&\xrightarrow{\cdots} \,
|010\rangle\,|0\rangle \xrightarrow[\text{route } \textsf{ctr} \neq 0]{V_{\neq 0}} |010\rangle\,|1\rangle \xrightarrow{\text{measure \textsf{fail}}} 1 \xrightarrow{\textsf{Dec}} \text{true (exist)}
\end{aligned}
\]}

\noindent
where the final \textsf{ctr} value is 2, so $V_{\neq 0}$ sets $\textsf{fail} = \ket{1}$ and measuring it outputs true for \textsc{ExistFail}.
\end{example}

\paragraph{Cost of Strategy~\ref{str:single-round-modulo-increment}}
The strategy has $T \,{=}\, 1$ and $S \,{=}\, \lceil \log(n \,{+}\, 1)\rceil \,{+}\, 1$.
The checker unitary is invoked twice per assertion, so $C\,{=}\,2$.
Only \textsf{fail} is measured at the end, so $M = 1$.
For gate cost, each assertion applies one controlled increment on an $\ell$-qubit counter, where $\ell \,{=}\, \lceil \log(n+1)\rceil$.
A staircase of $\ell$ multi-controlled gates implements this operation with no additional workspace. 
The single unitary $V_{\neq 0}$ adds $O(\ell)$ gates. 
Thus, the total non-checker gate cost is \smash{$G = n\cdot {O}(\ell) + {O}(\ell) ={O}(n\log n)$}.

\begin{theorem}
\label{thm:single-round-exist-tightness}
The single-round space complexity of \emph{\textsc{ExistFail}} satisfies $S = \Theta(\log n)$.
\end{theorem}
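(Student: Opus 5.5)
This follows directly by combining the two preceding results, so the plan is to sandwich $S$ between two functions of $n$ that are both $\Theta(\log n)$. I interpret the statement as concerning the optimal space: $S^\star(n)$, the minimum number of ancillas over all single-round strategies that solve \textsc{ExistFail} on programs with $n$ assertions.

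For the lower bound, Thm.~\ref{thm:single-round-exist-lower-bound} shows that every single-round strategy solving \textsc{ExistFail} has $S \ge \lceil \log(n+1)\rceil \ge \log n$. Hence $S^\star(n) = \Omega(\log n)$. This rests on Lemma~\ref{lem:reduction}, Lemma~\ref{lem:answer-separation}, and Lemma~\ref{thm:unitary-dim-lb}, which we may assume.

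For the upper bound, Thm.~\ref{thm:single-round-exist-upper-bound} exhibits Strategy~\ref{str:single-round-modulo-increment}, which uses $\lceil \log(n+1)\rceil + 1$ ancillas. For $n \ge 2$ we have $\lceil\log(n+1)\rceil + 1 \le \log(n+1) + 2 \le 3\log n$, so $S^\star(n) = O(\log n)$.

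Combining the two gives $S^\star(n) = \Theta(\log n)$. The only subtlety is the small-$n$ edge case $n=1$, where $\log n = 0$. I would handle it by stating the asymptotic claim for $n \ge 2$, or by reading $\log n$ as $\log(n+1)$. There is no genuine obstacle, since all the substantive work lives in the two earlier theorems.
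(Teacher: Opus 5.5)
Your proof is correct and matches the paper's, which likewise obtains the result immediately by combining the lower bound $S \ge \lceil \log(n+1)\rceil$ with the $\lceil \log(n+1)\rceil + 1$ upper bound of the modulo-increment strategy. One trivial slip: the intermediate inequality $\log(n+1) + 2 \le 3\log n$ fails at $n = 2$ (it holds for $n \ge 3$), but your final inequality $\lceil\log(n+1)\rceil + 1 \le 3\log n$ still holds there, so the asymptotic conclusion is unaffected.
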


\begin{proof}
Immediate by Theorem~\ref{thm:single-round-exist-lower-bound} and Theorem~\ref{thm:single-round-exist-upper-bound}.
\end{proof}

%% file: main/exist-disjoint-multi.tex
\subsection{Disjoint Multi-Round Time-Space Trade-Off for \textsc{ExistFail}}
\label{sec:disjoint-multi-round-exist}

We first prove a $(2^S-1)\cdot T = \Omega(n)$ lower bound in Thm.~\ref{thm:multi-round-disjoint-exist-lower-bound}, then give a matching upper bound in Thm.~\ref{thm:multi-round-disjoint-exist-upper-bound}. Consequently, the complexity is a trade-off of form $S = \Theta(\log (1 + \frac{n}{T}))$ in Thm~\ref{thm:multi-round-disjoint-exist-complexity}.

\begin{theorem}
\label{thm:multi-round-disjoint-exist-lower-bound}
For any disjoint multi-round strategy (as defined in Def.~\ref{def:multi-stra}) that solves \emph{\textsc{ExistFail}}, it holds that
$(2^S-1)\cdot T \ge n$, with $n$ the number of assertions. 
\end{theorem}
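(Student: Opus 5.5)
We reduce to the one-round lower bound of Lem.~\ref{thm:unitary-dim-lb}, applied separately to each round. Fix a disjoint $T$-round strategy solving \textsc{ExistFail}, with enabled sets $\mathcal{E}_1,\dots,\mathcal{E}_T$ that are pairwise disjoint. We may assume every assertion is enabled in some round: if some index $i$ lies in no $\mathcal{E}_t$, then the patterns $0^n$ and $e_i$ (a single one at position $i$) produce identical transcripts in every round, because $W^{t,i}_0=W^{t,i}_1=I$. Their correct answers differ, so this contradicts Lem.~\ref{lem:answer-separation}. Hence the sets $\mathcal{E}_t$ partition $[n]$, and $\sum_t |\mathcal{E}_t| = n$. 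By Lem.~\ref{lem:reduction}, we obtain a $T$-round unitary transition system whose round-$t$ space has dimension $2^{S_t}\le 2^S$.

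The key step is to show that for each round $t$, with $n_t \triangleq |\mathcal{E}_t|$, we have $2^{S_t} \ge n_t+1$. Fix $t$ and consider only inputs $x$ supported on $\mathcal{E}_t$, that is, inputs with $x_i=0$ for every $i\notin\mathcal{E}_t$. For any other round $t'\neq t$, disjointness means every index of $\mathcal{E}_{t'}$ reads a zero under such an input. Since indices outside $\mathcal{E}_{t'}$ act as the identity in round $t'$, the final state $|\psi^{t'}(x)\rangle$ equals $|\psi^{t'}(0^n)\rangle$ for all such $x$. Now take any such $x\neq 0^n$. The answers on $x$ and $0^n$ differ, so Lem.~\ref{lem:answer-separation} supplies a round whose final states on $x$ and $0^n$ are orthogonal. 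That round cannot be any $t'\neq t$, because there the two states coincide and are nonzero. So it must be round $t$, giving $\langle\psi^t(0^n)\mid\psi^t(x)\rangle=0$.

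Next we restrict round $t$ to a one-round transition system on input length $n_t$. Its transition unitaries are the pairs $W^{t,i}_0,W^{t,i}_1$ for $i\in\mathcal{E}_t$, taken in increasing order of $i$, and the identity positions are dropped. Every nonzero input of length $n_t$ corresponds to some $x\neq 0^n$ supported on $\mathcal{E}_t$. So the hypothesis of Lem.~\ref{thm:unitary-dim-lb} holds for this restricted system, and it yields $2^{S_t}\ge n_t+1$, hence $2^S-1\ge n_t$.

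Summing over all rounds gives $T(2^S-1)\ge\sum_t n_t = n$. The main point requiring care is the second step: the answer-separation witness has to be pinned to round $t$ by using disjointness, and the restriction must be checked to be a legitimate transition system. Neither is deep. The only other subtlety is the unenabled-assertion case, which the first step handles.
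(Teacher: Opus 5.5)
Your proposal is correct and follows the paper's proof essentially step for step: you show every index is enabled in some round (so the disjoint $\mathcal{E}_t$ partition $[n]$), use disjointness to pin the round supplied by Lemma~\ref{lem:answer-separation} to round $t$, and apply Lemma~\ref{thm:unitary-dim-lb} to the restricted round-$t$ system before summing. The only cosmetic difference is that you bound each $2^{S_t}-1$ by $2^S-1$ before summing, whereas the paper first establishes $\sum_t(2^{S_t}-1)\ge n$ and then bounds each term.
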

\begin{proof}
\vspace{-0.5ex}
Let $\mathcal{E}_t$ be the enabled assertion indices of the instrumentation $\pazocal{I}_t$, and let $L_t \triangleq |\mathcal{E}_t|$.
Such a strategy in particular solves \textsc{ExistFail} over all $2^n$ deterministic failure patterns, so by Lemma~\ref{lem:reduction} it yields a $T$-round unitary transition system solving \textsc{Exist}, whose round-$t$ final state on input $x$ is $|\psi^{t}(x)\rangle = \big(W^{t,n}_{x_n} \cdots W^{t,1}_{x_1}\big)|0^{S_t}\rangle$, where, by their construction in Lemma~\ref{lem:ancilla-evolution}, $W^{t,i}_0 = W^{t,i}_1 = I$ for every $i \,{\notin}\, \mathcal{E}_t$.
Hence $|\psi^{t}(x)\rangle$ depends only on the bits of $x$ within $\mathcal{E}_t$.

\vspace{0.25ex}
First, every index must belong to some $\mathcal{E}_t$: if $i \,{\notin}\, \mathcal{E}_t$ for all $t$, then the all-zero pattern and the pattern in which only the $i$-th assertion fails induce identical final states in every round, hence identically distributed transcripts, and no decoder can answer \textsc{Exist} correctly on both.
Because the strategy is disjoint, all $\mathcal{E}_t$ therefore form a partition of $[n]$, and \smash{$\textstyle\sum_{t=1}^T L_t = n$}.

\vspace{0.25ex}
Next, fix a round $t$, and let $x$ be any nonzero pattern with $x_i = 0$ for all $i \notin \mathcal{E}_t$; we claim that $\langle \psi^t(0^n)\,|\,\psi^t(x)\rangle = 0$. The correct answers to
\textsc{Exist} on $0^n$ and $x$ differ, so by Lemma~\ref{lem:answer-separation} some round's final states on the two inputs have disjoint basis supports. For every $t' \neq t$, disjointness of the strategy gives $x_i = 0$ for all $i \in \mathcal{E}_{t'}$; since the round-$t'$
final state depends only on the bits within $\mathcal{E}_{t'}$, we get
$|\psi^{t'}(x)\rangle \,{=}\, |\psi^{t'}(0^n)\rangle$ --- identical states, whose supports coincide and are in particular not disjoint. The separating round must therefore be $t$, and the claimed orthogonality follows.

\vspace{0.25ex}
Now reading only the positions in $\mathcal{E}_t$ in increasing order, round $t$ is a one-round unitary transition system for inputs of length $L_t$ (positions outside $\mathcal{E}_t$ contribute the identity), and the claim above is exactly the orthogonality hypothesis of Lemma~\ref{thm:unitary-dim-lb} with $n$ replaced by $L_t$.
Hence $2^{S_t} = \dim(\pazocal{H}_{\textsf{anc}_t}) \ge L_t + 1$, i.e., $2^{S_t} - 1 \ge L_t$.
Summing over all rounds gives $\textstyle\sum_{t=1}^T (2^{S_t}-1)\ge \textstyle\sum_{t=1}^T L_t = n$.
Finally, we have $2^{S_t} {-}\, 1 \,{\le}\, 2^S {-}\, 1$ for every $t$, and thus $T \cdot (2^S {-}\, 1) \,{\ge}\, \textstyle\sum_{t=1}^T (2^{S_t} {-}\, 1) \,{\ge}\, n$.
\end{proof}

\begin{theorem}
\vspace{-0.5ex}
\label{thm:multi-round-disjoint-exist-upper-bound}
For all $T\le n$, there exists a disjoint multi-round strategy solving \emph{\textsc{ExistFail}} with $(2^S \,{-}\, 1) \,{\cdot}\, T \,{<} 11n$.
\end{theorem}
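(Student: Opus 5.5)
The plan is to build the strategy by splitting the assertions into $T$ blocks and running Strategy~\ref{str:single-round-modulo-increment} separately on each block. Partition $[n]$ into $T$ contiguous, pairwise disjoint blocks $\mathcal{E}_1,\dots,\mathcal{E}_T$ of balanced sizes $L_t \in \{\lfloor n/T\rfloor, \lceil n/T\rceil\}$. Since $T \le n$, every block is nonempty and $L \triangleq \max_t L_t = \lceil n/T\rceil \le n/T + 1$.

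In round $t$, instrumentation $\pazocal{I}_t$ enables exactly $\mathcal{E}_t$. It applies the single-round modulo-increment construction to the assertions of $\mathcal{E}_t$, read in increasing order:
\begin{itemize}
\item each enabled assertion gets the block $C_{i\to\textsf{fail}};\ V_{\mathrm{CINC}};\ C_{i\to\textsf{fail}}$;
\item the counter has $\ell_t = \lceil \log(L_t+1)\rceil$ qubits, with one extra flag qubit;
\item the last enabled assertion of the block appends $V_{\neq 0}$.
\end{itemize}

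Correctness follows from Lem.~\ref{lem:invariance}. The program register is unaffected by checking, so round $t$ behaves exactly like Strategy~\ref{str:single-round-modulo-increment} on a program whose assertions are those of $\mathcal{E}_t$. The counter therefore ends at the number of failures in $\mathcal{E}_t$, which is at most $L_t < 2^{\ell_t}$, so it never wraps around to zero. Hence the measured flag $R_t$ equals $1$ iff some assertion in $\mathcal{E}_t$ fails. The decoder outputs $\bigvee_t R_t$. Because the blocks partition $[n]$, this is correct with certainty, and the strategy is disjoint (indeed contiguous) by construction.

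It remains to bound the cost, which is elementary. We have $S = \max_t(\ell_t+1) = \lceil\log(L+1)\rceil + 1$. For any integer $m \ge 1$, the power $2^{\lceil \log m\rceil}$ is an integer strictly below $2m$, hence at most $2m-1$. Taking $m = L+1$ gives $2^{\lceil\log(L+1)\rceil} \le 2L+1$. Therefore
\[
2^S - 1 \le 4L+1 \le 4(n/T+1)+1,
\]
and so $(2^S-1)\,T \le 4n + 5T \le 9n < 11n$, using $T \le n$.

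The only delicate points are these:
\begin{itemize}
\item using balanced block sizes together with $T \le n$, so that no round is empty and $L \le n/T+1$;
\item checking that the modulo-$2^{\ell_t}$ counter cannot overflow to zero, which holds because $L_t \le 2^{\ell_t}-1$.
\end{itemize}
I expect the rounding slack in the final inequality to be the main place to take care, and it has ample room under the constant $11$.
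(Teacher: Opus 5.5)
Your proposal is correct and matches the paper's proof: the same partitioned modulo-increment strategy (Strategy~\ref{str:partitioned-modulo-increment}) with an OR decoder, and the same arithmetic using $L=\lceil n/T\rceil\le n/T+1$ and $T\le n$. The only difference is minor: you use the integrality bound $2^{\lceil\log m\rceil}\le 2m-1$, which tightens the paper's $(2^S-1)T<4n+7T$ to $(2^S-1)T\le 4n+5T\le 9n$.
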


\begin{proof}
\vspace{-1ex}
We construct the following strategy:

\vspace{-0.5ex}
\begin{strategy}[\textsc{\emph{Partitioned Modulo Increment}}]
\label{str:partitioned-modulo-increment}
Partition the $n$ assertions into $T$ disjoint contiguous index sets $\mathcal{E}_1, \ldots, \mathcal{E}_T$, each of size at most $\lceil n/T\rceil$.
In round $t$, use the instrumentation from the \emph{single-round modulo-increment strategy} (Strategy~\ref{str:single-round-modulo-increment}) on $\mathcal{E}_t$ only, using a counter register of size \smash{$\left\lceil \log(\lceil n/T\rceil \,{+}\, 1)\right\rceil$} and one flag qubit $\textsf{fail}$, with $V_{\neq 0}$ appended to the block of the last assertion in $\mathcal{E}_t$. Each round measures $\textsf{fail}$, and after all $T$ rounds, outputs true iff some round measures $1$.
\end{strategy}

\vspace{-1ex}
We now analyze the trade-off.
Let $L \,{\triangleq}\, \lceil n/T\rceil$.
Each round uses at most $S \,{=}\, \lceil \log(L \,{+}\, 1)\rceil \,{+}\, 1$ ancillas.
$2^S = 2^{\lceil \log(L \,{+}\, 1)\rceil \,{+}\, 1} < 4 (L+1)$.
Hence, $2^S {-}\, 1 \,
{<} \, 4L\,{+}\,3$. Multiply by $T$ to get $(2^S{-}\,1)\cdot T \,{<}\, (4L \,{+}\, 3)T$. Using $L \le n/T \,{+}\, 1$, we obtain $(2^S{-}\,1)\cdot T \,{<}\, 4n\,{+}\,7T$.
Because every round contains at least one assertion, we have $T\le n$, and therefore $(2^S-1)\cdot T < 11n$.
\end{proof}

\paragraph{Cost of Strategy~\ref{str:partitioned-modulo-increment}}
\vspace{-0.5ex}
For any $T\le n$, the strategy uses
\smash{$S \,{=}\, \left\lceil \log(\lceil n/T\rceil \,{+}\, 1)\right\rceil \,{+}\, 1$}
ancillas per round.
Each assertion is checked in exactly one round, and its checker unitary is invoked twice, so $C \,{=}\, 2$.
$M = T$, since only \textsf{fail} is measured.
Each enabled assertion contributes a controlled increment on a counter of bit width \smash{$\ell \,{=}\, \left\lceil \log(\lceil n/T\rceil \,{+}\, 1)\right\rceil$}.
This operation costs ${O}(\ell)$ elementary gates and each round adds $O(\ell)$ for $V_{\neq 0}$. Summing over all rounds yields $G={O}(n\ell)={O}\left(n\log(\lceil n/T\rceil \,{+}\, 1)\right)$.

\begin{theorem}
\label{thm:multi-round-disjoint-exist-complexity}
For all $T \,{\le}\, n$, the disjoint multi-round complexity of \emph{\textsc{ExistFail}} is $S = \Theta\left(\log (1 + \frac{n}{T})\right)$.
\end{theorem}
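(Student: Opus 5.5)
The plan is to combine the two preceding theorems and convert each into a bound on $S$ of the form $\log(1+n/T)$ up to constant factors. Throughout, $T\le n$ and $n\ge 1$, so $n/T\ge 1$ and $\log(1+n/T)\ge 1$; this lets additive constants be absorbed into multiplicative ones.

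\emph{Lower bound.} Take any disjoint multi-round strategy solving \textsc{ExistFail} with $T$ rounds. Theorem~\ref{thm:multi-round-disjoint-exist-lower-bound} gives $(2^S-1)\,T\ge n$. Rearranging yields $2^S\ge 1+n/T$, so $S\ge \log(1+n/T)$. This is already $S=\Omega(\log(1+n/T))$ with constant $1$.

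\emph{Upper bound.} Here I use Strategy~\ref{str:partitioned-modulo-increment} from Theorem~\ref{thm:multi-round-disjoint-exist-upper-bound}. I would work with its explicit ancilla count rather than the aggregate bound $(2^S-1)T<11n$, since that is more direct. The count is $S=\lceil\log(\lceil n/T\rceil+1)\rceil+1$.
\begin{itemize}
\item Since $\lceil n/T\rceil\le n/T+1$, the inner quantity is at most $n/T+2\le 2(1+n/T)$.
\item Hence $S\le \log(1+n/T)+3$.
\item Using $\log(1+n/T)\ge 1$, this gives $S\le 4\log(1+n/T)$.
\end{itemize}
Alternatively, from $(2^S-1)T<11n$ one gets $2^S<1+11n/T\le 11(1+n/T)$, so $S<\log 11+\log(1+n/T)$, and the same absorption applies.

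No step is a real obstacle. The only point needing care is that the $\Theta$ is meant uniformly over the pair $(n,T)$ with $T\le n$. This is why absorbing the additive constant needs $\log(1+n/T)$ to be bounded below by a positive constant, which the hypothesis $T\le n$ provides. I would also note explicitly that $T\le n$ is needed for the strategy to exist, since every round must check at least one assertion.
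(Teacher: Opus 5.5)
Your proposal is correct and follows the paper's proof: the lower bound comes from $(2^S-1)T\ge n$ exactly as you derive it, and the paper's upper bound is your ``alternative'' route, using $(2^S-1)T<11n$ to get $S<\log(1+11n/T)=O(\log(1+n/T))$. Your explicit-count calculation, and your remark that $T\le n$ gives $\log(1+n/T)\ge 1$ so the additive constant can be absorbed, spell out a step the paper leaves implicit.
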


\begin{proof}
\vspace{-1ex}
For the lower bound side, Thm.~\ref{thm:multi-round-disjoint-exist-lower-bound} shows $(2^S-1)T \ge n$, hence
$\textstyle S \ge \log(1+\frac{n}{T})$.

For the upper bound side, Thm.~\ref{thm:multi-round-disjoint-exist-upper-bound} gives a strategy satisfying $(2^S-1)T < 11n$, which implies
{\setlength{\abovedisplayskip}{3pt}
\setlength{\belowdisplayskip}{2pt}
\setlength{\abovedisplayshortskip}{0pt}
\setlength{\belowdisplayshortskip}{0pt}
\[
    \textstyle S < \log  (1+\frac{11n}{T}) = O  \bigl(\log  (1+\frac{n}{T})\bigr). \text{ Therefore } S = \Theta  \left(\log  (1+\frac{n}{T})\right). \qedhere
\]}%
\end{proof}

%% file: main/exist-general-multi.tex
\subsection{General Multi-Round Time-Space Trade-Off for \textsc{ExistFail}}
\label{sec:general-multi-round-exist}
In the most general case, the enabled assertion indices may overlap arbitrarily across rounds, so that a failure pattern can be encoded differently in different rounds; the natural complexity measure is then the product $S \,{\cdot}\, T$.
We prove an $S \,{\cdot}\, T \,{\ge}\, \log(n\,{+}\,1)$ lower bound (Thm.~\ref{thm:multi-round-general-exist-lower-bound}) and construct a family of parameterized upper bound strategies (Thm.~\ref{thm:multi-round-general-exist-upper-bound-lcm}).
Thm.~\ref{thm:multi-round-general-exist-loglog-tight} shows this construction is asymptotically tight for $T \,{=}\, O\left({\log n}/{\log\log n}\right)$, so the general multi-round complexity of \textsc{ExistFail} satisfies \smash{$S = \Theta\big(\frac{1}{T}\log n\big)$} throughout this regime.

\begin{theorem}
\label{thm:multi-round-general-exist-lower-bound}
For any multi-round strategy (as defined in Def.~\ref{def:multi-stra}) that solves \emph{\textsc{ExistFail}}, it holds that $S\cdot T \ge \log(n+1)$, with $n$ the number of assertions.
\end{theorem}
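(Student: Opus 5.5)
We plan to reduce the problem to the single-round setting by running all $T$ rounds in parallel on a single, larger ancilla register. By Lemma~\ref{lem:reduction}, a $T$-round strategy solving \textsc{ExistFail} yields a $T$-round unitary transition system $(\pazocal{U}^1_n,\ldots,\pazocal{U}^T_n)$ solving \textsc{Exist}, with round-$t$ dimension $2^{S_t}\le 2^S$. From it we form the tensor-product one-round system
\[
\pazocal{U}^{\otimes}_n \triangleq \Big(\textstyle\bigotimes_{t}\pazocal{H}^t,\ \{\textstyle\bigotimes_t \pazocal{T}^{t,(i)}_0\}_{i},\ \{\textstyle\bigotimes_t \pazocal{T}^{t,(i)}_1\}_{i},\ \textstyle\bigotimes_t|\psi^t_0\rangle\Big).
\]
Its transitions are unitary, and its final state on input $x$ is $|\Psi(x)\rangle=\bigotimes_t|\psi^t(x)\rangle$. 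The dimension of this system is $\prod_t 2^{S_t}\le 2^{S\cdot T}$.

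Next we verify that $\pazocal{U}^{\otimes}_n$ satisfies the orthogonality hypothesis of Lemma~\ref{thm:unitary-dim-lb}. Fix any $x\neq 0^n$. The correct \textsc{Exist} answers on $0^n$ and $x$ differ, so Lemma~\ref{lem:answer-separation} gives a round $t$ with $\langle\psi^t(0^n)|\psi^t(x)\rangle=0$. Since inner products factor over tensor products,
\[
\langle\Psi(0^n)|\Psi(x)\rangle=\prod_{t'}\langle\psi^{t'}(0^n)|\psi^{t'}(x)\rangle=0.
\]
The separating round may depend on $x$, but this does not matter, because a single zero factor suffices for each $x$. An alternative route is to note that measuring the tensor-product state in the computational basis reproduces the joint transcript distribution, so the same decoder solves \textsc{Exist} in one round. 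Either way, Lemma~\ref{thm:unitary-dim-lb} applies.

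Lemma~\ref{thm:unitary-dim-lb} then gives $2^{S\cdot T}\ge\dim\ge n+1$, hence $S\cdot T\ge\log(n+1)$. This holds with no disjointness assumption, since overlaps are harmless in the tensor construction. Every step is routine. The one point needing care is confirming that the product system fits Definition~\ref{def:unitary-transition-system}, namely that positions enabled in only some rounds act as identity on the other factors. Lemma~\ref{lem:ancilla-evolution} already guarantees this, since it sets $W^{t,i}_0=W^{t,i}_1=I$ for non-enabled $i$.
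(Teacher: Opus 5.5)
Your proposal is correct and matches the paper's proof: it also tensors the $T$ rounds into a one-round system $\pazocal{U}^{\Pi}_n$ of dimension $2^{\sum_t S_t}\le 2^{ST}$, obtains $\langle \psi^{\Pi}(0^n)\,|\,\psi^{\Pi}(x)\rangle = 0$ for all $x\neq 0^n$, and applies Lemma~\ref{thm:unitary-dim-lb}. The only nuance is that the paper gets the orthogonality via your ``alternative route'' (the product system solves \textsc{Exist} in one round, then Lemma~\ref{lem:answer-separation} with $T=1$), whereas your primary route takes the per-round orthogonality and uses the fact that inner products factor over tensor products; both are valid.
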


\begin{proof}
\vspace{-0.5ex}
Such a strategy solves \textsc{ExistFail} over all $2^n$ deterministic failure patterns. Therefore, by Lemma~\ref{lem:reduction} it yields a $T$-round unitary transition system $(\pazocal{U}^{1}_n, \ldots, \pazocal{U}^{T}_n)$ solving \textsc{Exist}, where $\pazocal{U}^{t}_n = (\pazocal{H}_{\textsf{anc}_t},\{W^{t,i}_0\}_{i=1}^n,\{W^{t,i}_1\}_{i=1}^n,|0^{S_t}\rangle)$ is built from the ancilla-only unitaries of Lemma~\ref{lem:ancilla-evolution}, and the transcript $(r_1, \ldots, r_T)$ is decoded by \textsf{Dec}.
Now form a one-round unitary transition system
{\setlength{\abovedisplayskip}{3pt}
\setlength{\belowdisplayskip}{2pt}
\setlength{\abovedisplayshortskip}{0pt}
\setlength{\belowdisplayshortskip}{0pt}
\begin{align*}
\pazocal{U}^{\Pi}_n=
(\pazocal{H}^{\Pi},\{\Pi^{\scriptscriptstyle (i)}_0\}_{i=1}^n,\{\Pi^{\scriptscriptstyle (i)}_1\}_{i=1}^n,|0^{\Pi}\rangle),
\end{align*}
}
\noindent
where $\pazocal{H}^{\Pi} \,{\triangleq}\, {\text{\small $\textstyle \bigotimes_{t=1}^{T}$}} \pazocal{H}_{\textsf{anc}_t}$, $|0^{\Pi}\rangle \,{\triangleq}\, {\text{\small $\textstyle \bigotimes_{t=1}^{T}$}} |0^{S_t}\rangle$, and $\Pi^{\scriptscriptstyle (i)}_b \,{\triangleq}\, {\text{\small $\textstyle \bigotimes_{t=1}^{T}$}} W^{t,i}_b$.
Since the position-$i$ update acts round-wise, the final state of $\pazocal{U}^{\Pi}_n$ on input $x \,{\in}\, \{0,1\}^n$ factorizes as
{\setlength{\abovedisplayskip}{3pt}
\setlength{\belowdisplayskip}{2pt}
\setlength{\abovedisplayshortskip}{0pt}
\setlength{\belowdisplayshortskip}{0pt}
\begin{align*}
|\psi^{\Pi}(x)\rangle= \textstyle \bigotimes_{t=1}^T |\psi^{t}(x)\rangle,
\end{align*}
}

\noindent
namely, the tensor product of the final states produced by each round on input $x$.
Measuring $|\psi^{\Pi}(x)\rangle$ in the computational basis and grouping the outcome bits into $T$ blocks is equivalent to measuring each round's final state independently; thus applying \textsf{Dec} to the grouped outcome of $\pazocal{U}^{\Pi}_n$ reproduces the transcript decoding, and $\pazocal{U}^{\Pi}_n$ solves \textsc{Exist} as a one-round system.
By Lemma~\ref{lem:answer-separation}, $\langle \psi^\Pi(0^n)\,|\,\psi^\Pi(x)\rangle = 0$ for every $x \neq 0^n$.
Applying Lemma~\ref{thm:unitary-dim-lb} to $\pazocal{U}^{\Pi}_n$ gives \smash{$\dim(\pazocal{H}^{\Pi}) \,{\ge}\, n \,{+}\, 1$}.
On the other hand, \smash{$\dim(\pazocal{H}^{\Pi}) = \prod_{t=1}^T 2^{S_t} = 2^{\sum_{t=1}^T S_t} \le 2^{ST}$}. Hence, $2^{ST}\ge n+1$, i.e., $S\cdot T \ge \log(n+1)$.
\end{proof}

\vspace{-1ex}
To complement the lower bound, we next give a construction based on \emph{multi-modulus fingerprinting}: each round enables all assertions, but records the total number of failures modulo a different integer.
The decoder checks whether all recorded residues are zero.
This technique works whenever the least common multiple (lcm) of the chosen moduli exceeds the maximum number of failures.

\begin{theorem}
\label{thm:multi-round-general-exist-upper-bound-lcm}
Let $\mu_1,\ldots,\mu_T$ be integers with $\mu_t\ge 2$ for all $t$, and let $\Lambda\triangleq \mathrm{lcm}(\mu_1,\ldots,\mu_T)$.
If $\Lambda>n$, then there exists a multi-round strategy solving \emph{\textsc{ExistFail}} on $n$ assertions with
$S = 1 \,{+}\, \max_{t} \lceil \log \mu_t\rceil$.
\end{theorem}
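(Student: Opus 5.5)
We construct the strategy explicitly, generalizing Strategy~\ref{str:single-round-modulo-increment}. The plan has $T$ rounds, and every round enables all assertions, $\mathcal{E}_t = [n]$. In round $t$ the ancilla register splits into a counter $\textsf{ctr}_t$ of $\ell_t \triangleq \lceil \log \mu_t \rceil$ qubits and one flag qubit $\textsf{fail}$. The counter encodes residues in $\{0,\ldots,\mu_t-1\}$ via their binary representation, which fits since $\mu_t \le 2^{\ell_t}$. The round-$t$ handling block is
\[
B_i = C_{i\to\textsf{fail}};\ V^{(t)}_{\mathrm{CINC}};\ C_{i\to\textsf{fail}},
\]
where $V^{(t)}_{\mathrm{CINC}}$ adds $1 \bmod \mu_t$ to $\textsf{ctr}_t$ when $\textsf{fail}=\ket{1}$. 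The block $B_n$ additionally appends $V_{\neq 0}$ on $(\textsf{ctr}_t,\textsf{fail})$, exactly as before. Each round uses $1+\ell_t$ ancillas, so $S = 1+\max_t\lceil\log\mu_t\rceil$. The decoder outputs ``true'' iff some round measures $\textsf{fail}=1$.

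The one point needing care is that $V^{(t)}_{\mathrm{CINC}}$ is a legitimate unitary. Increment modulo $2^{\ell}$ was a permutation, but here $\mu_t$ need not be a power of two. We define the ancilla-only operator to act as the cyclic permutation $k\mapsto k+1 \bmod \mu_t$ on basis states $\ket{k}$ with $k<\mu_t$, and as the identity on $\ket{k}$ for $\mu_t\le k<2^{\ell_t}$. It acts this way only in the $\textsf{fail}=\ket{1}$ branch and as the identity otherwise. This is a permutation of computational basis states, hence unitary, and it is allowed as an arbitrary $V$ on $\textsf{anc}$ by Def.~\ref{def:instr}. The counter starts at $0$, so it never leaves $\{0,\ldots,\mu_t-1\}$ and the padding states are irrelevant.

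For correctness we invoke Lem.~\ref{lem:ancilla-evolution}: on a program with deterministic failure pattern $F$, each block acts on the ancillas as $W^{i}_{F_i}$. When $F_i=1$, the first checker call sets $\textsf{fail}$ to $1$, the counter increments, and the second call resets $\textsf{fail}$. When $F_i=0$, nothing changes. Hence at the end of round $t$ the counter holds $c \bmod \mu_t$ with $c=\sum_i F_i$, as a basis state, and after $V_{\neq 0}$ the flag deterministically records whether $c \not\equiv 0 \pmod{\mu_t}$.

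If $c=0$, every round reads $0$. If $1\le c\le n$, suppose every round reads $0$; then $\mu_t \mid c$ for all $t$, so $\Lambda \mid c$. This forces $c\ge\Lambda>n$, a contradiction. So some round reads $1$, and the decoder answers correctly with certainty.

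Beyond the modular-increment unitarity already handled, the proof is routine. We would also note that the construction is non-disjoint, so it falls under the general multi-round setting of Def.~\ref{def:multi-stra}, which places no disjointness requirement.
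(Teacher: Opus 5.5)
Your proposal is correct and follows essentially the same construction as the paper's Multi-Round lcm Fingerprinting strategy. The match covers all assertions enabled in every round, a $\lceil\log\mu_t\rceil$-qubit modulo-$\mu_t$ counter plus one flag, and $V_{\neq 0}$ appended to the last block. It also covers the correctness argument that all rounds read $0$ iff $\Lambda$ divides the failure count $k \le n < \Lambda$; your explicit choice of the identity on the padding states $\ket{k}$ with $\mu_t \le k < 2^{\ell_t}$ is one valid instance of the paper's ``arbitrary unitary extension.''
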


\begin{proof}
\vspace{-0.5ex}
We construct the following strategy:

\begin{strategy}[\textsc{\emph{Multi-Round lcm Fingerprinting}}]
\label{str:lcm-fingerprint}
\vspace{-0.5ex}
For each round $t\in[T]$, $\pazocal{I}_t$ enables all $n$ assertions, i.e., $\mathcal{E}_t=[n]$.
Let $\ell_t\triangleq \lceil \log \mu_t\rceil$.
Round $t$ uses an $\ell_t$-qubit counter register $\textsf{ctr}_t$ together with a one-qubit flag $\textsf{fail}$, all initialized to $\ket{0}$.
Each assertion-handling block for $A_i$ in round $t$ is
{\setlength{\abovedisplayskip}{1pt}
\setlength{\belowdisplayskip}{2pt}
\setlength{\abovedisplayshortskip}{0pt}
\setlength{\belowdisplayshortskip}{0pt}
\begin{align*}
B_i^{\scriptscriptstyle (t)}
=
C_{i \to \textsf{fail}};
\ V^{(t)}_{\mathrm{CINC}};
\ C_{i \to \textsf{fail}},
\end{align*}
}

\noindent
where the first checker-unitary call routes the assertion outcome into \textsf{fail} and the second call uncomputes \textsf{fail}. 
Here, \smash{$V^{\scriptscriptstyle (t)}_{\scriptscriptstyle \mathrm{CINC}}$}, conditioned on $\textsf{fail}=\ket{1}$, increments $\textsf{ctr}_t$ by $1$ modulo $\mu_t$ on the basis states $\ket{0}, \ldots, |\mu_t \,{-}\, 1\rangle$, with an arbitrary unitary extension to the remaining ones.
The block of the last assertion appends the routing unitary $V_{\neq 0}$ of Strategy~\ref{str:single-round-modulo-increment} (with $|0^{\ell_t}\rangle$ in place of $|0^{\ell}\rangle$), so that \textsf{fail} records whether $\textsf{ctr}_t$ is nonzero. Let $r_t \in \{0,1\}$ be the measurement outcome of \textsf{fail} at the end of round $t$. After all $T$ rounds, the decoder outputs 0 iff $r_t = 0$ for every $t$, and outputs 1 otherwise.
\end{strategy}

\paragraph{Correctness.}
\vspace{-0.5ex}
Let $k \,{\in}\, \{0,\ldots,n\}$ be the number of failing assertions.
In round $t$, every failure contributes one increment modulo $\mu_t$, while every pass contributes none, so the final counter value is exactly $k \bmod \mu_t$ and $r_t = 0$ iff $k \equiv 0 \pmod{\mu_t}$.
Hence the decoder outputs $0$ iff $k\equiv 0 \pmod{\mu_t}$ for all $t$,
equivalently iff $k \,{\equiv}\, 0 \pmod{\Lambda}$.
Because $0 \,{\le}\, k \,{\le}\, n \,{<}\, \Lambda$, this condition holds iff $k \,{=}\, 0$.
Thus, the strategy solves \textsc{ExistFail}.
Its per-round space cost is $\ell_t \,{+}\, 1 \,{=}\, \lceil \log\mu_t\rceil \,{+}\, 1$, so $S \,{=}\, 1+\max_{t}\lceil \log\mu_t\rceil$.
\end{proof}

\begin{example}
\vspace{-0.5ex}
Let $n=5$, and suppose the failure pattern is $F=(1,1,0,1,1)$, so there are $k=4$ failures.
Choose $T=2$ with $(\mu_1,\mu_2)=(2,3)$.
Then $\Lambda=\mathrm{lcm}(2,3)=6>5$.
Hence, the two rounds use counter sizes $\ell_1=\lceil \log 2\rceil =1$ and $\ell_2=\lceil \log 3\rceil =2$. In both rounds, all five assertions are enabled. For the present failure pattern, the counter registers evolve as:

{\small
\setlength{\arraycolsep}{1pt}
\renewcommand{\arraystretch}{2}
$\begin{array}{lllllll}
\hspace{30pt} \text{Round }1~(\mu_1=2): & \hspace{5pt}\ket{0}_{\textsf{ctr}_1} & \xrightarrow{A_1~\text{fail}} \ket{1} & \xrightarrow{A_2~\text{fail}} \ket{0} & \xrightarrow{A_3~\text{pass}} \ket{0} & \xrightarrow{A_4~\text{fail}} \ket{1} & \xrightarrow{A_5~\text{fail}} \ket{0},
\\
\hspace{30pt} \text{Round }2~(\mu_2=3): & \hspace{5pt} \ket{00}_{\textsf{ctr}_2} & \xrightarrow{A_1~\text{fail}} \ket{01} & \xrightarrow{A_2~\text{fail}} \ket{10} & \xrightarrow{A_3~\text{pass}} \ket{10} & \xrightarrow{A_4~\text{fail}} \ket{00} & \xrightarrow{A_5~\text{fail}} \ket{01}.
\end{array}$
}

\vspace{0.5ex}
Therefore, $V_{\neq 0}$ leaves $\textsf{fail} = \ket{0}$ in round 1 (as $4 \bmod 2 = 0$) and sets $\textsf{fail} = \ket{1}$ in round 2 (as $4 \bmod 3 = 1$). So round 2 records a $1$, and the decoder outputs true for \textsc{ExistFail}.
\end{example}

Theorem~\ref{thm:multi-round-general-exist-upper-bound-lcm} gives a family of upper bounds parameterized by the chosen moduli.
We further show that this construction remains tight throughout the regime $T = O( \log n / \log\log n)$.
In particular, throughout this regime, one can increase the number of rounds $T$ and correspondingly reduce the per-round space to \smash{$S = O\big(\frac{\log n}{T}\big)$}, while remaining optimal in the rounds--ancillas product by matching the universal lower bound $S \,{\cdot}\, T \,{=}\, \Omega(\log n)$.
At the upper end of this tight range, when $T = \Theta(\log n / \log\log n)$, this approach yields a strategy with $S = \Theta(\log\log n)$.

\begin{theorem}
\label{thm:multi-round-general-exist-loglog-tight}
For any parameter regime where
$T = {O}(\log n / \log\log n)$,
the general multi-round complexity of \emph{\textsc{ExistFail}} satisfies
$S = \Theta\big(\frac{\log n}{T}\big)$.
\end{theorem}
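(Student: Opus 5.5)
The lower bound follows at once from Thm.~\ref{thm:multi-round-general-exist-lower-bound}: any multi-round strategy solving \textsc{ExistFail} has $S\cdot T\ge \log(n+1)$, so $S \ge \log(n+1)/T = \Omega(\log n / T)$ for every $T$, with no restriction on the regime. The real work is the upper bound. For that I would instantiate Thm.~\ref{thm:multi-round-general-exist-upper-bound-lcm} with $T$ suitably chosen moduli whose lcm exceeds $n$ and whose individual bit lengths are all $O(\log n / T)$.

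The concrete choice is to take $\mu_1,\dots,\mu_T$ to be $T$ distinct primes, all lying in an interval $[B, 2B]$ (or simply the $T$ smallest primes exceeding $B$), with $B \triangleq \lceil (n+1)^{1/T}\rceil$. Because the primes are distinct, $\Lambda = \prod_t \mu_t \ge B^T \ge n+1 > n$, so Thm.~\ref{thm:multi-round-general-exist-upper-bound-lcm} applies. It yields $S = 1 + \max_t \lceil \log \mu_t\rceil$, and we need this to be $O(\log B) = O(\log n / T)$. That holds provided every $\mu_t \le B^{O(1)}$.

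The main obstacle, and the source of the $\log n/\log\log n$ threshold, is ensuring that $T$ distinct primes exist in a range of size polynomial in $B$. By the prime number theorem, or just Chebyshev's bound $\pi(x) = \Theta(x/\log x)$, the interval $(B, B^2]$ contains $\Omega(B^2/\log B)$ primes. Fitting them there requires $T \le c\, B^2/\log B$, or more cleanly, via Bertrand-type counting, $T = O(B/\log B)$ primes in $(B,2B]$. Now $B \approx n^{1/T}$, so $\log B \approx \log n / T$. The condition $T = O(\log n/\log\log n)$ gives $\log B \ge \Omega(\log\log n)$, i.e.\ $B \ge (\log n)^{c'}$. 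I would tune constants so that $B \ge \log n$, pick $B' \triangleq \max(B, \lceil C\log n\rceil)$, and use the primes in $(B', 2B']$. Their number is $\Omega(B'/\log B') \ge T$ because $T \le \log n \ll B'/\log B'$ (handling constants carefully, perhaps enlarging $B'$ to $(\log n)^2$). Then $\log \mu_t \le \log(2B') = O(\log B + \log\log n)$. In the regime $T = O(\log n/\log\log n)$ we have $\log\log n = O(\log n / T)$, so $S = O(\log n/T)$.

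Finally I would combine the two sides: the lower bound $\Omega(\log n/T)$ holds for all $T$, and the upper bound $O(\log n / T + \log\log n)$ collapses to $O(\log n/T)$ exactly when $T = O(\log n/\log\log n)$. This gives $S = \Theta(\log n / T)$ in that regime. Small-$n$ edge cases, such as $T=1$ where one modulus $n+1$ suffices, are handled by the same formula with a trivial adjustment to the constants.
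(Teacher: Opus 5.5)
Your proposal is correct and follows essentially the same route as the paper. The lower bound comes from $S\cdot T\ge\log(n+1)$. The upper bound picks $T$ distinct primes in a dyadic interval whose left end is at least both $(n+1)^{1/T}$ and a polylogarithmic floor, counts them via the prime number theorem, and applies Thm.~\ref{thm:multi-round-general-exist-upper-bound-lcm}; the paper builds the floor into the exponent instead of taking your $\max$, using $[2^{S'-2},2^{S'-1}]$ with $S'=\lceil (c+1)\log(n+1)/T\rceil+2$, which forces $2^{S'-2}\ge(\log n)^{1+1/c}$. One small slip: with $B'=C\log n$ the claim $T\le\log n\ll B'/\log B'$ is false as written, so you need either the sharper bound $T\le c\log n/\log\log n$ with $C$ large relative to $c$, or the enlargement $B'\ge(\log n)^2$ that you already mention.
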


\begin{proof}
The lower bound follows from Thm.~\ref{thm:multi-round-general-exist-lower-bound}.
For the upper bound, fix a constant $c>0$ such that {$T \le c { \frac{\log n}{\log\log n}}$} for all sufficiently large $n$, let $\alpha \,{\triangleq}\, c+1$, and define
\smash{$S' \,{\triangleq}\, \lceil \alpha \frac{\log(n+1)}{T} \rceil \,{+}\, 2$}.
It suffices to exhibit $T$ distinct primes in $[2^{S'-2},2^{S'-1}]$, from which we choose $\mu_1,\ldots,\mu_T$ and apply Thm.~\ref{thm:multi-round-general-exist-upper-bound-lcm}.
\smallskip
Let $x \triangleq 2^{S'{-}\,2}$. Since $S'\,{-}\,2 \ge \alpha\frac{\log(n+1)}{T}$ and $T \le c\frac{\log n}{\log\log n}$,
{\setlength{\abovedisplayskip}{2pt}
\setlength{\belowdisplayskip}{1pt}
\setlength{\abovedisplayshortskip}{0pt}
\setlength{\belowdisplayshortskip}{0pt}
\begin{align*}
x \ge 2^{\alpha \cdot \log(n+1)/T} \ge 2^{\alpha \cdot \log n/T} \ge 2^{(\alpha/c) \cdot\log\log n} = (\log n)^{\alpha/c}.
\end{align*}}

By the Prime Number Theorem the number of primes in $[x,2x]=[2^{S'-2},2^{S'-1}]$ is asymptotically $\frac{x}{\ln x}$; since $x \ge \log n$ and $z/\ln z$ is increasing for large $z$, this count is, for all sufficiently large $n$,
{\small
\setlength{\abovedisplayskip}{3pt}
\setlength{\belowdisplayskip}{4pt}
\setlength{\abovedisplayshortskip}{0pt}
\setlength{\belowdisplayshortskip}{0pt}
\begin{align*}
\Omega \left(\frac{x}{\ln x}\right)
\ge
\Omega \left( \frac{(\log n)^{\alpha/c}}{(\alpha/c) \,{\cdot}\, \ln\log n}\right)
= \Omega \left( (\log n)^{1/c} \cdot \frac{\log n}{\log \log n} \right)
=
\omega \left(\frac{\log n}{\log\log n}\right).
\end{align*}}

Since this strictly dominates {$\frac{\log n}{\log\log n}$}, it is at least $T$ for large $n$ by assumption, so $T$ distinct primes $\mu_1,\ldots,\mu_T$ exist in the interval. These primes are pairwise coprime, hence
{
\setlength{\abovedisplayskip}{3pt}
\setlength{\belowdisplayskip}{4pt}
\setlength{\abovedisplayshortskip}{0pt}
\setlength{\belowdisplayshortskip}{0pt}
\begin{align*}
\Lambda
=
\mathrm{lcm}(\mu_1,\ldots,\mu_T)
=
\mu_1\mu_2\cdots\mu_T
\ge
(2^{S'-2})^T
=
2^{T(S'-2)}.
\end{align*}}

By the definition of $S'$, $T(S'-2) \ge \log(n+1)$, hence $\Lambda  \,{\ge}\, 2^{T(S'-2)}   \,{\ge}\, n\,{+}\,1 \,{>}\, n$ and the moduli satisfy the premise of Thm.~\ref{thm:multi-round-general-exist-upper-bound-lcm}. Then, Thm.~\ref{thm:multi-round-general-exist-upper-bound-lcm} yields a strategy with $S \le 1 + \max_t \lceil \log \mu_t\rceil$. Since each $\mu_t \le 2^{S'-1}$, we have $\lceil \log \mu_t\rceil \le S'\,{-}\,1$, so $S \le S'$ and thus $S \,{\cdot}\, T \le S' \,{\cdot}\, T = O(\log n)$. With the matching lower bound from Thm.~\ref{thm:multi-round-general-exist-lower-bound}, this completes the proof.
\end{proof}

For any $T$ beyond this regime, the disjoint multi-round strategy of Strategy~\ref{str:partitioned-modulo-increment}, being in particular a general multi-round strategy, still provides the upper bound $S = O(\log(1 + \frac{n}{T}))$, while Thm.~\ref{thm:multi-round-general-exist-lower-bound} continues to give $S \cdot T = \Omega(\log n)$. We leave the tightness for regime beyond this to future work.

\paragraph{Cost of Strategy~\ref{str:lcm-fingerprint}}
\vspace{-0.5ex}
The strategy uses $T$ rounds and \smash{$S=1+\max_t \ell_t$}, where $\ell_t=\lceil \log \mu_t\rceil$.
Every round enables all $n$ assertions and invokes the checker unitary twice per assertion, so $C \,{=}\, 2T$.
Only \textsf{fail} is measured per round, hence $M = T$.
In round $t$, each assertion contributes one controlled increment modulo $\mu_t$, realized without workspace as a staircase increment modulo $2^{\ell_t}$ followed, when $\mu_t$ is not a power of two, by the controlled transposition $|0^{\ell_t}\rangle \,{\leftrightarrow}\, |\mathrm{bin}(\mu_t)\rangle$ (cf.\ Strategy~\ref{str:single-round-index-transposition}) fixing the wrap-around --- $\Theta(\ell_t)$ elementary gates in total; $V_{\neq 0}$ adds $O(\ell_t)$ per round.
Hence \smash{$G=O\bigl(n\textstyle\sum_{t} \ell_t\bigr)$}; for the instantiation of Thm.~\ref{thm:multi-round-general-exist-loglog-tight}, $\sum_{t} \ell_t \le T(S-1) = O(\log n)$, so $G = O(n \log n)$.

%% file: main/first.tex
\section{Complexity of the \textsc{FirstFail} Task}
\label{sec:first}

This section presents our complexity results for \textsc{FirstFail} under single-round (Section~\ref{sec:single-round-first}), disjoint multi-round (Section~\ref{sec:disjoint-multi-round-first}), and general multi-round (Section~\ref{sec:general-multi-round-first}) settings.

\input{main/first-single}
\input{main/first-disjoint-multi}

\input{main/first-general-multi}

%% file: main/first-single.tex
\subsection{Single-Round Space Complexity of \textsc{FirstFail}}
\label{sec:single-round-first}

We establish the lower bound $S \,{\ge}\, \lceil \log(n \,{+}\, 1)\rceil$ in Thm.~\ref{thm:single-round-first-lower-bound} and propose a matching strategy using $S \,{\le}\, \lceil \log(n \,{+}\, 1)\rceil \,{+}\, 1$ in Thm.~\ref{thm:single-round-first-upper-bound}, so the single-round complexity of \textsc{FirstFail} is $S \,{=}\, \Theta(\log n)$ (Thm.~\ref{thm:single-round-first-tightness}).
Additionally, Thm.~\ref{thm:firstfail-no-trivial-update} shows that although \textsc{FirstFail} asymptotically shares its single-round space complexity with \textsc{ExistFail}, valid strategies unavoidably require more finesse.

\begin{theorem}
\label{thm:single-round-first-lower-bound}
    For any single-round strategy (as defined in Def.~\ref{def:stra}) that solves \emph{\textsc{FirstFail}}, it holds that $S \ge \lceil \log(n\,{+}\,1)\rceil$, with $n$ the number of assertions.
\end{theorem}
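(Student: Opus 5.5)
We reduce to the bit-string problem exactly as in Thm.~\ref{thm:single-round-exist-lower-bound}. By Lemma~\ref{lem:reduction}, a single-round strategy solving \textsc{FirstFail} with $S$ ancillas yields a one-round unitary transition system $\pazocal{U}_n$ with $\dim(\pazocal{H}) = 2^S$ that solves \textsc{First} on every $x \in \{0,1\}^n$. It therefore suffices to exhibit $n+1$ inputs whose final states are pairwise orthogonal. Then $2^S \ge n+1$, and hence $S \ge \lceil \log(n+1)\rceil$.

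The first option is to reduce directly to the \textsc{Exist} case. The answer to \textsc{First} on $0^n$ is $\bot$, while on every $x \neq 0^n$ it is some index in $[n]$. So the correct answers on $0^n$ and on any nonzero $x$ differ. Lemma~\ref{lem:answer-separation} with $T=1$ then gives $\langle \psi(0^n)\,|\,\psi(x)\rangle = 0$ for all $x \neq 0^n$. This is precisely the hypothesis of Lemma~\ref{thm:unitary-dim-lb}, which yields $\dim(\pazocal{H}) \ge n+1$ immediately. In other words, any system solving \textsc{First} also solves \textsc{Exist}: compose the decoder with the map ``answer $\neq \bot$''. The lower bound of Thm.~\ref{thm:single-round-exist-lower-bound} therefore transfers verbatim.

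A more self-contained alternative avoids Lemma~\ref{thm:unitary-dim-lb}. Take the $n+1$ strings $y_0 = 0^n$ and $y_k = 0^{k-1}10^{n-k}$ for $k \in [n]$, which have the pairwise distinct answers $\bot, 1, \ldots, n$. Applying Lemma~\ref{lem:answer-separation} to each pair shows that all $n+1$ final states are pairwise orthogonal, and orthogonal unit vectors are linearly independent. This route is cleaner, and I would present it as the main proof, mentioning the \textsc{Exist} reduction as a remark.

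There is no real obstacle here. The only points to check are that \textsf{Dec} is deterministic and that the answers are pairwise distinct, both of which are immediate. All the nontrivial content sits in Lemma~\ref{lem:reduction} and Lemma~\ref{lem:answer-separation}, which are already established.
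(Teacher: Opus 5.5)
Your proof is correct, but it takes a different route from the paper. The paper uses a direct information-theoretic count and never touches the transition-system machinery. \textsc{FirstFail} has $n+1$ possible answers ($\bot$ and each index in $[n]$). The decoder is a deterministic function of the terminal readout $R \in \{0,1\}^S$. Correctness with certainty forces each answer to be the image of at least one readout, so $2^S \ge n+1$. The paper then remarks, as your first option does, that the bound also follows from Thm.~\ref{thm:single-round-exist-lower-bound} by post-processing the answer into ``$\neq \bot$''.

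Your main route applies Lemma~\ref{lem:reduction} and then Lemma~\ref{lem:answer-separation} to the strings $0^n$ and $0^{k-1}10^{n-k}$, and this is sound. Note that the disjoint-support conclusion of Lemma~\ref{lem:answer-separation} already gives you $n+1$ nonempty, pairwise disjoint subsets of $\{0,1\}^S$. That is exactly the paper's count, so passing through orthogonality and linear independence is a harmless detour.

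The paper's version has the advantage of minimality. It needs only that the decoder reads at most $S$ bits and that every answer is realized by some instance, which the trivial single-qubit programs guarantee. It makes no appeal to program-state invariance or to the induced ancilla unitaries, and it is the same counting used for \textsc{ListAll} in Thm.~\ref{thm:multi-round-list-lb}.

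Your version has the advantage of uniformity with the paper's other lower bounds. The same reduction-plus-separation pipeline is what \textsc{ExistFail} requires, where there are only two answers and pure output counting yields nothing.
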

\begin{proof}
\vspace{-0.5ex}
\textsc{FirstFail} has $n\,{+}\,1$ possible outputs: either $\bot$ (no failure) or an index in $[n]$.
A single-round strategy measures $S$ ancilla qubits at termination, obtaining a bit string $R \,{\in}\, \{0,1\}^{S}$, and then deterministically maps $R$ to an output.
For correctness, distinct outputs must arise from distinct measurement outcomes.
Hence $2^S \,{\ge}\, n \,{+}\, 1$, which implies $S \,{\ge}\, \lceil \log(n \,{+}\, 1)\rceil$.
\end{proof}

\vspace{-0.5ex}
The above lower bound follows from an information-theoretic counting argument.
It also follows immediately from Thm.~\ref{thm:single-round-exist-lower-bound}, because \textsc{ExistFail} is no harder than \textsc{FirstFail}: any single-round strategy for \textsc{FirstFail} yields one for \textsc{ExistFail} by outputting true iff the result is not $\bot$.

\begin{theorem}
\label{thm:single-round-first-upper-bound}
There exists a single-round strategy solving \emph{\textsc{FirstFail}} with $S \le \lceil \log(n{+}1)\rceil{+}1$.
\end{theorem}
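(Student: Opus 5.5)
The plan is to mirror Strategy~\ref{str:single-round-modulo-increment}, but replace the controlled increment with a \emph{controlled transposition} of basis states. The transposition makes the classical rule ``record $i$ only if nothing has been recorded yet'' reversible.

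Partition $\textsf{anc}$ into a counter register $\textsf{ctr}$ of $\ell \triangleq \lceil \log(n+1)\rceil$ qubits, which encodes a value in $\{0,1,\ldots,n\}$, and a single flag qubit $\textsf{fail}$, both initialized to zero. For each assertion $i$, the assertion-handling block will be
\[
B_i ~=~ C_{i\to\textsf{fail}};\ V^{(i)}_{\mathrm{SWAP}};\ C_{i\to\textsf{fail}}.
\]
Here $V^{(i)}_{\mathrm{SWAP}}$ is the ancilla-only unitary that, conditioned on $\textsf{fail}=\ket{1}$, exchanges the two basis states $|0^\ell\rangle$ and $|\mathrm{bin}(i)\rangle$ of $\textsf{ctr}$ and fixes every other basis state. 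It is a permutation of basis states, hence unitary, and it is a legitimate ancilla-processing operation under Def.~\ref{def:instr}. The second checker call uncomputes $\textsf{fail}$ back to $\ket{0}$, exactly as in the \textsc{ExistFail} strategy. At termination we measure $\textsf{ctr}$, and $\textsf{Dec}$ outputs $\bot$ on $0^\ell$ and otherwise outputs the integer read.

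For correctness, by Lem.~\ref{lem:ancilla-evolution} the ancilla evolves deterministically through computational basis states, so a simple induction over $i$ suffices. The invariant is that after position $i$, $\textsf{ctr}$ holds $\min\{j\le i \mid F_j=1\}$, or $0$ if there is no such $j$.
\begin{itemize}
\item A passing assertion leaves $\textsf{fail}=\ket{0}$, so nothing happens.
\item A failing assertion with $\textsf{ctr}=0$ swaps $\textsf{ctr}$ to $i$, which is then the first failure.
\item A failing assertion with $\textsf{ctr}=j$ for some earlier $j<i$ leaves $\textsf{ctr}$ unchanged, because $j\notin\{0,i\}$.
\end{itemize}
Hence the final value of $\textsf{ctr}$ is exactly the \textsc{FirstFail} answer. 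The total ancilla count is $\ell+1=\lceil\log(n+1)\rceil+1$, which proves the theorem.

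The key step, and the only place where real thought is needed, is the choice of $V^{(i)}_{\mathrm{SWAP}}$. The naive update ``if $\textsf{ctr}=0$ and fail, write $i$'' is not injective as a map on all basis states. The transposition agrees with that update on every state that is actually reachable, because $\textsf{ctr}=i$ can never occur before step $i$, while remaining a bijection. This is why no extra history register is needed.

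For the cost, I would also note that the transposition is implemented with $O(\ell)$ elementary gates: conjugate by $X$ gates, then apply a $\textsf{fail}$-controlled $\mathrm{C}^k\mathrm{NOT}$ on each bit where $\mathrm{bin}(i)$ is $1$, conditioned on the remaining bits. This gives $G=O(n\log n)$, $C=2$, $M=\ell$ and $T=1$.
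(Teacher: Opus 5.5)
Your proposal is correct and is essentially the paper's own proof. Strategy~\ref{str:single-round-index-transposition} uses exactly the block $C_{i\to\textsf{fail}};\,V^i_{\mathrm{Tran}};\,C_{i\to\textsf{fail}}$ with a $\textsf{fail}$-controlled transposition of $|0^\ell\rangle$ and $|\mathrm{bin}(i)\rangle$, and argues, as you do, that every later failure finds the register holding a different index, which the transposition fixes.

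The only slip is in your side remark on gate cost, which the space bound does not depend on. Read literally, your sequence of controlled NOTs on the $1$-bits of $\mathrm{bin}(i)$, each conditioned on the remaining bits, does not compose into this transposition once $\mathrm{wt}(\mathrm{bin}(i))\ge 2$. The paper instead uses a palindromic Gray-path sequence of $2\,\mathrm{wt}(\mathrm{bin}(i))-1$ multi-controlled NOTs, which still gives $O(\ell)$ gates per assertion.
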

\begin{proof}
\vspace{-0.5ex}
We give the following strategy:

\begin{strategy}[\textsc{\emph{Single-Round Index Transposition}}]
\label{str:single-round-index-transposition}
\vspace{-0.5ex}
Let $\ell \,{\triangleq}\, \lceil \log(n \,{+}\, 1)\rceil$.
The instrumentation $\pazocal{I}$ uses $\ell \,{+}\, 1$ ancillas, partitioned into an $\ell$-qubit index register $\textsf{idx}$ and a one-qubit flag $\textsf{fail}$, both initialized to $\ket{0}$.
For each assertion $A_i$, the assertion-handling block used by $\pazocal{I}$ is:
{\setlength{\abovedisplayskip}{3pt}
\setlength{\belowdisplayskip}{2pt}
\setlength{\abovedisplayshortskip}{0pt}
\setlength{\belowdisplayshortskip}{0pt}
\begin{align*}
 B_i
=
C_{i \to \textsf{fail}};
\ V^i_{\mathrm{Tran}};
\ C_{i \to \textsf{fail}},
\end{align*}}

\noindent
where $V^i_{\mathrm{Tran}}$, conditioned on $\textsf{fail}=\ket{1}$, swaps the two basis states $| 0^\ell \rangle$ and $| \mathrm{bin}(i) \rangle$ and fixes all other basis states of $\textsf{idx}$.
Let $i^*$ be the first failing index.
At $A_{i^*}$, we have $\textsf{fail} \,{=}\, \ket{1}$ and $\textsf{idx} \,{=}\, |0^\ell \rangle$, so $V^{i^*}_{\mathrm{Tran}}$ updates $\textsf{idx}$ to $|\mathrm{bin}(i^*)\rangle$.
For any later failing assertion $A_j$ with $j\,{>}\, i^*$, the controlled transposition between $|0^\ell\rangle$ and $|\mathrm{bin}(j)\rangle$ leaves $|\mathrm{bin}(i^*)\rangle$ unchanged.
Thus at termination, $\textsf{idx}=|\mathrm{bin}(i^*)\rangle$, or $|0^\ell \rangle$ if none fail.
Measuring $\textsf{idx}$ and decoding the result in decimal therefore solves \textsc{FirstFail}.
\end{strategy}

\vspace{-0.5ex}
Strategy~\ref{str:single-round-index-transposition} uses $S=\lceil \log(n+1)\rceil+1$ ancillas.
This proves the theorem.
\end{proof}

\begin{example}
\vspace{-0.5ex}
Let $n=4$ and suppose the failure pattern is $F=(0,1,0,1)$, so the first failure is at $2$.
Then $\ell=\lceil \log 5\rceil=3$, so the strategy uses a $3$-qubit index register $\textsf{idx}$ and a one-qubit flag $\textsf{fail}$.
Passing assertions leave $\textsf{idx}$ unchanged, while a failing assertion applies the unitary $V^i_{\mathrm{Tran}}$ (where each $V^i_{\mathrm{Tran}}$ depends on the current assertion index $i$) on $\textsf{idx}$. Therefore, the ancilla evolution is:

\vspace{-2ex}
{\small
\setlength{\fboxsep}{0pt}
\setlength{\fboxrule}{0pt}
\[
\begin{aligned}
|000\rangle_{\textsf{idx}}\ket{0}_{\textsf{fail}} \hspace{-2pt}
&\xrightarrow{\cdots}
\fcolorbox{gray!100}{gray!7.5}{$ \xrightarrow[\text{unchanged}]{\text{A}_1~\text{pass}} \ket{000}\ket{0} $\,}  \hspace{-0.75pt} \xrightarrow{\cdots} \hspace{-0.75pt}  \fcolorbox{gray!100}{gray!7.5}{$ \xrightarrow[\text{write \textsf{fail}}]{C_{2 \to \textsf{fail}}} \ket{000}\ket{1} \xrightarrow[|000\rangle \leftrightarrow |010\rangle]{V^2_{\mathrm{Tran}}} \ket{010}\ket{1} \xrightarrow[\text{uncompute \textsf{fail}}]{C_{2 \to \textsf{fail}}} \ket{010}\ket{0}$\,} \,\,A_2 \text{ fail}
\\[0ex]
&\xrightarrow{\cdots}
\fcolorbox{gray!100}{gray!7.5}{$\displaystyle \xrightarrow[\text{unchanged}]{\text{A}_3~\text{pass}} \ket{010}\ket{0} $\,} \hspace{-0.75pt} \xrightarrow{\cdots} \hspace{-0.75pt} \fcolorbox{gray!100}{gray!7.5}{$\displaystyle \xrightarrow[\text{write \textsf{fail}}]{C_{4\to \textsf{fail}}} \ket{010}\ket{1} \xrightarrow[|000\rangle \leftrightarrow |100\rangle]{V^4_{\mathrm{Tran}}} \ket{010}\ket{1} \xrightarrow[\text{uncompute \textsf{fail}}] {C_{4\to \textsf{fail}}} \ket{010}\ket{0}$\,} \,\,A_4 \text{ fail}
\\[0ex]
&\xrightarrow{\cdots}
\text{execution ends} \xrightarrow{\text{measure \textsf{idx}}} 010 \xrightarrow{\textsf{Dec}} 2
\end{aligned}
\]}

\vspace{-0.5ex}
Thus, the final index register is $\ket{\mathrm{bin}(2)}=\ket{010}$, and measuring $\textsf{idx}$ yields the answer $2$.
\end{example}

\paragraph{Cost of Strategy~\ref{str:single-round-index-transposition}}
The strategy has $T\,{=}\,1$ and $S \,{=}\, \lceil \log(n+1)\rceil+1$.
The checker unitary is invoked twice per assertion, so $C \,{=}\, 2$.
Only $\textsf{idx}$ is measured at the end, hence $M=\lceil \log(n+1)\rceil$.
The ancilla-only update $V^i_{\mathrm{Tran}}$ can be implemented with ${O}(\log n)$ multi-controlled NOT gates without additional workspace, yielding the total non-checker gate cost $G = O(n\log n)$.
Concretely, the circuit for $V^i_{\mathrm{Tran}}$ is constructed as: fix a \emph{Gray path} from {\small $0^\ell$} to $z \,{=} \, \mathrm{bin}(i)\,{\in}\,\{0,1\}^{\ell}$ of length $\gamma \,{=}\, \mathrm{wt}(z)$ (Hamming weight), flipping 0-bits to 1 from \emph{right to left}: $0^{\ell}=v_0^{\scriptscriptstyle (i)} \rightarrow        v_1^{\scriptscriptstyle (i)} \rightarrow \cdots \rightarrow v_\gamma^{\scriptscriptstyle (i)}=z$.
Then define the palindromic gate sequence:
{\setlength{\abovedisplayskip}{4pt}
\setlength{\belowdisplayskip}{4pt}
\setlength{\abovedisplayshortskip}{0pt}
\setlength{\belowdisplayshortskip}{0pt}
\begin{align*}
V^i_{\mathrm{Tran}} \,=\, G^{\scriptscriptstyle (i)}_1 G^{\scriptscriptstyle (i)}_2 \cdots G^{\scriptscriptstyle (i)}_{\gamma-1} G^{\scriptscriptstyle (i)}_\gamma G^{\scriptscriptstyle (i)}_{\gamma-1} \cdots G^{\scriptscriptstyle (i)}_2 G^{\scriptscriptstyle (i)}_1,
\end{align*}
where each {\smash{\small $G^{\scriptscriptstyle (i)}_k$}} is a multi-controlled NOT on the unique target bit that differs between {\smash{\small $v^{\scriptscriptstyle(i)}_{k-1}$}} and {\smash{\small $v^{\scriptscriptstyle (i)}_{k}$}}, with controls on (i) $\mathsf{fail}\,{=}\,1$ and (ii) all non-target bits of $\mathsf{idx}$ fixed to match {\smash{\small $v^{\scriptscriptstyle (i)}_{k-1}$}}\footnote{{\scriptsize $G^{\scriptscriptstyle (i)}_k$} implements the controlled transposition between the basis states {\scriptsize ${v^{\scriptscriptstyle (i)}_{k-1}}$} and {\scriptsize ${v^{\scriptscriptstyle (i)}_{k}}$}, and acts as the identity elsewhere.}.
For the above example, the circuits realizing $V^1_{\mathrm{Tran}}, V^2_{\mathrm{Tran}}, V^3_{\mathrm{Tran}}$ and $V^4_{\mathrm{Tran}}$ are shown below, from left to right:

\begin{figure}[htbp]
\vspace{-2ex}
\centering
{\footnotesize
\begin{subfigure}[t]{0.21\textwidth}
\centering
\begin{quantikz}[baseline=(current bounding box.center),column sep=0.35cm,row sep={0.6cm,between origins}]
    \lstick{\rule[-3.5ex]{0pt}{7ex}\textsf{fail~}}
    & \ctrl{3}   \gategroup[wires=4,steps=1,
                   style={draw,thin,densely dashed,rounded corners,inner xsep=1pt,inner ysep=2pt},
                   background]{$G_{\scriptscriptstyle 1}^{\scriptscriptstyle (1)}$}
    & \qw \\
    \lstick{$\textsf{idx}_1$}  & \octrl{2}  & \qw \\
    \lstick{$\textsf{idx}_2$}  & \octrl{1}  & \qw \\
    \lstick{\rule[-3.5ex]{0pt}{7ex}$\textsf{idx}_3$}  & \targ{}    & \qw
\end{quantikz}
\end{subfigure}
\hfill
\begin{subfigure}[t]{0.21\textwidth}
\centering
\begin{quantikz}[baseline=(current bounding box.center),column sep=0.35cm,row sep={0.6cm,between origins}]
    \lstick{\rule[-3.5ex]{0pt}{7ex}\textsf{fail~}}
    & \ctrl{2}   \gategroup[wires=4,steps=1,
                   style={draw,thin,densely dashed,rounded corners,inner xsep=1pt,inner ysep=2pt},
                   background]{$G_{\scriptscriptstyle 1}^{\scriptscriptstyle (2)}$}
    & \qw \\
    \lstick{$\textsf{idx}_1$}  & \octrl{1}  & \qw \\
    \lstick{$\textsf{idx}_2$}  & \targ{}    & \qw \\
    \lstick{\rule[-3.5ex]{0pt}{7ex}$\textsf{idx}_3$}  & \octrl{-1} & \qw
\end{quantikz}
\end{subfigure}
\hfill
\hspace*{-2ex}
\begin{subfigure}[t]{0.32\textwidth}
\centering
\begin{quantikz}[baseline=(current bounding box.center),column sep=0.35cm,row sep={0.6cm,between origins}]
    \lstick{\rule[-3.5ex]{0pt}{7ex}\textsf{fail~}}
    & \ctrl{3}   \gategroup[wires=4,steps=1,
                   style={draw,thin,densely dashed,rounded corners,inner xsep=1pt,inner ysep=2pt},
                   background]{$G_{\scriptscriptstyle 1}^{\scriptscriptstyle (3)}$}
    & \ctrl{2}  \gategroup[wires=4,steps=1,
                   style={draw,thin,densely dashed,rounded corners,inner xsep=1pt,inner ysep=2pt},
                   background]{$G_{\scriptscriptstyle 2}^{\scriptscriptstyle (3)}$}
    & \ctrl{3}   \gategroup[wires=4,steps=1,
                   style={draw,thin,densely dashed,rounded corners,inner xsep=1pt,inner ysep=2pt},
                   background]{$G_{\scriptscriptstyle 1}^{\scriptscriptstyle (3)}$}
    & \qw \\
    \lstick{$\textsf{idx}_1$}  & \octrl{2}  & \octrl{1} & \octrl{2}  & \qw \\
    \lstick{$\textsf{idx}_2$}  & \octrl{1}  & \targ{}   & \octrl{1}  & \qw \\
    \lstick{\rule[-3.5ex]{0pt}{7ex}$\textsf{idx}_3$}  & \targ{}    & \ctrl{-1} & \targ{}    & \qw
\end{quantikz}
\end{subfigure}
\hfill
\begin{subfigure}[t]{0.21\textwidth}
\centering
\begin{quantikz}[baseline=(current bounding box.center),column sep=0.35cm,row sep={0.6cm,between origins}]
    \lstick{\rule[-3.5ex]{0pt}{7ex}\textsf{fail~}}
    & \ctrl{1}   \gategroup[wires=4,steps=1,
                   style={draw,thin,densely dashed,rounded corners,inner xsep=1pt,inner ysep=2pt},
                   background]{$G_{\scriptscriptstyle 1}^{\scriptscriptstyle (4)}$}
    & \qw \\
    \lstick{$\textsf{idx}_1$}  & \targ{}    & \qw \\
    \lstick{$\textsf{idx}_2$}  & \octrl{-1} & \qw \\
    \lstick{\rule[-3.5ex]{0pt}{7ex}$\textsf{idx}_3$}  & \octrl{-2} & \qw
\end{quantikz}%
\end{subfigure}}%
\end{figure}
\vspace*{-2ex}%

The total gates used over all $V^i_{\mathrm{Tran}}$ is computed as $ {G}\ = \textstyle \sum_{i=1}^{n}\bigl(2\,\ell(i)-1\bigr) $ where $\ell(i)=\mathrm{wt}\,\big(\mathrm{bin}(i)\big)$. A convenient worst-case upper bound is (as if all nonzero $\ell$-bit strings occurred in $\{\mathrm{bin}(i) \, | \, i \,{\in}\, [n]\}$):
\[
{G}\ \le\ \textstyle \sum_{i=1}^{\ell}\binom{\ell}{i}\,(2i-1)
        \,=\, (\ell-1) \cdot \,2^{\ell} + 1
        \,=\, \mathrm{O}(\ell \,{\cdot}\, 2^{\ell})
        \,=\, \mathrm{O}(n\log n).
\]

\begin{theorem}
\label{thm:single-round-first-tightness}
The single-round space complexity of \emph{\textsc{FirstFail}} satisfies $S \,{=}\, \Theta(\log n)$.
\end{theorem}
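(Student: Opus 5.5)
The plan is to combine the two bounds already established in this subsection, mirroring the proof of Thm.~\ref{thm:single-round-exist-tightness}. For the lower side, Thm.~\ref{thm:single-round-first-lower-bound} gives $S \ge \lceil \log(n+1)\rceil$ for every single-round strategy solving \textsc{FirstFail}, and $\lceil \log(n+1)\rceil \ge \log n$, so $S = \Omega(\log n)$. (As remarked after that theorem, the same bound also follows by reduction from Thm.~\ref{thm:single-round-exist-lower-bound}: any \textsc{FirstFail} strategy becomes an \textsc{ExistFail} strategy by outputting true iff the decoded answer is not $\bot$. This gives a second route should one prefer the reversibility-based argument.)

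For the upper side, Thm.~\ref{thm:single-round-first-upper-bound}, witnessed by Strategy~\ref{str:single-round-index-transposition}, provides a single-round strategy with $S \le \lceil \log(n+1)\rceil + 1$. For $n \ge 1$ this is at most $\log(n+1) + 2 = O(\log n)$, where for $n=1$ we absorb constants by reading $O(\log n)$ as $O(1+\log n)$.

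Since the minimum $S$ over valid single-round strategies is sandwiched between $\lceil\log(n+1)\rceil$ and $\lceil\log(n+1)\rceil+1$, the complexity is $\Theta(\log n)$. No step is a genuine obstacle, because both directions are already proved. The only care needed is to state that the ``complexity'' is the minimum ancilla count over correct strategies, so that the lower bound applies universally while the upper bound needs a single witness.
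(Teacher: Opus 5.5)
Your proposal is correct and follows the paper's approach: the paper's proof likewise just combines the lower bound $S \ge \lceil \log(n+1)\rceil$ of Thm.~\ref{thm:single-round-first-lower-bound} with the index-transposition upper bound $S \le \lceil \log(n+1)\rceil + 1$ of Thm.~\ref{thm:single-round-first-upper-bound}. Your added remarks (the alternative reduction from \textsc{ExistFail} and the $n=1$ convention for $O(\log n)$) are accurate and do not change the argument.
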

\begin{proof}
\vspace{-0.5ex}
Immediate by Thm.~\ref{thm:single-round-first-lower-bound} and Thm.~\ref{thm:single-round-first-upper-bound}.
\end{proof}

\paragraph{Hardness of \textsc{FirstFail}.}
Classically, \textsc{FirstFail} admits a one-line sequential algorithm: keep a single $O(\log n)$-bit register, and write the current index into it whenever an assertion fails while the register is still empty.
Mathematically, this conditional write is a many-to-one mapping: a failure at position $i$ sends both an empty register and one already holding $i$ to the same state, which is precisely what reversibility (a quantum-specific constraint) forbids.
The generic remedy, one history bit per assertion, would restore reversibility at $O(n)$ ancillas, essentially falling back to the single-round/$n$-ancillas baseline strategy, losing the logarithmic saving.

Strategy~\ref{str:single-round-index-transposition} dissolves the dilemma with a twist: replace the destructive write by the transposition that \emph{exchanges} $|0^{\ell}\rangle$ and $|\mathrm{bin}(i)\rangle$. The two maps agree everywhere a forward execution can reach: the first failure writes its index, and every later failure at position $i$ meets a register holding some $|\mathrm{bin}(i^*)\rangle$ with $i^* \neq i$, which the transposition fixes. The un-write branch $|\mathrm{bin}(i)\rangle \mapsto |0^{\ell}\rangle$, the very branch that makes the map a bijection, is simply never exercised. No history is kept, and uncomputing the \textsf{fail} flag after each use keeps the persistent memory to the index register alone.

Although \textsc{FirstFail} thus matches the single-round space complexity of \textsc{ExistFail}, attaining the bound demands more intricate ancilla logic. Our matching constructions pay in one of two ways: the ancilla update \emph{varies across indices} (the transpositions above), or it acts nontrivially \emph{even when an assertion passes} (Strategy~\ref{str:single-round-modulo-decrement}, an alternative upper bound in App.~\ref{app:firstfail-alternative}). The theorem below shows this dichotomy is inherent to single-round \textsc{FirstFail} strategies.

\vspace{-0.5ex}
\begin{theorem}
\label{thm:firstfail-no-trivial-update}
Let $\pazocal{S}$ be any single-round strategy (Def.~\ref{def:stra}) that solves \emph{\textsc{FirstFail}} over all $2^n$ deterministic failure patterns, and let $W^{1,i}_0, W^{1,i}_1$ (for $i \,{\in}\, [n]$) be the ancilla-only unitaries induced by its instrumentation (Lem.~\ref{lem:ancilla-evolution}). If the ancilla-processing logic is index-independent, i.e., $W^{1,i}_0 = W_0$ and $W^{1,i}_1 = W_1$ for all $i \,{\in}\, [n]$, then neither $W_0$ nor $W_1$ is the identity.
\end{theorem}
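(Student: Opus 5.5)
We argue by contradiction in each of the two cases, after passing through Lemma~\ref{lem:reduction}. The strategy yields a one-round unitary transition system on $\pazocal{H}$ with $\dim\pazocal{H} = 2^{S}$, initial state $|0^{S}\rangle$, and index-independent transitions $W_0, W_1$. This system solves \textsc{First}, so the final state on input $x = b_1\cdots b_n$ is $|\psi(x)\rangle = W_{b_n}\cdots W_{b_1}|0^S\rangle$. By Lemma~\ref{lem:answer-separation}, any two inputs with different first-failure answers must yield orthogonal final states, and in fact states with disjoint computational-basis supports. The idea is to exhibit, in each case, two inputs with different answers whose final states coincide.

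\emph{Case $W_0 = I$.} Passing assertions then contribute nothing, so $|\psi(x)\rangle = W_1^{\,k}|0^S\rangle$ with $k = \mathrm{wt}(x)$. The final state depends only on the number of failures, not on their positions. Take $n \ge 2$ and compare $x = 10^{n-1}$ with $y = 010^{n-2}$. Both have exactly one failure, so the final states are equal, but the first failures are $1$ and $2$ respectively. This contradicts Lemma~\ref{lem:answer-separation}.

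\emph{Case $W_1 = I$.} Now failures contribute nothing, so $|\psi(x)\rangle = W_0^{\,n-k}|0^S\rangle$, which depends only on the number of passes. Take $n \ge 2$ and compare $x = 10^{n-1}$ with $y = 010^{n-2}$ again. Each has exactly $n-1$ passes, so the final states coincide, while the answers are $1$ and $2$. The same contradiction follows. (Alternatively, compare $0^n$ with $0^{n-1}1$: the latter has one fewer pass, but the former has answer $\bot$. The pair above is cleaner.)

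Since the two cases are independent, together they show that neither $W_0$ nor $W_1$ is the identity.

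The main obstacle is the edge case $n = 1$, where the pair above does not exist. There, with $W_1 = I$, the inputs $0$ and $1$ give the states $W_0|0\rangle$ and $|0\rangle$. These can be orthogonal, so the claim may actually fail for $n = 1$, and I would state the theorem for $n \ge 2$.

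The other point to check is that the induced $W^{1,i}_b$ from Lemma~\ref{lem:ancilla-evolution} are exactly the transitions of the system, including at positions not in $\mathcal{E}$ where they equal $I$. Index-independence then forces $W_0 = W_1 = I$ uniformly if any index is disabled, which is covered by either case. Commutativity of powers of a single unitary makes the collapse to a count immediate.
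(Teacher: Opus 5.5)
Your proof is correct and follows the paper's argument. With $W_0 = I$ (resp.\ $W_1 = I$) the final ancilla state depends only on the number of $1$s (resp.\ $0$s), so two single-failure patterns with different first-failure indices collide; the paper uses $10^{n-1}$ versus $0^{n-1}1$ where you use $10^{n-1}$ versus $010^{n-2}$.

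Your caveat about $n=1$ is also right, and it applies to the paper's proof as well, whose two witnesses coincide when $n=1$. In fact both cases genuinely fail there, not only $W_1 = I$: the one-ancilla strategy with $W_0 = I$, $W_1 = X$ solves \textsc{FirstFail} for $n=1$. So the theorem implicitly requires $n \ge 2$.
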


\begin{proof}
\vspace{-0.5ex}
Assume toward contradiction that at least one of $W_0$ or $W_1$ equals the identity. For a deterministic failure pattern $x \,{=}\, b_1 \cdots b_n \,{\in}\, \{0,1\}^n$, the final ancilla state is $W_{b_n} \cdots W_{b_1}|0^{S}\rangle$, where $S$ is the ancilla size of $\pazocal{S}$ (Lem.~\ref{lem:ancilla-evolution}).
Since the strategy always answers correctly and the decoder is a function of the ancilla readout, any two patterns with different \textsc{FirstFail} outputs must lead to different final ancilla states.

\emph{Case 1: $W_0 = I$.} Here, the final ancilla state depends only on the number of $1$'s in $x$. In particular, $x' = 10^{n-1}$ and $x'' = 0^{n-1}1$ both contain exactly one $1$, so they induce the same final ancilla state, but $\textsc{FirstFail}(x') = 1$ and $\textsc{FirstFail}(x'') = n$, a contradiction.

\emph{Case 2: $W_1 = I$.} Here, the final ancilla state depends only on the number of $0$'s in $x$. Again, $x' = 10^{n-1}$ and $x'' = 0^{n-1}1$ contain the same number of $0$'s, hence induce the same final ancilla state, while their \textsc{FirstFail} outputs differ. This is also a contradiction, so $W_0 \neq I$ and $W_1 \neq I$.
\end{proof}

%% file: main/first-disjoint-multi.tex
\subsection{Disjoint Multi-Round Time-Space Trade-Off for \textsc{FirstFail}}
\label{sec:disjoint-multi-round-first}

We next characterize the complexity of disjoint multi-round strategies for \textsc{FirstFail}.
As in the \textsc{ExistFail} case, we prove the lower bound $(2^S \,{-}\, 1) \,{\cdot}\, T \,{\ge}\, n$ (Thm.~\ref{thm:multi-round-disjoint-first-lower-bound}), and a matching upper bound $(2^S \,{-}\, 1) \,{\cdot}\, T \,{<}\, 11n$ (Thm.~\ref{thm:multi-round-disjoint-first-upper-bound}), so the complexity is $(2^S \,{-}\, 1) \,{\cdot}\, T \,{=}\, \Theta(n)$ (Thm.~\ref{thm:multi-round-disjoint-first-complexity}).

\begin{theorem}
\label{thm:multi-round-disjoint-first-lower-bound}
\vspace{-0.5ex}
For any disjoint multi-round strategy (Def.~\ref{def:multi-stra}) that solves \emph{\textsc{FirstFail}}, it holds that $(2^S-1)\cdot T \ge n$, with $n$ the number of assertions.
\end{theorem}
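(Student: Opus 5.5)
The plan is to reduce to the \textsc{ExistFail} lower bound, Thm.~\ref{thm:multi-round-disjoint-exist-lower-bound}, since that proof never used anything specific to \textsc{ExistFail} beyond answer separation between $0^n$ and nonzero patterns supported in a single round. Given a disjoint multi-round strategy $\pazocal{S}$ solving \textsc{FirstFail}, compose its decoder with the map sending $\bot$ to false and any index to true. The result is a disjoint multi-round strategy with the same instrumentations, and hence the same $S$ and $T$, that solves \textsc{ExistFail}. A strategy for \textsc{FirstFail} correctly outputs $\bot$ exactly when the failure pattern is zero, so this composed decoder answers \textsc{ExistFail} correctly on every deterministic failure pattern. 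Disjointness is a property of the sets $\mathcal{E}_t$ alone, so it is preserved. Applying Thm.~\ref{thm:multi-round-disjoint-exist-lower-bound} then gives $(2^S-1)\cdot T \ge n$ immediately.

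For a self-contained argument, I would instead replay that proof directly on the \textsc{First} task. First, Lem.~\ref{lem:reduction} yields a $T$-round transition system solving \textsc{First}, in which round $t$ depends only on the bits in $\mathcal{E}_t$. Second, every index $i$ must lie in some $\mathcal{E}_t$: otherwise $0^n$ (answer $\bot$) and $e_i$ (answer $i$) produce identical transcripts. By disjointness, the sets $\mathcal{E}_t$ then partition $[n]$. Third, fix $t$ and any nonzero $x$ supported in $\mathcal{E}_t$. The answers on $0^n$ and $x$ differ ($\bot$ versus an index), and every round $t'\neq t$ has identical final states on the two inputs. So Lem.~\ref{lem:answer-separation} forces $\langle\psi^t(0^n)|\psi^t(x)\rangle=0$. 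Finally, Lem.~\ref{thm:unitary-dim-lb} applied to round $t$, restricted to the $L_t$ positions of $\mathcal{E}_t$, gives $2^{S_t}-1\ge L_t$. Summing over $t$ gives the bound.

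I expect no real obstacle. The only point needing care is to check that the orthogonality hypothesis of Lem.~\ref{thm:unitary-dim-lb} really is met, that is, that $\bot$ differs from every index answer. This holds by the definition of \textsc{FirstFail}.
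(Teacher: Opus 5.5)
Your proposal is correct and matches the paper's proof: the paper also obtains this bound immediately from Thm.~\ref{thm:multi-round-disjoint-exist-lower-bound}, by post-processing the \textsc{FirstFail} decoder to output true iff the result is not $\bot$. Your self-contained replay of that argument on \textsc{First} is also sound, but it is not needed.
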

\begin{proof}
\vspace{-0.5ex}
Immediate from Thm.~\ref{thm:multi-round-disjoint-exist-lower-bound}, because any disjoint multi-round strategy for \textsc{FirstFail} yields one for \textsc{ExistFail} by post-processing its final answer, outputting true iff the result is not $\bot$.
\end{proof}

\begin{theorem}
\label{thm:multi-round-disjoint-first-upper-bound}
\vspace{-0.5ex}
For every $T \le n$, there exists a disjoint multi-round strategy solving \emph{\textsc{FirstFail}} with $(2^S-1)\cdot T < 11n$.
\end{theorem}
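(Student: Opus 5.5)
The plan mirrors the proof of Thm.~\ref{thm:multi-round-disjoint-exist-upper-bound}, with Strategy~\ref{str:single-round-index-transposition} in place of Strategy~\ref{str:single-round-modulo-increment}. We partition $[n]$ into $T$ contiguous blocks $\mathcal{E}_1,\ldots,\mathcal{E}_T$ in increasing order, each of size at most $L \triangleq \lceil n/T\rceil$, writing $\mathcal{E}_t = \{s_t+1,\ldots,s_t+L_t\}$. In round $t$ we run the index-transposition instrumentation on $\mathcal{E}_t$ alone. It uses an index register of $\ell \triangleq \lceil \log(L+1)\rceil$ qubits and one flag qubit $\textsf{fail}$. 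The block for $A_i$ with $i \in \mathcal{E}_t$ is $C_{i\to\textsf{fail}};\ V_{\mathrm{Tran}};\ C_{i\to\textsf{fail}}$. Here the transposition swaps $|0^\ell\rangle$ with $|\mathrm{bin}(i - s_t)\rangle$, i.e., it uses the \emph{local} offset rather than the global index. That keeps the local indices in $\{1,\ldots,L_t\}$, so they fit in $\ell$ bits. Each round measures $\textsf{idx}$.

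For correctness, we first apply the argument of Strategy~\ref{str:single-round-index-transposition} within one round, reading $\mathcal{E}_t$ as a length-$L_t$ input. It shows that round $t$ ends in $|\mathrm{bin}(j)\rangle$, where $j$ is the first local failing offset in $\mathcal{E}_t$, or in $|0^\ell\rangle$ if no assertion in $\mathcal{E}_t$ fails. The decoder then takes the smallest $t$ whose readout $r_t$ is nonzero and outputs $s_t + r_t$. If every round reads zero, it outputs $\bot$. This is correct because the blocks are contiguous and ordered, so the first globally failing assertion lies in the first block that contains any failure, at exactly that block's first local failure. By Lem.~\ref{lem:invariance} the rounds do not interfere, and the deterministic pattern makes each readout certain. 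The strategy is disjoint by construction.

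The space accounting is the same as in Thm.~\ref{thm:multi-round-disjoint-exist-upper-bound}. We have $S = \lceil \log(L+1)\rceil + 1$, hence $2^S < 4(L+1)$ and $2^S - 1 < 4L+3$. Using $L \le n/T + 1$ gives $(2^S-1)T < 4n + 7T \le 11n$, since $T \le n$.

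The only step needing care is the reindexing to local offsets. With global indices the index register would need $\lceil\log(n+1)\rceil$ bits and the trade-off would be lost. Once offsets are used, the rest is routine. We would also note that the strategy is contiguous, so Thm.~\ref{thm:mc-recast} applies to it.
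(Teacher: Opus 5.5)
Your proposal is correct and matches the paper's proof: the paper uses the same Partitioned Index Transposition strategy (contiguous blocks of size at most $\lceil n/T\rceil$, index transposition per block with a $\lceil \log(\lceil n/T\rceil+1)\rceil$-qubit index register, and a decoder that takes the first nonzero round and adds the offset $\sum_{j<t^*}|\mathcal{E}_j|$), then reuses the calculation from Thm.~\ref{thm:multi-round-disjoint-exist-upper-bound}. Your explicit point that the transpositions must use local offsets is left implicit in the paper, where it is forced by the register size and by the offset-adding decoder.
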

\begin{proof}
\vspace{-0.5ex}
We construct the following strategy:

\begin{strategy}[\textsc{\emph{Partitioned Index Transposition}}]
\vspace{-0.5ex}
\label{str:partitioned-index-transposition}
Partition the $n$ assertions into $T$ disjoint contiguous index sets $\mathcal{E}_1,\ldots,\mathcal{E}_T$, in program order, each of size at most $\lceil n/T\rceil$.
In round $t$, run the single-round index-transposition strategy (Strategy~\ref{str:single-round-index-transposition}) on the assertions in $\mathcal{E}_t$ only, using an index register of size $\lceil \log(\lceil n/T\rceil \,{+}\, 1)\rceil$ and one flag qubit $\textsf{fail}$.
Let $r_t$ be the measurement outcome of round $t$.
After all $T$ rounds, scan the rounds in order: if every $r_t \,{=}\, 0$, output $\bot$; otherwise, let $t^*$ be the first round with $r_{t^*} \,{\neq}\, 0$, decode from $r_{t^*}$ the index inside $\mathcal{E}_{t^*}$, and add the offset \smash{$\sum_{j=1}^{t^*-1} |\mathcal{E}_j|$}.
\end{strategy}

\vspace{-1ex}
Because each round uses $S\,{=}\,\lceil \log(\lceil n/T\rceil \,{+}\, 1)\rceil \,{+}\, 1$ ancillas, applying the same calculation as in Thm.~\ref{thm:multi-round-disjoint-exist-upper-bound} proves the theorem.
\end{proof}

\paragraph{Cost of Strategy~\ref{str:partitioned-index-transposition}}
\vspace{-0.5ex}
For $T \le n$, the strategy uses $T$ rounds and $S\,{=}\,\lceil \log(\lceil n/T\rceil\,{+}\,1)\rceil\,{+}\,1$
ancillas per round. $C \,{=}\, 2$, and $M \,{=}\, T \,{\cdot}\, \lceil \log(\lceil n/T\rceil \,{+}\, 1)\rceil$.
The single-round cost analysis of Strategy~\ref{str:single-round-index-transposition} gives round-$t$ gate cost $G_t={O}(|\mathcal{E}_t|\ell)$ where $\ell$ is the size of the index register of this round.
Therefore, $G=\sum_{t=1}^T G_t={O}(n\ell)={O}\left(n\log(\lceil n/T\rceil+1)\right)$.

\begin{theorem}
\label{thm:multi-round-disjoint-first-complexity}
For all $T \le n$, the disjoint multi-round complexity of \emph{\textsc{FirstFail}} is $S = \Theta\left(\log (1 + \frac{n}{T})\right)$.
\end{theorem}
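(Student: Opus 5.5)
The plan mirrors the proof of Thm.~\ref{thm:multi-round-disjoint-exist-complexity}: combine the two preceding theorems and convert their product bounds into a bound on $S$.

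\emph{Lower bound.} Thm.~\ref{thm:multi-round-disjoint-first-lower-bound} gives $(2^S-1)\cdot T \ge n$ for every disjoint multi-round strategy solving \textsc{FirstFail}. Dividing by $T$ yields $2^S \ge 1 + \frac{n}{T}$. Taking logarithms gives $S \ge \log(1+\frac{n}{T})$, which is $\Omega(\log(1+\frac{n}{T}))$.

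\emph{Upper bound.} For $T \le n$, Thm.~\ref{thm:multi-round-disjoint-first-upper-bound} supplies the Partitioned Index Transposition strategy (Strategy~\ref{str:partitioned-index-transposition}), which satisfies $(2^S-1)\cdot T < 11n$. This gives $2^S < 1 + \frac{11n}{T}$, hence
\[
S < \log\bigl(1+\tfrac{11n}{T}\bigr) \le \log 11 + \log\bigl(1+\tfrac{n}{T}\bigr).
\]
The last inequality uses $1 + 11y \le 11(1+y)$ with $y = \frac{n}{T}$.

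The only step needing care is absorbing the additive constant $\log 11$. Since $T \le n$, we have $\frac{n}{T} \ge 1$, so $\log(1+\frac{n}{T}) \ge 1$. Therefore $\log 11 \le \log 11 \cdot \log(1+\frac{n}{T})$, and the upper bound becomes $S \le (1+\log 11)\log(1+\frac{n}{T}) = O(\log(1+\frac{n}{T}))$. Combined with the lower bound, this gives $S = \Theta(\log(1+\frac{n}{T}))$. I expect no real obstacle: all substantive work sits in Thms.~\ref{thm:multi-round-disjoint-first-lower-bound} and~\ref{thm:multi-round-disjoint-first-upper-bound}.
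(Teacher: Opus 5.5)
Your proposal is correct and follows the paper's proof: it combines Thm.~\ref{thm:multi-round-disjoint-first-lower-bound} and Thm.~\ref{thm:multi-round-disjoint-first-upper-bound} and redoes the calculation from Thm.~\ref{thm:multi-round-disjoint-exist-complexity}. Your step absorbing the additive $\log 11$, using $\frac{n}{T} \ge 1$, simply makes explicit what the paper leaves implicit when it writes $\log(1+\frac{11n}{T}) = O(\log(1+\frac{n}{T}))$.
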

\begin{proof}
\vspace{-0.5ex}
Immediate from
Thm.~\ref{thm:multi-round-disjoint-first-lower-bound} and Thm.~\ref{thm:multi-round-disjoint-first-upper-bound}, by following the same proof as Thm.~\ref{thm:multi-round-disjoint-exist-complexity}.
\end{proof}

%% file: main/first-general-multi.tex
\subsection{General Multi-Round Time-Space Trade-Off for \textsc{FirstFail}}
\label{sec:general-multi-round-first}

For general multi-round strategies, we prove that \textsc{FirstFail} still admits $S \,{=}\, \Theta\big(\log (1\,{+}\,\frac{n}{T})\big)$, the same complexity as in the disjoint setting --- unlike \textsc{ExistFail}, the ability to encode a failure pattern in multiple rounds brings no improvement.
The key result is the lower bound in Thm.~\ref{thm:multi-round-general-first-lower-bound}; the disjoint upper bound strategy carries over unchanged and yields tightness (Thm.~\ref{thm:multi-round-general-first-complexity}).

\begin{theorem}
\label{thm:multi-round-general-first-lower-bound}
For any multi-round strategy $\pazocal{S}$ (Def.~\ref{def:multi-stra}) that solves \emph{\textsc{FirstFail}}, it holds that $T \cdot 4^{S} \ge n/\big(2\log(1+\sqrt{2})\big)$, with $n$ the number of assertions.
\end{theorem}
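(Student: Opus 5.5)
The plan is a reduction to a single-round \emph{packing} statement, done in three steps: a reduction to transition systems, an assignment of each consecutive step to a separating round, and a per-round counting bound. Proving that counting bound is the main obstacle.

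\textbf{Step 1 (reduction).} By Lemma~\ref{lem:reduction}, $\pazocal{S}$ yields a $T$-round unitary transition system solving \textsc{First}, with round-$t$ dimension $d_t = 2^{S_t} \le 2^S$. I will use the $n+1$ ``staircase'' inputs $y_k \triangleq 0^{k-1}1^{n-k+1}$ for $k \in [n]$, together with $y_{n+1} \triangleq 0^n$. All of them have pairwise distinct \textsc{First} answers. By Lemma~\ref{lem:answer-separation}, every pair $j<k$ is therefore separated by some round $t$, meaning $\langle \psi^t(y_j) \,|\, \psi^t(y_k)\rangle = 0$.

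The reason for staircase inputs is the same factoring trick as in the proof of Lemma~\ref{thm:unitary-dim-lb}. Let $s^t_m$ be the round-$t$ state after $m$ zeros. Stripping off the common suffix of ones reduces $\langle \psi^t(y_j) \,|\, \psi^t(y_k)\rangle$ to an inner product between $\pazocal{T}^{[j-1]}_{1^{k-j}}|s^t_{j-1}\rangle$ and $|s^t_{k-1}\rangle$. Consequently, whether a pair is separated depends only on the zero-path trajectory of round $t$ and on the block of ones inserted between positions $j$ and $k$.

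\textbf{Step 2 (assigning steps to rounds).} I will work with the rank-one operators $\rho^t_k \triangleq |\psi^t(y_k)\rangle\langle \psi^t(y_k)|$. These lie in a real vector space of dimension $4^{S_t} = d_t^2$, which is where the factor $4^S$ comes from; a pure-state dimension count alone would only give $2^S$.

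For each consecutive pair $(k,k{+}1)$, choose one round $t(k)$ that separates it. Let $N_t$ be the number of steps assigned to round $t$, so that $\sum_t N_t = n$. It then suffices to prove the per-round bound
\[
N_t \;\le\; 2\log(1+\sqrt{2})\cdot 4^{S_t},
\]
because summing over $t$ and using $S_t \le S$ gives $n \le 2\log(1+\sqrt 2)\, T\cdot 4^S$, which is the statement.

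\textbf{Step 3 (per-round packing; main obstacle).} A round may separate a step $(k,k{+}1)$ that lies far from the other steps it separates, while intermediate steps are handled by other rounds. So a plain ``$n+1$ orthogonal vectors'' argument does not apply directly.

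My first attempt is a potential argument that tracks how $\rho^t_k$ moves as $k$ increases. Passing from $y_{k+1}$ to $y_k$ changes a single letter, $0 \to 1$ at position $k$. Every such change is a difference of two unitary orbits of the same trajectory. Stacking the differences $\rho^t_{k+1}-\rho^t_k$ over the assigned steps yields a family of operators. Orthogonality of each assigned pair should force this family to have a Gram matrix with a nearly tridiagonal structure. The associated Chebyshev-type recurrence $a_{m+1} = 2a_m + a_{m-1}$ grows like $(1+\sqrt2)^m$.

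Comparing this exponential growth against the trivial upper bound $2^{O(4^{S_t})}$, which counts distinguishable configurations in the $4^{S_t}$-dimensional operator space, should produce exactly the constant $2\log(1+\sqrt{2})$.

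Carrying this comparison out rigorously is where I expect most of the work to lie, and the exact recurrence may need adjusting. If it fails, my fallback is to weaken the constant and prove $N_t = O(4^{S_t})$ by a rank argument on the Gram matrix of $\{\rho^t_k\}$, which has rank at most $4^{S_t}$. Even with a weaker constant, this already yields the asymptotic tightness claimed in Table~\ref{tab:complexity}.
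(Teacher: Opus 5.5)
There is a genuine gap in Step~3. The per-round bound $N_t \le 2\log(1+\sqrt2)\,4^{S_t}$ is false, and no argument that stays within the staircase inputs can rescue it.

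Steps~2--3 use only the fact that round $t$ makes its assigned consecutive pairs $(y_k,y_{k+1})$ orthogonal. A single parity qubit ($d_t=2$, $W_0=I$, $W_1=X$) already does this for \emph{all} $n$ steps. The inputs $y_k$ and $y_{k+1}$ differ in their number of ones by exactly one, so this round sends them to $|0\rangle$ and $|1\rangle$ in some order. Append this round to any valid \textsc{First} system and assign every step to it: you get $N_t=n$ against $4^{S_t}=4$. The same round defeats your fallback, since the Gram matrix of its $\rho^t_k$ has rank $2$ even though all $n$ consecutive pairs are orthogonal.

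Using all pairs of staircase inputs does not help either. Count the ones modulo distinct primes whose product exceeds $n$, and read out the counters (the LCM fingerprinting construction with full readout). This separates every pair $y_j\ne y_k$ with $\sum_t d_t^2=O(\log^3 n/\log\log n)$. So $n+1$ inputs with distinct answers can only ever support a polylogarithmic bound, and the Chebyshev recurrence you sketch is not tied to any actual constraint.

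What you are missing is that the linear bound has to come from all $2^n$ inputs, separated as \emph{operators} rather than as final states. The paper argues as follows.
\begin{enumerate}
\item Take any $x\ne y$, including pairs with equal \textsc{First} answers. Let $i$ be the first differing position (say $x_i=0$, $y_i=1$) and $w$ the common prefix.
\item Define the witness $|v_t\rangle=(\pazocal{T}^t_w)^\dagger\pazocal{T}^t_{0^{i-1}}|\psi^t_0\rangle$. Then $\pazocal{T}^t_x$ and $\pazocal{T}^t_y$ map it to the final states of $0^i x_{i+1}\cdots x_n$ and $0^{i-1}1\,y_{i+1}\cdots y_n$.
\item These two modified inputs have different answers, so Lemma~\ref{lem:answer-separation} makes their final states orthogonal in some round. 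In that round $\|\pazocal{T}^t_x-\pazocal{T}^t_y\|_{\mathrm{op}}\ge\sqrt2$.
\item Hence the $2^n$ tuples $(\pazocal{T}^1_x,\dots,\pazocal{T}^T_x)$ are pairwise $\sqrt2$-separated under $\max_t\|\cdot\|_{\mathrm{op}}$, inside the unit ball of a real space of dimension $2\sum_t d_t^2$.
\item Volumetric packing gives $2^n\le(1+\sqrt2)^{2\sum_t d_t^2}$, which is exactly the stated constant.
\end{enumerate}
Because the packing is over the joint tuple under the max norm, no assignment of pairs to rounds is needed. This is how rounds share the separation work. In your parity example, that round's input operators take only two values, $I$ and $X$, so it contributes almost nothing to the packing.
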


\begin{proof}
\vspace{-0.5ex}
Such a strategy solves \textsc{FirstFail} over all $2^n$ deterministic failure patterns, so by Lemma~\ref{lem:reduction} it yields a $T$-round unitary transition system solving the corresponding task \textsc{First} (Def.~\ref{def:distinguish}), whose round-$t$ dimension is $2^{S_t}$, with $S_t$ the ancilla size used in round $t$. Lemma~\ref{lem:first-packing} below, applied with $d_t = 2^{S_t}$, gives
\smash{$\textstyle\sum_{t=1}^{T} 4^{S_t} \ge n/\big(2\log(1+\sqrt{2})\big)$}. Since $S_t \le S$ for every $t$, the left side is at most $T \cdot 4^{S}$, and the claim follows.
\end{proof}

\begin{lemma}
\label{lem:first-packing}
Let $(\pazocal{U}^{1}_n, \ldots, \pazocal{U}^{T}_n)$ be a $T$-round unitary transition system (Def.~\ref{def:unitary-transition-system}) that solves \emph{\textsc{First}} (Def.~\ref{def:distinguish}), and let $d_t$ denote the dimension of the Hilbert space of $~\pazocal{U}^{t}_n$. Then
\smash{$\textstyle\sum_{t=1}^{T} d_t^{\,2} \ge n/\big(2\log(1+\sqrt{2})\big)$}.
\end{lemma}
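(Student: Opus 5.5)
The plan is a volume (packing) argument over $2^n$ objects, one per bit string $a\in\{0,1\}^n$. Each object lives in a product of unitary groups $\prod_{t=1}^T \mathrm{U}(d_t)$, a real manifold of dimension $\sum_t d_t^2$. I would show that any two objects are far apart in the max-over-rounds operator-norm metric, then compare $2^n$ with the metric entropy of that product. The constant $2\log(1+\sqrt2)$ should come from the usual packing estimate $(1+2/\delta)^{D}$ with separation $\delta=\sqrt2$. Here $D=2d_t^2$ is the real dimension of $d_t\times d_t$ complex matrices, in whose operator-norm unit ball $\mathrm{U}(d_t)$ sits. This gives $2^n\le\prod_t(1+\sqrt2)^{2d_t^2}$, i.e.\ $\sum_t d_t^2\ge n/\bigl(2\log(1+\sqrt2)\bigr)$. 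Because the metric is a maximum over rounds, a packing in the product has size at most the product of per-round covering numbers. This is exactly why the bound is a \emph{sum} of $d_t^2$, not a product, which is what distinguishes it from the \textsc{ExistFail} argument of Thm.~\ref{thm:multi-round-general-exist-lower-bound}.

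Step 1: choose the objects. For round $t$ and string $a$, I would let $U^t_a$ be a unitary built from the ancilla transitions $W^{t,i}_{a_i}$ of Lem.~\ref{lem:reduction}, arranged so that comparing $a$ with $b$ reduces to comparing two \textsc{First}-inputs with a common all-zero prefix. Concretely, suppose $a$ and $b$ first differ at position $k$, with $a_k=1$ and $b_k=0$. Consider the inputs $u=0^{k-1}1\,a_{>k}$, whose answer is $k$, and $v=0^{k-1}0\,b_{>k}$, whose answer is $\bot$ or some index larger than $k$. Their answers differ, so Lem.~\ref{lem:answer-separation} gives a round $t$ with $\langle s^t_{k-1}|\,(\Sigma^t_{v})^{\dagger}\Sigma^t_{u}\,|s^t_{k-1}\rangle=0$. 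Here $s^t_{k-1}$ is the round-$t$ state after the prefix $0^{k-1}$, and $\Sigma^t_{u},\Sigma^t_{v}$ are the products of $W$'s over positions $k,\ldots,n$ read along $u$ and $v$. This is the same pumping-style factorization used in Lem.~\ref{thm:unitary-dim-lb}.

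Step 2: convert orthogonality into distance. If $\langle s|A^\dagger B|s\rangle=0$ for unitaries $A,B$ and a unit vector $s$, then $\|(A-B)s\|^2=2$, so $\|A-B\|_{\mathrm{op}}\ge\sqrt2$. Applied in the separating round, this makes the objects for $a$ and $b$ at least $\sqrt2$ apart in the max metric.

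Step 3: count. A $\sqrt2$-separated set in $\prod_t \mathrm{U}(d_t)$ has size at most $\prod_t(1+\sqrt2)^{2d_t^2}$. Setting this against $2^n$ and taking logarithms gives the lemma.

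The main obstacle is Step 1: finding a \emph{single} object per string $a$ that works for every pair at once. The position $k$ where two strings first differ changes from pair to pair, so the objects must make both the prefix $0^{k-1}$ and the two suffixes recoverable for every $k$ simultaneously. I would first try the operators $U^t_a \triangleq W^{t,n}_{a_n}\cdots W^{t,1}_{a_1}$ composed with suitable inverses of all-zero prefix products, for instance $(W^{t,k-1}_0\cdots W^{t,1}_0)^{-1}$ adjusted per pair. I would check whether the needed identity $(U^t_b)^{\dagger}U^t_a=(\Sigma^t_v)^{\dagger}\Sigma^t_u$, up to conjugation by a fixed unitary, survives. If it does not, my fallback is to restrict to a structured subfamily of $a$'s, such as strings whose $1$s mark all first-fail positions in nested suffixes. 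That should lose only lower-order terms, but I would have to verify that it preserves the exact constant $2\log(1+\sqrt2)$.
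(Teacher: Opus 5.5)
Your architecture matches the paper's. You place one point per input in $\bigoplus_t M_{d_t}(\mathbb{C})$ under $\max_t\Vert\cdot\Vert_{\mathrm{op}}$, obtain $\sqrt{2}$-separation from Lemma~\ref{lem:answer-separation} applied to your transported inputs $u = 0^{k-1}1\,a_{>k}$ and $v = 0^{k-1}0\,b_{>k}$, and finish with a volume bound in real dimension $2\sum_t d_t^2$. But the proof is not complete as written. Step~1, which you flag as the main obstacle, is left open, and both exits you propose fail. Objects ``adjusted per pair'' are not admissible, because a packing argument needs one point per string, fixed before any pair is examined. Restricting to a subfamily of $N < 2^n$ strings only yields $\log N \le 2\log(1+\sqrt2)\sum_t d_t^2$, which is strictly weaker than the stated bound. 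The exact $n$ is lost, not just lower-order terms.

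The missing idea is that nothing needs adjusting. Take the plain input operators $\pazocal{T}^t_a$ (the product of the round-$t$ transitions read along $a$), and put all pair-dependence into the \emph{test vector}, which the operator norm lets you choose separately for each pair. Suppose $a,b$ share the prefix $w = a_1\cdots a_{k-1}$, with round-$t$ prefix operator $P^t_w$. Since $u,v$ agree with $a,b$ from position $k$ on,
\[
\pazocal{T}^t_a - \pazocal{T}^t_b = \bigl(\Sigma^t_u - \Sigma^t_v\bigr)P^t_w,
\qquad\text{so}\qquad
\Vert \pazocal{T}^t_a - \pazocal{T}^t_b\Vert_{\mathrm{op}} = \Vert \Sigma^t_u - \Sigma^t_v\Vert_{\mathrm{op}} \ge \bigl\Vert(\Sigma^t_u - \Sigma^t_v)|s^t_{k-1}\rangle\bigr\Vert = \sqrt{2}
\]
in the separating round. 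Equivalently, as the paper does, test against the unit vector $(P^t_w)^\dagger|s^t_{k-1}\rangle$; the two input operators send it to $|\psi^t(u)\rangle$ and $|\psi^t(v)\rangle$. The prefix $0^{k-1}$ never has to be recoverable from the object, because the witness vector supplies it. The pair-dependent conjugation you worried about is harmless precisely because it acts on the vector, not on the object. One smaller correction to Step~3: get the constant by comparing volumes directly in $\bigoplus_t M_{d_t}(\mathbb{C})$, where max-norm balls are products of per-round balls and dimensions add. Bounding the packing by a product of per-round $\delta/2$-covering numbers, as you suggest, gives the weaker base $1+2\sqrt{2}$ instead of $1+\sqrt{2}$.
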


\begin{proof}[Proof Sketch]
\vspace{-0.5ex}
The full proof is given in App.~\ref{app:first-packing-proof}. For round $t$, write \smash{$\pazocal{T}^{t}_{x}$} for the \emph{input operator} of $x$, i.e., the product of the round-$t$ transition unitaries selected by the bits of $x$, so that the round-$t$ final state is $|\psi^{t}(x)\rangle = \pazocal{T}^{t}_{x}|\psi^{t}_0\rangle$ with $|\psi^{t}_0\rangle$ the initial state.

The key step is a pairwise separation property: for \emph{every} pair of distinct inputs $x \,{\neq}\, y$, some round $t$ satisfies \smash{$\Vert\pazocal{T}^{t}_{x} - \pazocal{T}^{t}_{y}\Vert_{\mathrm{op}} \ge \sqrt{2}$}, including pairs with $\textsc{First}(x) = \textsc{First}(y)$, whose final states themselves need not be orthogonal. Let $i$ be the first position where $x$ and $y$ differ, say $x_i = 0$ and $y_i = 1$.
For each round $t$, define the unit witness vector
\smash{$|v_t\rangle \,{\triangleq}\, \bigl(\pazocal{T}^{t}_{w}\bigr)^\dagger\,
\pazocal{T}^{t}_{0^{i-1}} |\psi^t_0\rangle$}, where
$w \,{\triangleq}\, x_1\cdots x_{i-1} = y_1\cdots y_{i-1}$ is the common prefix; the
prefix segment maps $|v_t\rangle$ to the state reached by an all-zero prefix, so
the two inputs act on $|v_t\rangle$ as the modified inputs
$\tilde{x} \,{\triangleq}\, 0^i x_{i+1}\cdots x_n$ and
$\tilde{y} \,{\triangleq}\, 0^{i-1} 1\, y_{i+1}\cdots y_n$:
\[
\pazocal{T}^t_x |v_t\rangle = |\psi^t(\tilde{x})\rangle, \qquad
\pazocal{T}^t_y |v_t\rangle = |\psi^t(\tilde{y})\rangle.
\]
Now $\textsc{First}(\tilde{y}) = i$ while $\textsc{First}(\tilde{x}) \in \{\bot\} \cup \{i{+}1,\dots,n\}$: the transported answers always differ, so Lemma~\ref{lem:answer-separation} yields a round $t$ in which $|\psi^t(\tilde{x})\rangle$ and $|\psi^t(\tilde{y})\rangle$ are orthogonal. At this round, $\pazocal{T}^t_x|v_t\rangle$ and $\pazocal{T}^t_y|v_t\rangle$ are unit vectors, since the segments are unitary and $|v_t\rangle$ is a unit vector; hence
{\setlength{\abovedisplayskip}{1pt}
\setlength{\belowdisplayskip}{3pt}
\setlength{\abovedisplayshortskip}{0pt}
\setlength{\belowdisplayshortskip}{0pt}
\begin{align*}
\big\Vert\big(\pazocal{T}^{t}_{x} - \pazocal{T}^{t}_{y}\big)|v_t\rangle\big\Vert^2
~=~ 1 + 1 - 2\,\mathrm{Re}\,\langle \pazocal{T}^{t}_{x} v_t \,|\, \pazocal{T}^{t}_{y} v_t\rangle
~=~ 2,
\end{align*}
}

\noindent
and therefore $\Vert\pazocal{T}^{t}_{x} - \pazocal{T}^{t}_{y}\Vert_{\mathrm{op}} \ge \big\Vert\big(\pazocal{T}^{t}_{x} - \pazocal{T}^{t}_{y}\big)|v_t\rangle\big\Vert = \sqrt{2}$.
Consequently, the $2^n$ tuples \smash{$(\pazocal{T}^{1}_{x}, \ldots, \pazocal{T}^{T}_{x})$} are pairwise $\sqrt{2}$-separated under the norm $\max_{t}\Vert\cdot\Vert_{\mathrm{op}}$, while all of them lie in the unit ball of a real vector space of dimension \smash{$2\sum_{t} d_t^{\,2}$}. A volumetric packing argument yields \smash{$2^n \le (1+\sqrt{2})^{2\sum_t d_t^2}$}, and taking logarithms proves the lemma.
\end{proof}

\begin{theorem}
\label{thm:multi-round-general-first-complexity}
\vspace{-0.5ex}
For all $T \le n$, the general multi-round complexity of \emph{\textsc{FirstFail}} is $S = \Theta\big(\log(1 + \frac{n}{T})\big)$.
\end{theorem}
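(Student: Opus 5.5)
The proof combines the lower bound of Thm.~\ref{thm:multi-round-general-first-lower-bound} with the disjoint upper bound of Thm.~\ref{thm:multi-round-disjoint-first-upper-bound}.

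For the upper bound, Strategy~\ref{str:partitioned-index-transposition} is a disjoint multi-round strategy and hence in particular a general one. By Thm.~\ref{thm:multi-round-disjoint-first-upper-bound} it satisfies $(2^S-1)\,T < 11n$ for every $T \le n$. The same manipulation as in the proof of Thm.~\ref{thm:multi-round-disjoint-exist-complexity} then gives $S < \log(1+11n/T) = O\big(\log(1+\frac{n}{T})\big)$, using $\log(1+11y) \le \log 11 + \log(1+y)$ together with $\log(1+n/T) \ge 1$ for $T \le n$.

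For the lower bound, Thm.~\ref{thm:multi-round-general-first-lower-bound} gives $T\cdot 4^S \ge n/c$ with the constant $c = 2\log(1+\sqrt 2)$. Rearranging yields $S \ge \frac12 \log\frac{n}{cT}$. This is only meaningful when $n/T$ is larger than some constant. I would therefore split into two regimes.

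\begin{itemize}
\item If $n/T \ge c^2$, then $\frac{n}{cT} \ge \sqrt{n/T}$. This gives $S \ge \frac14\log(n/T) \ge \frac18 \log(1+n/T)$, since $\log(1+y) \le 2\log y$ for $y \ge c^2 > 2$.
\item If $n/T < c^2$, then $\log(1+n/T)$ is bounded by the constant $\log(1+c^2)$. Since any strategy solving \textsc{FirstFail} must use at least one ancilla ($S \ge 1$, as otherwise the transcript is empty and cannot distinguish $\bot$ from index $1$), we get $S \ge \log(1+n/T)/\log(1+c^2)$.
\end{itemize}

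Combining the two regimes gives $S = \Omega\big(\log(1+\frac{n}{T})\big)$ uniformly. Together with the upper bound this yields $\Theta$.

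The only mildly delicate step is the small-ratio regime, where the packing bound becomes vacuous. It is handled by the trivial $S\ge1$ bound. Everything else is routine algebra on the two cited theorems.
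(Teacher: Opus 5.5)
Your proposal is correct and takes the same approach as the paper: the upper bound comes from viewing the disjoint Strategy~\ref{str:partitioned-index-transposition} as a general strategy with $(2^S-1)T<11n$, and the lower bound from $T\cdot 4^S \ge n/\bigl(2\log(1+\sqrt{2})\bigr)$ once $n/T$ exceeds a constant, with the trivial $S\ge 1$ bound covering $n/T=O(1)$. Your explicit threshold $c^2$ and factor $\tfrac18$ simply make concrete the ``$-O(1)$'' bookkeeping that the paper leaves implicit.
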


\begin{proof}
\vspace{-0.5ex}
For the lower bound, Thm.~\ref{thm:multi-round-general-first-lower-bound} gives $T \cdot 4^{S} \ge n/\big(2\log(1+\sqrt{2})\big)$, i.e., $S \ge \frac{1}{2}\log\frac{n}{T} - O(1)$, which is $\Omega\big(\log(1+\frac{n}{T})\big)$ once $\tfrac{n}{T}$ exceeds a suitable constant; when $\frac{n}{T} = O(1)$, it suffices that $S \ge 1$ (an $S = 0$ strategy has a constant transcript and cannot distinguish two patterns with different outputs). For the upper bound, Strategy~\ref{str:partitioned-index-transposition} is disjoint, hence in particular general multi-round; by Thm.~\ref{thm:multi-round-disjoint-first-upper-bound} it satisfies $(2^S - 1)\cdot T < 11n$, i.e., $S = O\big(\log(1 + \frac{n}{T})\big)$.
\end{proof}

\paragraph{Hardness of \textsc{FirstFail} under General Multi-Round Setting.}
\vspace{-0.5ex}
The result stands in sharp contrast to \textsc{ExistFail}, where encoding the same failure pattern in multiple rounds beats the disjoint trade-off (Sec.~\ref{sec:general-multi-round-exist}). The obstruction is the index-revealing answer of \textsc{FirstFail}: transporting two distinct inputs onto a common witness state makes their answers differ at the first disagreement position, so every pair of inputs must be separated by some round, leaving no room for the multi-encoding compression (e.g., counting failures modulo a chosen integer per round) that benefits \textsc{ExistFail}.

%% file: main/list.tex
\section{Complexity of the \textsc{ListAll} Task}
\label{sec:listAll}

This section studies the \textsc{ListAll} task over $n$ assertions, which performs full reporting of the $n$-bit failure pattern.
Unlike \textsc{ExistFail} and \textsc{FirstFail}, the time--space complexity of \textsc{ListAll} is governed purely by the total amount of information that must be extracted from the program.

\begin{theorem}
\label{thm:multi-round-list-lb}
For any $T$-round strategy that solves \emph{\textsc{ListAll}}, it holds that $S\cdot T \ge n$.
\end{theorem}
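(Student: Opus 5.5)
Since $S\cdot T\ge n$ is an information-theoretic statement, the plan is a transcript-counting argument. No dimension or packing machinery should be needed, because \textsc{ListAll} must recover the whole input.

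First, apply Lemma~\ref{lem:reduction} to turn the strategy into a $T$-round unitary transition system that solves \textsc{List}. Its round-$t$ Hilbert space has dimension $2^{S_t}$ with $S_t\le S$. Since the decoder reads the transcript $(r_1,\ldots,r_T)$, and each $r_t$ is a bit string of length at most $S_t$ (only a subset of ancillas is measured), the set of possible transcripts has size at most $\prod_t 2^{S_t}\le 2^{ST}$.

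Second, show that the map from inputs to transcripts is injective in a suitable sense. For every input $x$, fix one transcript $\tau(x)$ that occurs with nonzero probability. The decoder is deterministic and correct with certainty, so $\textsf{Dec}(\tau(x))=x$. Hence $\tau$ is injective on $\{0,1\}^n$, since $\tau(x)=\tau(y)$ would force $x=\textsf{Dec}(\tau(x))=y$. Equivalently, Lemma~\ref{lem:answer-separation} says that any two distinct inputs have distinct \textsc{List} answers, so some round gives them disjoint supports. That rules out any shared positive-probability transcript, and it is the same fact. It follows that $2^n\le 2^{ST}$, i.e.\ $S\cdot T\ge n$.

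The only delicate point is that the outcomes are probabilistic. The number of transcripts bounds the number of distinguishable inputs only because a correct-with-certainty decoder cannot let two inputs share any reachable transcript. The choice of a witness $\tau(x)$ handles this, so I expect no real obstacle.
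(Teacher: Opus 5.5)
Your proposal is correct and follows essentially the same transcript-counting argument as the paper: there are at most $2^{ST}$ transcripts, and a deterministic, always-correct decoder forces distinct failure patterns to have disjoint sets of reachable transcripts, so $2^n \le 2^{ST}$. Your detour through Lemma~\ref{lem:reduction} and the witness map $\tau$ makes explicit two points the paper leaves implicit: that every pattern in $\{0,1\}^n$ is realized by some program, and how the probabilistic outcomes are handled.
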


\begin{proof}\vspace{-0.5ex}
An output of \textsc{ListAll} is an arbitrary subset of $[n]$, hence there are $2^n$ possible outputs.
The strategy measures at most $S$ ancilla qubits per round, therefore obtains at most $2^{ST}$ possible outcome transcripts.
Because distinct outputs must correspond to disjoint transcripts, $2^{ST}\ge 2^n$.
Taking logarithms, $S\cdot T \ge n$ follows.
\end{proof}

\begin{theorem}
\label{thm:multi-round-list-ub}
For every $T \le n$, there exists a $T$-round strategy solving \emph{\textsc{ListAll}} with $S=\lceil n/T\rceil$. In particular, $S\cdot T \le n+T$.
\end{theorem}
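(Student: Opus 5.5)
The plan is to build the strategy directly by interpolating between Strategy~\ref{str:one-round-n-anc} and Strategy~\ref{str:n-round-one-anc}. Set $L \triangleq \lceil n/T\rceil$. Partition $[n]$ into $T$ disjoint contiguous index sets $\mathcal{E}_1,\ldots,\mathcal{E}_T$ in program order, each of size at most $L$. Because $T \le n$, we can choose the blocks so that every one is nonempty; for example, use sizes $\lceil n/T\rceil$ or $\lfloor n/T\rfloor$.

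Round $t$ uses an instrumentation $\pazocal{I}_t$ with an ancilla register of size $L$ and enabled set $\mathcal{E}_t$. For the $j$-th element $i$ of $\mathcal{E}_t$ (in increasing order), the assertion-handling block is the single checker call $B_i = C_{i\to a_j}$, where $a_j$ is the $j$-th ancilla qubit. Equivalently, round $t$ runs Strategy~\ref{str:one-round-n-anc} restricted to the assertions in $\mathcal{E}_t$. The round measures its whole ancilla register. The decoder concatenates the first $|\mathcal{E}_t|$ bits of each $R_t$ in round order and outputs this concatenation as $F$.

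For correctness, I will apply Lem.~\ref{lem:ancilla-evolution} to $\pazocal{I}_t$. Each $W^{t,i}_{b}$ is $X_{a_j}^{b}$ when $i\in\mathcal{E}_t$ and the identity otherwise. The final ancilla state of round $t$ is therefore the basis state whose $j$-th bit is $F_i$ for the $j$-th element $i$ of $\mathcal{E}_t$, and whose remaining bits are $0$. Measuring this state reveals those bits with certainty. Since the $\mathcal{E}_t$ partition $[n]$, the decoder recovers every $F_i$, and hence the set $\{i \mid F_i=1\}$.

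For the cost, $S = \max_t L = \lceil n/T\rceil$ by construction, and $\lceil n/T\rceil \le n/T + 1$ gives $S\cdot T \le n+T$. The operational costs follow as in Strategy~\ref{str:one-round-n-anc}: $C=1$, $G=0$, and $M \le TL$; measuring only the used qubits even gives $M=n$. The only point needing care, and it is minor, is making the blocks nonempty, or at least of size at most $L$, which the hypothesis $T\le n$ guarantees. There is no real obstacle here: the theorem is the easy, constructive half that matches Thm.~\ref{thm:multi-round-list-lb}.
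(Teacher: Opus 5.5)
Your proposal is correct and takes the same approach as the paper's proof (Strategy~\ref{str:multi-round-listAll}). Both partition the assertions into $T$ contiguous blocks of size at most $\lceil n/T\rceil$, record each block's outcomes with one ancilla per enabled assertion, concatenate the readouts, and bound $S\cdot T = T\lceil n/T\rceil \le n+T$. Your appeal to Lem.~\ref{lem:ancilla-evolution} for correctness only makes explicit what the paper leaves implicit.
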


\begin{proof}
\vspace{-0.5ex}
We construct the following strategy:

\begin{strategy}[\textsc{\emph{Partitioned Full Reporting}}]
\vspace{-0.5ex}
\label{str:multi-round-listAll}
Partition the $n$ assertions into $T$ disjoint contiguous index sets $\mathcal{E}_1, \ldots, \mathcal{E}_T$, each of size at most $\lceil n/T\rceil$.
In round $t$, enable only the assertions in $\mathcal{E}_t$ and allocate one ancilla per enabled assertion to record the full local failure pattern.
After all $T$ rounds, concatenate the $T$ recorded bit strings to obtain the full failure pattern on all $n$ assertions.
\end{strategy}

\vspace{-0.5ex}
Strategy~\ref{str:multi-round-listAll} uses one ancilla per enabled assertion in each round, so its per-round space cost is $S=\lceil n/T\rceil$.
Hence,
$S\cdot T = T\cdot \lceil n/T\rceil \le n+T$.
\end{proof}

\paragraph{Cost of Strategy~\ref{str:multi-round-listAll}.}
\vspace{-0.5ex}
The strategy uses $T$ rounds and $S\,{=}\,\lceil n/T\rceil$ per round.
Each assertion is enabled in one round, checked once, and contributes exactly one measurement.
Thus, $C\,{=}\, 1$ and $M\,{=}\,n$.
No additional ancilla processing is added, hence $G=0$. As two extremes, setting $T=1$ yields the \emph{single-round/n-ancillas} strategy, and setting $T=n$ yields the \emph{single-ancilla/n-rounds} strategy.

\begin{corollary}
\label{cor:single-round-list-complexity}
The single-round space complexity of \emph{\textsc{ListAll}} satisfies $S=\Theta(n)$.
\end{corollary}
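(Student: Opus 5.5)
The plan is to specialize the two preceding theorems to $T=1$. For the lower bound, I would invoke Thm.~\ref{thm:multi-round-list-lb}. A single-round strategy (Def.~\ref{def:stra}) is the case $T=1$ of a multi-round strategy (Def.~\ref{def:multi-stra}), so any single-round strategy solving \textsc{ListAll} satisfies $S \cdot 1 \ge n$, that is, $S \ge n$. This bound also follows directly from the counting argument given there. There are $2^n$ distinct correct outputs, one for each failure pattern, realized by the trivial one-qubit program instances from the footnote to Lem.~\ref{lem:reduction}. These outputs must be produced by pairwise distinct measurement outcomes of at most $S$ ancilla qubits, so $2^S \ge 2^n$.

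For the upper bound, I would invoke Thm.~\ref{thm:multi-round-list-ub} with $T=1$, which is allowed since $1 \le n$. This gives a strategy with $S = \lceil n/1\rceil = n$. Concretely, it is Strategy~\ref{str:multi-round-listAll} with a single part $\mathcal{E}_1=[n]$, which coincides with the single-round/$n$-ancillas Strategy~\ref{str:one-round-n-anc}. Its correctness follows from Lem.~\ref{lem:ancilla-evolution}: the $i$-th ancilla ends in $|F_i\rangle$. Combining the two bounds gives $S=n$ exactly, and in particular $S=\Theta(n)$.

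No step is a real obstacle. The only point to check is that the strategy class in Thm.~\ref{thm:multi-round-list-lb} includes single-round strategies, and that the $T=1$ instance of Strategy~\ref{str:multi-round-listAll} is a legitimate single-round strategy in the sense of Def.~\ref{def:stra}. Both hold by definition.
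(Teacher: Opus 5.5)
Your proposal is correct and matches the paper's proof, which obtains the corollary immediately by setting $T=1$ in Thm.~\ref{thm:multi-round-list-lb} and Thm.~\ref{thm:multi-round-list-ub}. Your extra remarks (the direct counting argument and the identification of the $T=1$ instance with Strategy~\ref{str:one-round-n-anc}) are consistent with the paper but not needed.
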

\begin{proof}
\vspace{-0.5ex}
Immediate from Thm.~\ref{thm:multi-round-list-lb} and Thm.~\ref{thm:multi-round-list-ub} by setting $T=1$.
\end{proof}

\begin{theorem}
\label{thm:multi-round-list-complexity}
For all $T \le n$, the multi-round time--space complexity of \emph{\textsc{ListAll}} is $S = \Theta(\frac{n}{T})$.
\end{theorem}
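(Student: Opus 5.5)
The plan is to sandwich $S$ between the two preceding results, Thm.~\ref{thm:multi-round-list-lb} and Thm.~\ref{thm:multi-round-list-ub}, with constants that do not depend on $n$ or $T$. Since both results hold for general multi-round strategies, and the constructive strategy is disjoint, the same bound covers the disjoint and general columns of Table~\ref{tab:complexity}.

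For the lower bound, I would apply Thm.~\ref{thm:multi-round-list-lb}: every $T$-round strategy solving \textsc{ListAll} has $S\cdot T \ge n$, so $S \ge n/T$ and hence $S = \Omega(n/T)$. The counting argument in that theorem makes no use of disjointness, so the bound is unconditional over all multi-round strategies.

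For the upper bound, I would invoke Thm.~\ref{thm:multi-round-list-ub} with the given $T \le n$. Strategy~\ref{str:multi-round-listAll} then solves \textsc{ListAll} with
\[
S = \lceil n/T \rceil \le n/T + 1 \le 2n/T,
\]
where the last step uses $T \le n$, i.e., $n/T \ge 1$. Thus $S = O(n/T)$, and combining the two sides gives $S = \Theta(n/T)$.

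The only step needing care is this ceiling estimate. The additive $+1$ in $\lceil n/T\rceil$ is absorbed into a constant factor only because the hypothesis $T \le n$ keeps $n/T$ bounded below by $1$; without it the ceiling could not be bounded by a constant multiple of $n/T$. Everything else is a direct combination of the two theorems.
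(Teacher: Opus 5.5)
Your proposal is correct and matches the paper's proof. The paper takes the lower bound directly from Thm.~\ref{thm:multi-round-list-lb}, and for the upper bound writes $S\cdot T \le n+T \le 2n$ via Thm.~\ref{thm:multi-round-list-ub} and $T\le n$, which is your estimate $\lceil n/T\rceil \le 2n/T$ multiplied through by $T$.
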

\begin{proof}
\vspace{-0.5ex}
The lower bound is Thm.~\ref{thm:multi-round-list-lb}.
For the upper bound, Thm.~\ref{thm:multi-round-list-ub} gives $S\cdot T \le n+T \le 2n$ for every $T\le n$.
Therefore $S\cdot T = \Theta(n)$, and hence $S = \Theta(\frac{n}{T})$.
\end{proof}

%% file: main/landscape.tex
\section{Case Study and Landscape of Assertion Checking Strategies}
\label{sec:landscape}

In this section, we shift our analysis from asymptotic complexity to the concrete landscape of assertion checking strategies in practice.
Using a case study on assertion checking for Grover's quantum search algorithm, we summarize the families of checking strategies developed in this work, the trade-off profiles they realize, and the distinct design points they occupy.

\input{main/landscape-strategies}
\input{main/landscape-case}

%% file: main/landscape-strategies.tex
\subsection{Trade-Off Landscape of Proposed Strategies}
\label{sec:trade-off}
\definecolor{DarkGreen}{RGB}{0,102,34}%
Table~\ref{tab:trade-off} summarizes the strategy families developed in Sec.~\ref{sec:exist}--\ref{sec:listAll}.
The \textcolor{DarkGreen}{green} rows report our two primary complexity measures: the number of rounds $T$ and the per-round ancilla qubit cost~$S$.
The \textcolor{RoyalBlue}{blue} rows additionally report the operational costs over all rounds: the numbers of single-qubit measurements $M$, per-assertion checker calls $C$, and non-checker gate cost $G$.

\paragraph{Trade-off Profiles}
\vspace{-0.5ex}
The strategies realize each task's basic time--space trade-off under disjoint rounds: modulo increment and index transposition attain $(2^S - 1) \,{\cdot}\, T \,{=}\, \Theta(n)$ for \textsc{ExistFail} and \textsc{FirstFail}, and full reporting attains the optimal $S \,{\cdot}\, T \,{=}\, \Theta(n)$ for \textsc{ListAll}, subsuming the two baselines (Strategy~\ref{str:n-round-one-anc} and ~\ref{str:one-round-n-anc}) as its $T = 1$ and $T = n$ endpoints. Beyond disjoint rounds, the two partial-information tasks behave differently: overlapping rounds let LCM fingerprinting push \textsc{ExistFail} to the optimal product $S \,{\cdot}\, T \,{=}\, \Theta(\log n)$, whereas for \textsc{FirstFail} the possibility of overlap brings no asymptotic improvement.

\begin{table}[t]
\centering
\begin{threeparttable}
\caption{Summary of time--space trade-offs provided by the proposed assertion checking strategies.\vspace{-1.5ex}}
\label{tab:trade-off}
{\footnotesize
\renewcommand{\arraystretch}{1.1}
\setlength{\tabcolsep}{10pt}
\definecolor{mygreen}{rgb}{0.913,0.9535,0.8925}%
\begin{tabular}{lcccc}
\toprule
 & \multicolumn{2}{c}{\textsc{ExistFail}} & \textsc{FirstFail} & \textsc{ListAll} \\
 \cmidrule(lr){2-3} \cmidrule(lr){4-4} \cmidrule(lr){5-5}
 & \strategyhead{Modulo Increment}{str:partitioned-modulo-increment}
 & \strategyhead{LCM Fingerprint}{str:lcm-fingerprint}
 & \strategyhead{Index Transposition}{str:partitioned-index-transposition}
 & \strategyhead{Full Reporting}{str:multi-round-listAll} \\
\midrule
\\[-3.5ex]

\rowcolor{mygreen!50} \multicolumn{1}{l}{$\,\textcolor{DarkGreen}{T}$}
& any $T \le n$
& $O(\log n / \log \log n)$
& any $T \le n$
& any $T \le n$ \\

\rowcolor{mygreen!50} \multicolumn{1}{l}{$\,\textcolor{DarkGreen}{S}$}
& $\lceil \log (\lceil n/T \rceil \,{+}\, 1) \rceil \,{+}\, 1$
& $ \max_{t} \lceil \log \mu_t \rceil  \,{+}\, 1$
& $\lceil \log(\lceil n/T\rceil \,{+}\, 1) \rceil \,{+}\, 1$
& $\lceil n/T \rceil$ \\
\\[-2.5ex]

\rowcolor{CornflowerBlue!5} \multicolumn{1}{l}{$\textcolor{RoyalBlue}{M}$}
& $T$
& $T$
& $T \cdot \lceil \log(\lceil n/T\rceil \,{+}\, 1) \rceil$
& $n$ \\

\rowcolor{CornflowerBlue!5} \multicolumn{1}{l}{$\textcolor{RoyalBlue}{C}$}
& $2$
& $2T$
& $2$
& $1$ \\

\rowcolor{CornflowerBlue!5} \multicolumn{1}{l}{$\textcolor{RoyalBlue}{G}$}
& O$(n\log(\lceil n/T\rceil \,{+}\, 1))$
& O$(n \textstyle \sum_t \lceil \log \mu_t  \rceil )$
& O$(n\log(\lceil n/T\rceil \,{+}\, 1))$
& 0 \\
\\[-3ex]
\bottomrule
\end{tabular}}
\begin{tablenotes}[flushleft]
\footnotesize
\vspace{0.5ex}
\item In Strategy~\ref{str:lcm-fingerprint}, $\mu_1,\ldots,\mu_T$ are $T$ integers such that $\forall t ,\mu_t\ge 2$, and $\mathrm{lcm}(\mu_1,\ldots,\mu_T) > n$.
\end{tablenotes}
\end{threeparttable}
\vspace{-1ex}
\end{table}

\paragraph{Strategy vs. End-to-End Costs}
\vspace{-0.5ex}
Table~\ref{tab:trade-off} reports only the cost of the checking strategies, i.e., that of coordination alone.
$S$ counts only the ancillas used to route, aggregate, and read out assertion outcomes, and $G$ counts only non-checker gates.
We deliberately separate these costs from the bare program and the internal implementation of each checker unitary to emphasize their independence from the particular program, assertion predicates, and checker constructions.
By contrast, we next analyze the total end-to-end costs of instantiating these strategies on an explicit program.

%% file: main/landscape-case.tex
\subsection{Case Study: Grover's Algorithm}
\label{sec:case}

We now present a case study that instantiates our assertion-checking strategies on Grover's search algorithm~\citep{grover}, compiles them into executable circuits, and reports the resulting resource costs.

\paragraph{Implementation}
\vspace{-0.5ex}
We implemented a Qiskit library providing the four strategy families of Table~\ref{tab:trade-off} as uniform program transformations.
Given any program and checker circuits, our implementation produces a static per-round circuit, which uses fresh ancillas and terminal measurement only.
Our test suite validates all program examples, including those in Secs.~\ref{sec:exist}--\ref{sec:listAll}, in end-to-end simulation on smaller instances, and counts gates in the largest instances where simulation is intractable.

\paragraph{Program and Assertions}
\vspace{-0.5ex}
We analyze Grover's algorithm as expressed in pseudocode below, with the understanding that for analysis the loop will be statically unrolled as specified in Def.~\ref{def:prog}.

\vspace{0.5ex}
Let $q \triangleq (q_1,\ldots,q_m)$ be the $m$-qubit search register and let $\alpha$ be the one-qubit oracle ancilla.
Let $f:\{0,1\}^m \to \{0,1\}$ specify the marked items, and let $O_f$ be the standard bit-oracle $O_f\ket{x}\ket{y}=\ket{x}\ket{y\oplus f(x)}$.
We assume that there exists one marked string $w^* \,{\in}\, \{0,1\}^m$ such that $f(w^*)=1$.
Let $\ket{s}\triangleq \ket{+}^{\otimes m}$ be the uniform superposition, and let $U_{\text{diff}}\triangleq 2\ketbra{s}{s}-I$ be the Grover diffusion operator. The standard Grover iterate is one oracle call followed by one application of $U_{\text{diff}}$:

{
\vspace{-2ex}
\small
\setlength{\abovedisplayskip}{0pt}
\setlength{\belowdisplayskip}{3pt}
\setlength{\abovedisplayshortskip}{0pt}
\setlength{\belowdisplayshortskip}{4pt}
\begin{alignat*}{2}
\quad &\textcolor{violet}{\text{input } m = 12} && \text{\textsf{for} } i := 1 \text{ \textsf{to} } itr \text{ \textsf{do} } \\
&\smash{itr = \lfloor \pi/4 \cdot 2^{m/2} \rfloor} ; && \quad (q_1,\ldots,q_m,\alpha):= O_f(q_1,\ldots,q_m,\alpha); \\
&(q_1,\ldots, q_m) = H^{\otimes m}(q_1,\ldots,q_m); && \quad \textsf{assert}(\alpha;~\ketbra{-}{-}); \textcolor{violet}{\text{\small $\ // A_{2i+1}$}} \\
&\textsf{assert}(q_1,\ldots,q_m;~| s\rangle  \langle s|); \textcolor{violet}{\text{\small{$\ // A_1$}}} \quad && \quad (q_1,\ldots,q_m) := U_{\text{diff}}(q_1,\ldots,q_m); \\
&\alpha := X (\alpha); ~~ \alpha := H (\alpha); && \quad \textsf{assert}(q_1,\ldots,q_m;~P_{\text{sym}}); \textcolor{violet}{\text{\small $\ //A_{2i+2}$}} \\
&\textsf{assert}(\alpha;~\ketbra{-}{-}); \textcolor{violet}{\text{\small{$\ // A_2$}}} && \text{\textsf{measure}}(q_1,\ldots,q_m);
\end{alignat*}
}

We take $m = 12$, so the search space has size $2^m = 4096$, and use $itr = \lfloor \pi/4 \cdot 2^{m/2} \rfloor = 50$ iterations. After unrolling the loop, the program contains $2 + 2\,itr = 102$ assertions $A_1, \ldots, A_{102}$: the assertion after the $i$-th oracle call is $A_{2i+1}$, and the assertion after the following diffusion step is $A_{2i+2}$.

The assertions capture essential program invariants.
Assertion $A_1$ checks correct initialization of the search register, namely that after applying $H^{\otimes m}$ the state is $|s\rangle$. Assertion $A_2$ checks correct initialization of the oracle ancilla in $|-\rangle$, which is required for phase kickback, and each $A_{2i+1}$ checks that the ancilla remains in $|-\rangle$ after the oracle call: when initialized in $|-\rangle$, the bit-oracle satisfies $O_f(|x\rangle|-\rangle) = (-1)^{f(x)}|x\rangle|-\rangle$, so the ancilla is unchanged and only contributes phase.
Assertions $A_{2i+2}$ check a symmetry invariant: all unmarked basis states carry a common amplitude. The assertion spot-checks the $2^k$ unmarked strings sharing a fixed prefix $c \in \{0,1\}^{m-k}$ (any $c$ differing from the prefix of $w^*$): their amplitudes are equal when, conditioned on the first $m-k$ qubits being $|c\rangle$, the remaining $k$ qubits are in $|+\rangle{}^{\otimes k}$, i.e., the projector $P_{\mathrm{sym}} \triangleq (I - |c\rangle\langle c|) \otimes I
  + |c\rangle\langle c| \otimes (|{+}\rangle\langle{+}|){}^{\otimes k}$.  We set $k=4$.
This assertion passes on every state in the Grover subspace and fires on
symmetry-breaking faults, e.g., a mis-implemented diffusion operator.

All of the above assertions can be realized as checker unitaries by adapting the \emph{logical-OR circuit} of \citet{ApproximateAssertion2021} (App.~\ref{app:A:runtime}), taking $O(m)$ gates for $A_1$ and $A_{2i+2}$, and $O(1)$ gates for the others.

\paragraph{Test Instances}
\vspace{-0.5ex}
We instantiate the oracle $O_f$ with two problems of different gate costs: (a)
A \emph{3-SAT oracle} marks the unique satisfying assignment of a 12-variable 3-SAT formula; and (b) A \emph{preimage oracle} marks the unique preimage
\smash{$w^* = g^{-1}(t)$} of a public target string $t$ under a fixed shallow reversible permutation $g$.
By keeping the search size, the iteration count, and the assertions fixed, the checker-to-program ratio, i.e., the total gate count of the checker unitaries (with each assertion counted once) as a fraction of the bare execution, becomes a controlled variable of the study.
This ratio governs the price in gates that each strategy pays to save qubits in each program.

\paragraph{Compilation}
\vspace{-0.5ex}
We use Qiskit to compile each instrumented Grover's instance to the Clifford + $R_z(\pi/4)$ gate set, the standard universal gate set for fault-tolerant quantum computers. We report their counts separately and disable rewriting passes that would confound our analysis. One work qubit, shared across all multi-controlled gate decompositions, is included in the qubit counts.

We perform only one optimization to remove unnecessary overhead. Each round of a partitioned strategy ends right after the last assertion of its enabled block: the remaining program gates act on the program register alone and cannot change the outcome distribution of the measured ancillas.
We truncate them, reducing total program cost from $T$ full executions to roughly $(T{+}1)/2$.

\begin{table}[t] \caption{Resource costs of the instrumented Grover's algorithm ($m = 12$, $n = 102$) for both oracles. Unlike the strategy-level $S$ and $G$ in Table~\ref{tab:trade-off} that are independent of the program and checker circuits, the columns here report full end-to-end costs of the compiled circuits. The total measurements $M_{\mathrm{tot}} = M + T \cdot m$ where $M$ is from Table~\ref{tab:trade-off}.
All percentages are relative to single-round Full-Reporting, i.e., the $T=1,S=n$ \textsc{ListAll} baseline.}
\label{tab:grover}
\vspace*{-2ex}%
\centering
\footnotesize
\setlength{\tabcolsep}{2.75pt}
\begin{tabular}{ccc| rrr| rrr| r}
\toprule
& & & \multicolumn{3}{c|}{3-SAT oracle} & \multicolumn{3}{c|}{Preimage oracle} & \\ \cmidrule(lr){4-6} \cmidrule(lr){7-9} Task & Strategy & $T$ & Qubits & Clifford & $R_z(\pi/4)$ & Qubits & Clifford & $R_z(\pi/4)$ & $M_{\mathrm{tot}}\hspace{-0.5ex}$ \\ \midrule \multirow{2}{*}{\shortstack{\textsc{List}\\\textsc{All}}} & \multirow{2}{*}{\shortstack{\textsf{Full-}\\\textsf{Reporting}}} & 1 & 122 ($0\%$) & 360.1k ($0\%$) & 269.4k ($0\%$) & 116 ($0\%$) & 38.6k ($0\%$) & 23.9k ($0\%$) & 114 \\ & & 2 & 71 ($-42\%$) & 533.3k ($+48\%$) & 400.4k ($+49\%$) & 65 ($-44\%$) & 51.2k ($+33\%$) & 32.2k ($+35\%$) & 126 \\
\midrule
\multirow{3}{*}[-0.8ex]{\shortstack{\textsc{Exist}\\\textsc{Fail}}} & \multirow{2}{*}{\shortstack{\textsf{Modulo-}\\\textsf{Increment}}} & 1 & 28 ($-77\%$) & 396.5k ($+10\%$) & 293.1k ($+9\%$) & 22 ($-81\%$) & 74.8k ($+94\%$) & 47.6k ($+99\%$) & 13 \\ & & 2 & 27 ($-78\%$) & 563.0k ($+56\%$) & 419.4k ($+56\%$) & 21 ($-82\%$) & 80.7k ($+109\%$) & 51.2k ($+114\%$) & 26 \\ \cmidrule[0.25pt](lr){2-10} & \textsf{LCM}$(8,13)$ & 2 & 25 ($-80\%$) & 774.3k ($+115\%$) & 571.8k ($+112\%$) & 19 ($-84\%$) & 130.9k ($+239\%$) & 80.8k ($+238\%$) & 26 \\
\midrule
\multirow{2}{*}{\shortstack{\textsc{First}\\\textsc{Fail}}} & \multirow{2}{*}{\shortstack{\textsf{Index-}\\\textsf{Transposition}}} & 1 & 28 ($-77\%$) & 415.9k ($+15\%$) & 302.5k ($+12\%$) & 22 ($-81\%$) & 94.2k ($+144\%$) & 57.0k ($+139\%$) & 19 \\ & & 2 & 27 ($-78\%$) & 575.7k ($+60\%$) & 425.5k ($+58\%$) & 21 ($-82\%$) & 93.4k ($+142\%$) & 57.2k ($+139\%$) & 36 \\
\bottomrule
\end{tabular}
\end{table}

{
\clubpenalty=0
\widowpenalty=0
\paragraph{Results}
\vspace{-0.5ex}
Table~\ref{tab:grover} reports the total qubits and gates for the instrumented Grover's algorithm for each strategy and each oracle.
In addition to the strategy-level costs in Table~\ref{tab:trade-off}, these totals include the bare Grover circuit and the checker circuits and ancillas they use.
We emphasize that this table should be read not as implying any universally dominant strategy but as a concrete exploration of the design space for quantum program testing.
Indeed, we make the following observations:
\begin{itemize}[topsep=2pt,leftmargin=15pt,itemsep=2pt]
\item \emph{We witness the space savings.}
For both 3-SAT and preimage, \textsc{ExistFail} and \textsc{FirstFail} require $77\%$ to $84\%$ fewer qubits than \textsc{ListAll}, reinforcing the asymptotics of Table~\ref{tab:complexity}. At $n = 102$, every logarithmic-size register fits in at most $8$ qubits, compared to $102$ qubits for \textsc{ListAll}.

\item \emph{Their price in gates is relatively cheaper for more complex programs.}
Because the cost of assertion checking is amortized against the bare program, the relative gate overhead of assertion checking shrinks as the program grows.
For the simpler preimage oracle, assertion checking takes $48.6\%$ of the overall gate budget, whereas it takes only $3.4\%$ for the more complex 3-SAT oracle.
Correspondingly, the $R_z(\pi/4)$ gates overhead introduced by single-round \textsc{ExistFail} is $+99\%$ for the preimage oracle but only $+9\%$ for 3-SAT\@.

\item \emph{More rounds buy less space at task-dependent rates.}
Moving from $T = 1$ to $T = 2$ nearly halves qubits for \textsc{ListAll} ($116 \to 65$, $122 \to 71$), but saves only one qubit for \textsc{ExistFail} and \textsc{FirstFail} ($22 \to 21$, $28 \to 27$).
This result is consistent with asymptotics: $S = \Theta(n/T)$ shrinks multiplicatively with greater $T$, whereas $S = \Theta(\log(1 + n/T))$ shrinks by about one qubit per doubling of $T$.
One round is therefore already near optimal for partial-information tasks.
\end{itemize}
}

\paragraph{Summary}
\vspace{-1ex}
Our empirical results suggest that the optimal checking strategy depends both on the information that a developer desires to learn and on the hardware resource that is the bottleneck.
When qubits are scarce, the single-round partial-information strategies are a sweet spot whose gate premium shrinks as the program grows more complex.
Full reporting uses the fewest gates when qubits are plentiful, while also potentially benefiting from multi-round strategies.

%% file: main/rela.tex
\section{Related Work}

\paragraph{Quantum Runtime Assertions}
Prior research on runtime assertions for quantum programs has extensively studied the expressiveness, realization, and optimization of programs containing assertions.
At the formal level, statistical assertions~\cite{statisticalAssertions2019}, runtime assertion circuits~\cite{RuntimeAssertionCircuts2020}, approximate assertions~\cite{ApproximateAssertion2021}, and projection-based assertions~\cite{ProjectionBased2020} progressively enlarge the class of predicates that can be checked, with projection-based assertions giving the most general assertion predicate of subspace membership.
At the practical level, recent work also studies how to evaluate assertions on noisy devices, e.g., by slicing a program into smaller pieces~\cite{assertionSlicing}, or how to relocate and refine assertions to improve error localization~\cite{moveAssertionsAround}.
Prior work has thus thoroughly studied the cost of realizing individual assertions, whether via a circuit instantiating the checker unitary~\cite{RuntimeAssertionCircuts2020,ApproximateAssertion2021} or via a projective measurement~\cite{ProjectionBased2020} that can be transformed into a checker unitary (App.~\ref{app:A:Proq}).

We study a complementary source of cost: coordinating the checking of many assertions in a quantum program, under a physically motivated machine model where qubits and measurements are costly or limited.
Existing approaches either
(i) destructively measure program qubits and repeat program execution for each assertion~\cite{statisticalAssertions2019,assertionSlicing},
(ii) assume mid-circuit measurements~\cite{ProjectionBased2020}, or
(iii) effectively realize the single-round/$n$-ancillas baseline strategy by allocating one ancilla per assertion~\cite{RuntimeAssertionCircuts2020,ApproximateAssertion2021}.
Our results are orthogonal to the means of realizing any individual predicate, and instead establish fundamental costs that are applicable to all of the above schemes.

\paragraph{Quantum Program Testing and Analysis}
\vspace{-1ex}
Beyond runtime assertions, researchers have adapted a broad spectrum of software testing methodologies to quantum programs, ranging from test input generation~\cite{CombinationalTesting2021,MultiSubroutineTesting2024,SearchBasedTesting2022,Fuzz2021,propertyBasedTesting2020} to adequacy criteria~\cite{QUantumInpuTOutput2021,cretira2024,mutation2021,mutation2022a,mutation2022b} and statistical output analysis~\cite{fpaaTesting}.
Other work studies quantum programming from complementary perspectives, spanning programming language semantics~\cite{communicatingQuantumProcesses,Qunity,easyAsPi}, the design of logics and tools for formal verification~\cite{FloydHoareQuantum,relationalProofs,CoqQ,RapunSL,verifyCircuitWithTreeAutomata,pushBottonVerification}, and the estimation and optimization of circuit resources~\cite{typeBasedResourceEstimationInCircuitLanguage,localOptimization}.

\paragraph{Complexity}
\vspace{-1ex}
Resource trade-offs are an established theme in classical and quantum complexity theory, studied through branching programs~\citep{Wegener2000, BorodinCook1982,AblayevGKMP2005, SauerhoffSieling2005} and pebble games~\citep{Bennett1989}.
Similar to our unitary transition system (Def.~\ref{def:unitary-transition-system}) is the nondeterministic unitary ordered binary decision diagram (NUOBDD)~\cite{GainutdinovaYakaryilmaz2017}, a read-once branching program that evolves unitarily and measures at termination.
A single-round transition system solving \textsc{Exist} (Def.~\ref{def:distinguish}) induces an NUOBDD computing $\mathrm{AND}_n$,

\noindent
and Lemma~\ref{thm:unitary-dim-lb} agrees with a known NUOBDD width bound for $\mathrm{AND}_n$ \citep[Thm.~7]{GainutdinovaYakaryilmaz2017}.

Whereas the prior proof grows a linearly independent set by induction, our proof exhibits $n+1$ pairwise orthogonal final states, which is the key to our explicit multi-round strategies.
Moreover, whereas an NUOBDD accepts or rejects through a fixed accepting subspace, our transition systems apply a classical decoder to the measured outcomes, supporting exact, multi-valued answers.

More broadly, the complexity of program analysis tasks has been characterized in many settings, spanning concurrency reasoning~\cite{testingMessagePassingComplexity,complexityOfPredictingAtomicityViolations,optimalConsistencyChecking}, verification~\cite{verificationThreadPoolsComplexity}, optimization~\cite{complexityOfFunctionMerging}, as well as quantum program cost analysis~\cite{hardnessOfAnalyzingQuantitatively}; our results add the checking of quantum assertions to this family.

%% file: main/imply.tex
\section{Future Directions}
\label{sec:implication}

\paragraph{Mid-circuit Measurement}
Our main results explore the complexity of assertion checking in the terminal-measurement model.
We hope that as hardware platforms continue to mature, the costs of mid-circuit measurement will become more uniform and suitable for analysis.
In the near term, recent work~\cite{IBM-DynamicCircuits2025} to support mid-circuit measurements continues to distinguish their cost from terminal measurements, and their availability and reliability remains inconsistent today~\cite{midCircuitMeasuringErrorRates2025}.

As mid-circuit measurement becomes more prevalent and reliable, precisely modeling its cost would enable richer analysis.
One step is to count the number of measurements $M$ alongside $S$.
In this model, Sec.~\ref{sec:midcircuit-bridge} shows that our existing bounds directly apply to $M + S$.
That said, characterizing the tight trade-off between $S$ and $M$ individually remains an important open direction.

\paragraph{Adaptivity}
\vspace{-0.5ex}
The multi-round strategies in this work are not adaptive, in that the instrumentation of each round is fixed in advance. A natural extension is to permit later rounds to depend on the classical measurement outcomes of earlier ones. For example, an adaptive strategy for \textsc{FirstFail} could enable only the first half of the $n$ assertions, check them with the single-round \textsc{ExistFail} construction, and recurse on the relevant half, analogous to classical binary-search debugging~\cite{binarySearchDebugging}.
This strategy uses $S=O(\log n)$ ancillas and $T=O(\log n)$ rounds, so it does not improve on our non-adaptive strategy, which achieves $S=\Theta(\log n)$ with $T=1$. This initial answer suggests that adaptivity may have limited impact on the leading asymptotic trade-offs in the terminal model. However, it may still improve the constant factors or the operational costs of assertions.

The mid-circuit setting sharpens this question. As shown in Sec.~\ref{sec:midcircuit-bridge}, letting later rounds depend on earlier outcomes is equivalent to classical feedforward within a single execution, conditioning later segments on earlier measurements~\cite{ibmClassicalFeedforward2025}. In the limit, the multi-round terminal-measurement model with adaptivity converges to exactly the mid-circuit measurement model with feedforward.
Characterizing the precise power of adaptive strategies thus remains an interesting direction.

\section{Conclusion}
Runtime assertions are a key tool for testing and debugging quantum programs, yet one cannot check multiple assertions in a quantum program in the same way as for a classical program. As quantum programs grow in scale and complexity, this gap becomes increasingly consequential.

To narrow this gap, this paper initiates the study of a fundamental yet previously unformalized question: what is the time and space complexity of extracting outcome information from multiple runtime assertions in a quantum program?
Our results reveal that under a physically motivated machine model, the complexity of multiple-assertion checking varies with the information target and is shaped largely by the reversibility constraints of unitary quantum computation.

More broadly, our results indicate that the analysis of quantum programs can be viewed as an information-extraction problem in superposition under limited measurement.
Runtime assertions are one concrete instance, but similar problems arise in settings such as quantum error correction and circuit verification, where explicit resource bounds may lead to further connections between testing, verification, and physically constrained quantum architectures.
Studying these problems from the perspective of a programmer can reveal new insights in the theory of computation.

\clearpage

%% file: appendix/applying.tex
\clearpage
\section{Relating Our Framework to Prior Quantum Assertion Schemes}
\label{app:A}
This appendix shows how our framework relates to and may be applicable to the assertion schemes in prior work, in the following two complementary ways:
\begin{enumerate}[itemsep=1pt, topsep=1pt,leftmargin=1.5em]
\item[i)] \emph{Expressibility.}
We show that the assertion predicates supported by prior work can be expressed as projectors $P$, and hence checked by a checker unitary $C$ of the form in Eq.~\eqref{eq:checker}.
\item[ii)] \emph{Circuit realization.}
When prior work provides an explicit assertion-checking circuit, we show how it can be adapted to realize the corresponding checker unitary $C$ in Eq.~\eqref{eq:checker}.
\end{enumerate}
Together, (i) shows that our multi-assertion results apply to the same class of predicates considered by prior work, and (ii) shows that existing single-assertion circuit constructions can be reused to realize the checker unitaries our strategies operate on.

\subsection{Projection-Based Runtime Assertions}
\label{app:A:Proq}

\textsf{Proq}~\cite{ProjectionBased2020} introduces the assertion statement $\mathsf{assert}(\bar{q};P)$, where $P$ is a projector on the Hilbert space {\smash{$\mathcal{H}_{\bar q} \cong (\mathbb{C}^2)^{\otimes k}$}}, and defines its semantics via a mid-circuit projective measurement $M_P=\{P,\, I-P\}$ on the qubits $\bar{q}$: the program continues if the outcome corresponds to $P$, and aborts otherwise.

\paragraph{Equivalence to a checker unitary.}
We first show that this projective measurement is mathematically equivalent to applying a checker unitary (Eq.~\eqref{eq:checker}) and deferring the measurement of the ancilla to the end. The checker unitary for $\mathsf{assert}(\bar{q};P)$ is $C \triangleq P \otimes I + (I-P) \otimes X$, acting on $\mathcal{H}_{\bar q}$ together with a fresh ancilla $a$. For any pre-assertion state $\rho$ of the asserted qubits $\bar q$, with $a$ initialized to $\ket{0}$, measuring $a$ after applying $C$ yields
\[
\Pr[a=0]= \mathrm{Tr}(P\rho),\qquad \Pr[a=1]= \mathrm{Tr}((I-P)\rho) = 1-\mathrm{Tr}(P\rho),
\]
which matches exactly the outcome distribution of $M_P$. Moreover, conditioned on each outcome, the post-measurement state of $\bar q$ also coincides with that of $M_P$: outcome $a=0$ leaves $\bar q$ in $P\rho P / \mathrm{Tr}(P\rho)$, matching \textsf{Proq}'s continue branch, while $a=1$ leaves it in $(I-P)\rho(I-P)/\mathrm{Tr}((I-P)\rho)$, corresponding to the aborting branch.

\paragraph{Circuit realization.}
\textsf{Proq} also studies how to realize projection-based predicates on measurement-restricted devices, e.g., devices that only support computational-basis measurements or impose constraints on the rank of projectors~\cite[\S4]{ProjectionBased2020}. These techniques transform the checking of an assertion from a projective measurement into a computational-basis measurement on an integer number of qubits, with the transformation restoring the program state on the passing branch. After this transformation, the checking can be brought into checker-unitary form by adding a multi-controlled $X$ gate, controlled on the measured qubits and targeting an ancilla, and then deferring the measurement of that ancilla to the end.

\vspace{1ex}
In addition, see Sec.~\ref{sec:implication} for how our disjoint multi-round strategies can be recast into strategies under the mid-circuit measurement setting, where they apply to \textsf{Proq}~\cite{ProjectionBased2020} to trade additional ancillas for fewer measurements.

\subsection{Statistical Assertions}
\label{app:A:stat}

\citet{statisticalAssertions2019} propose \emph{statistical assertions}. Rather than checking a predicate on the quantum state directly, each assertion is defined over the \emph{distribution} of classical outcomes obtained by measuring the qubits under test, which is examined by a statistical hypothesis test.

\paragraph{Expressibility.}
Although statistical assertions are defined over measurement distributions rather than as predicates on the quantum state, projector predicates can express coherent counterparts of the intended predicates of their proposed checks:
\begin{itemize}[itemsep=1pt, topsep=1pt,leftmargin=1.5em]
\item \emph{Classical-value assertions.}
An assertion that a register deterministically measures to a classical value $t\in\{0,1\}^n$ corresponds to the rank-$1$ projector $P \triangleq \ketbra{t}{t}$.
\item \emph{Uniform Superposition assertions.}
Their superposition assertion checks that the measurement outcomes are consistent with the uniform distribution; this can be expressed as $P \triangleq \ketbra{+^{\otimes n}}{+^{\otimes n}}$.
\item \emph{Entanglement assertions.}
Their entanglement assertion checks whether two or more specified variables exhibit statistically significant association when measured. At the predicate level, one can model the intended correlations by a set $R \subseteq \{0,1\}^k$ of admissible joint outcomes, using the projector onto $\mathrm{span}\{\ket{x}\mid x\in R\}$, i.e., $P \triangleq \textstyle\sum_{x\in R} \ketbra{x}{x}$.
\end{itemize}

\paragraph{Realization.}
Statistical assertions are checked destructively: the program is instrumented with a \emph{breakpoint} at each assertion, run up to that point, and the qubits under test are measured directly, halting execution. Each assertion thus requires its own ensemble of runs, and different assertions are checked in separate ensembles. An operational benefit of testing through destructive measurement is that it requires no ancillas and no gate cost. In our framework, the same predicate-level checks can be given a non-destructive realization by replacing the breakpoint measurement with the corresponding checker unitary.

\subsection{Quantum Runtime Assertion Circuits}
\label{app:A:runtime}

\citet{RuntimeAssertionCircuts2020} propose runtime-assertion circuits for three specific assertion primitives: asserting classical values, asserting certain entangled states, and asserting arbitrary superposition states\@. \citet{ApproximateAssertion2021} generalize these to support (i) \emph{precise} assertions for a target \emph{pure} state, (ii) \emph{precise} assertions for a target \emph{mixed} state, and (iii) \emph{approximate} assertions formulated as membership checks over a set of candidate states. Since the constructions of \citet{RuntimeAssertionCircuts2020} are special cases of \citet{ApproximateAssertion2021}, we focus on the latter.

\paragraph{Expressibility.}
In terms of predicates, all assertions supported by~\citet{ApproximateAssertion2021} reduce to checking whether the program state lies in a pass subspace, and hence are captured by a projection-based predicate with an appropriate projector $P$:
\begin{itemize}[itemsep=1pt, topsep=1pt,leftmargin=1.5em]
\item \emph{Pure-state precise assertion.}
For a target pure state $\ket{\psi}$, the predicate is the projector $P \triangleq \ketbra{\psi}{\psi}$.

\item \emph{Mixed-state assertion and approximate assertion.}
\citet{ApproximateAssertion2021} treat both as \emph{support} membership checks. For a target mixed state $\rho$ with spectral decomposition $\rho = \sum_{j=0}^{t-1}\gamma_j \ketbra{\psi_j}{\psi_j}$ (where $\gamma_j > 0$), the assertion raises no error iff the state under test has support contained in $\mathrm{span}\{\ket{\psi_0},\ldots,\ket{\psi_{t-1}}\}$. For an approximate assertion specified by a set $\{\ket{\phi_1},\ldots,\ket{\phi_s}\}$, \citet{ApproximateAssertion2021} form the uniform mixture $\rho=\frac{1}{s}\sum_{k=1}^s \ketbra{\phi_k}{\phi_k}$ and apply the same technique. In both cases, the induced pass predicate is the projector onto $\mathrm{supp}(\rho)$, i.e., $P \triangleq \sum_{j=0}^{t-1}\ketbra{\psi_j}{\psi_j}$.
\end{itemize}

\paragraph{Circuit realization.}
\citet{ApproximateAssertion2021} present three circuit families for implementing assertions: logical OR-based circuits, NDD-based circuits, and SWAP-based circuits. Each family can be instantiated for pure/mixed-state predicates as well as membership checks:
\begin{enumerate}[itemsep=1pt, topsep=1pt,leftmargin=1.5em]
\item[(a)] \emph{Logical OR-based assertion circuits.}
The circuit design is shown in Figure~\ref{fig:OR-to-checker}(a). It already matches the semantics of a checker unitary: conditioned on the program state lying outside the asserted subspace, the circuit flips an ancilla. The only mismatch is the ancilla convention --- flipped on the \emph{pass} branch in~\citet{ApproximateAssertion2021}, whereas flipped on the \emph{fail} branch in our convention. Applying an additional $X$ gate on the ancilla aligns the two, as shown in Figure~\ref{fig:OR-to-checker}(b).
\begin{figure}[htbp]
  \centering
  \scalebox{0.85}{
  \begin{minipage}{\textwidth}
    \centering
    \begin{subfigure}[b]{0.45\textwidth}
      \centering
      \begin{quantikz}[row sep=0.25cm, column sep=0.5cm]
        \lstick[wires=2]{$\ket{\psi}$} & \gate[wires=2]{U^{-1}} & \octrl{2} & \gate[wires=2]{U} & \rstick[wires=2]{$\ket{\psi}$} \\
        & & \octrl{1} & & \\
        \lstick{$\ket{a}$} & \qw & \targ{} & \qw & \meter{}
      \end{quantikz}
      \caption{Logical-OR based assertion circuit}
      \label{fig:circuit1}
    \end{subfigure}
    \hfill
    \begin{subfigure}[b]{0.45\textwidth}
      \centering
      \begin{quantikz}[row sep=0.25cm, column sep=0.5cm]
        \lstick[wires=2]{$\ket{\psi}$} & \gate[wires=2]{U^{-1}} & \octrl{2} & \gate[wires=2]{U} & \rstick[wires=2]{$\ket{\psi}$} \\
        & & \octrl{1} & & \\
        \lstick{$\ket{a}$} & \gate{X} & \targ{} & \qw & \meter{}
      \end{quantikz}
      \caption{Modified circuit to fit the checker unitary.}
      \label{fig:circuit2}
    \end{subfigure}
  \end{minipage}}
  \caption{The logical-OR based assertion circuit construction (a) can be transformed by applying an additional $X$ gate on the ancilla (b), to fit the checker unitary.}
  \label{fig:OR-to-checker}
\end{figure}

\item[(b)] \emph{NDD-based assertion circuits.}
NDD-based circuits (Fig.~\ref{fig:NDD}) implement a unitary $V$ defined via a spectral decomposition: $V$ assigns eigenvalue $+1$ to the pass subspace and $-1$ to its orthogonal complement. For a pure-state predicate with pass state $\ket{\psi_0}$ and an orthonormal basis $\{\ket{\psi_i}\}_{i\ge 1}$ of its orthogonal complement, \citet{ApproximateAssertion2021} define $V=\ketbra{\psi_0}{\psi_0}-\sum_{i\ge 1}\ketbra{\psi_i}{\psi_i}$. Letting $P\triangleq \ketbra{\psi_0}{\psi_0}$ and using $\sum_{i\ge 1}\ketbra{\psi_i}{\psi_i}=I-P$, we have $V = P-(I-P)=2P-I$. The NDD-based construction therefore realizes the checker unitary (Eq.~\eqref{eq:checker}) directly, by taking $P$ to be the projection-based predicate; the same identity extends to higher-rank projectors, e.g., mixed-state support predicates, by setting $P$ to the corresponding support projector.

\begin{figure}[htbp]
  \centering
  \scalebox{0.85}{
\begin{quantikz}[row sep=0.25cm, column sep=0.5cm]
\lstick[wires=2]{$\ket{\psi}$} & \qw & \gate[wires=2]{V} & \qw  & \rstick[wires=2]{$\ket{\psi}$} \\
 & \qw & & \qw & \\
\lstick{$\ket{a}$} & \gate{H} & \ctrl{-2} & \gate{H} & \meter{} &
\end{quantikz}}
\vspace{-1ex}
\caption{The NDD-based assertion circuit, which already aligns with the checker unitary.}
\label{fig:NDD}
\end{figure}

\item[(c)] \emph{SWAP-based assertion circuits.}
The SWAP-based design is not intended to implement a check-only semantics. Operationally, it follows a \emph{repairing} semantics: it prepares ancillas in the target state and swaps them with the \textsf{prog} register, so that even upon an assertion failure the program state is overwritten (repaired) to the desired state. This differs from the checker unitary, which only routes the pass/fail outcome into an ancilla and does not modify the program state on failure. Compiling such a repairing circuit into our setting, by adding extra workspace, state preparation, and uncomputation, would be an interesting avenue to explore.
\end{enumerate}

%% file: appendix/stability.tex
\section{Semantic Stability Dichotomy}
\label{app:robust-proof}
In this appendix, we prove the strategy-independence claims for the semantics of $\textsc{ExistFail}_\eta$ and $\textsc{FirstFail}_\eta$ and the instability claim for $\textsc{ListAll}_\eta$, stated in Sec.~\ref{sec:prob-tasks}.

\begin{lemma}[Prefix Invariance under Arbitrary Instrumentation]
\label{thm:prefix-invariance-firstfail}
Assume the gap promise of Sec.~\ref{sec:prob-tasks}, and let $i^* \triangleq \emph{\textsc{FirstFail}}_\eta(\pazocal{Q}) \in [n]$.
Let $\pazocal{I}$ be any instrumentation with enabled set $\mathcal{E}$ (Def.~\ref{def:instr}), and for each $k \,{\in}\, [n]$ let $\sigma_k^{\pazocal{I}}$ denote the reduced state on the program register $\emph{\textsf{prog}}$ immediately before the $k$-th assertion position in the execution of $\pazocal{I}(\pazocal{Q})$.
Then for every $k \le i^*$, $\sigma_k^{\pazocal{I}} = |\varphi_k\rangle\langle\varphi_k|$, where $|\varphi_k\rangle$ is the bare-execution state checked by the $k$-th assertion (Def.~\ref{def:prog}).
\end{lemma}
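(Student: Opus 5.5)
We argue by induction on $k$, tracking the joint state of $\textsf{prog}\otimes\textsf{anc}$ along $\pazocal{I}(\pazocal{Q})$. The claim to carry is stronger than the reduced-state statement: for every $k\le i^*$, the joint state immediately before the $k$-th assertion position is a product $|\varphi_k\rangle_{\textsf{prog}}\otimes|\alpha_k\rangle_{\textsf{anc}}$ for some ancilla state $|\alpha_k\rangle$. Taking the partial trace then gives $\sigma_k^{\pazocal{I}}=|\varphi_k\rangle\langle\varphi_k|$. The base case $k=1$ is immediate, since before $A_1$ only $U_0$ has acted on $|\varphi_0\rangle\otimes|0^m\rangle$, giving $|\varphi_1\rangle\otimes|0^m\rangle$.

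For the inductive step, take $k<i^*$ and assume the product form at position $k$. Because $k$ precedes the first index with $p_k\ge\eta$, the gap promise $p_k\in\{0\}\cup[\eta,1]$ forces $p_k=0$. Hence $(I-P_k)|\varphi_k\rangle=0$, i.e.\ $P_k|\varphi_k\rangle=|\varphi_k\rangle$.

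We then examine the block $B_k$, or the identity if $k\notin\mathcal{E}$. The block is a sequence of gates of two kinds.
\begin{itemize}
\item An ancilla-only $V$ preserves the product form and leaves the program factor unchanged.
\item A checker call $C_{k\to a}$ acts, by~\eqref{eq:checker-action} extended linearly to an arbitrary ancilla state, as $P_k|\varphi_k\rangle\otimes|\alpha\rangle+(I-P_k)|\varphi_k\rangle\otimes X_a|\alpha\rangle$. The second term vanishes, so the state stays $|\varphi_k\rangle\otimes|\alpha\rangle$.
\end{itemize}
So after $B_k$ the state is $|\varphi_k\rangle\otimes|\alpha'\rangle$. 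Applying $U_k$ to $\textsf{prog}$ then yields $|\varphi_{k+1}\rangle\otimes|\alpha'\rangle$, which closes the induction up to $k=i^*$.

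The result is only claimed up to $k=i^*$ because of the following step. The state before position $i^*$ depends only on blocks $B_1,\dots,B_{i^*-1}$, all of which act trivially on $\textsf{prog}$ by the argument above. Block $B_{i^*}$ is where disturbance can first occur, and it lies after position $i^*$, so it does not affect $\sigma_{i^*}^{\pazocal{I}}$.

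The main subtlety is that Lemma~\ref{lem:invariance} cannot be cited directly: it assumes a deterministic failure pattern, whereas here later assertions may fail with any probability in $[\eta,1]$. The argument must therefore be redone on the prefix only. It must also handle general, non-basis ancilla states, since earlier blocks may have placed $\textsf{anc}$ in superposition. Linearity of~\eqref{eq:checker-action} over the ancilla components handles this, as noted in the footnote to~\eqref{eq:checker}.
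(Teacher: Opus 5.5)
Your proof is correct and follows essentially the same route as the paper: induction on $k$, with the gap promise forcing $p_k=0$ for $k<i^*$, and a gate-by-gate pass through $B_k$ in which the fail branch of every checker call vanishes. The only difference is cosmetic. You carry the product form $|\varphi_k\rangle\otimes|\alpha_k\rangle$ directly as the induction hypothesis, whereas the paper carries only the reduced state and recovers the product form from the purity of both the joint and the reduced state.
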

\begin{proof}
We proceed by induction on $k$.
For base case, i.e, ($k=1$), the execution of $\pazocal{I}(\pazocal{Q})$ starts in the product state $|\varphi_0\rangle_{\textsf{prog}} \otimes |0^m\rangle_{\textsf{anc}}$, and the segment $U_0$ acts on \textsf{prog} alone, so $\sigma_1^{\pazocal{I}} = |\varphi_1\rangle\langle\varphi_1|$.
For the inductive step, assume $\sigma_k^{\pazocal{I}} = |\varphi_k\rangle\langle\varphi_k|$ for some $k < i^*$. The joint state is pure and its reduced state on \textsf{prog} is pure, so it factorizes as $|\varphi_k\rangle_{\textsf{prog}} \otimes |\xi\rangle_{\textsf{anc}}$ for some ancilla state $|\xi\rangle$.
Since $k < i^*$, by definition of $i^*$ and the gap promise we have $p_k = 0$, i.e., $\Vert(I - P_k)|\varphi_k\rangle\Vert = 0$, so $|\varphi_k\rangle$ lies entirely in the passing subspace $\mathrm{Im}(P_k)$. Now consider the effect of the $k$-th assertion position:
\begin{enumerate}[topsep=1pt,itemsep=1pt,leftmargin=1.5em]
\item[$a)$] If $k \,{\notin}\, \mathcal{E}$, the assertion is replaced by the identity, which does not change the program register.
\item[$b)$] If $k \,{\in}\, \mathcal{E}$, the assertion is replaced by the block $B_k$, a finite sequence of checker-unitary calls of $A_k$ and ancilla-only unitaries. Each ancilla-only unitary acts trivially on \textsf{prog} by definition. Each checker call meets the program state $|\varphi_k\rangle$ lying entirely in $\mathrm{Im}(P_k)$, so the fail branch of~\eqref{eq:checker-action} vanishes; applied to each computational-basis component of the designated ancilla and extended by linearity (as in the proof of Lemma~\ref{lem:ancilla-evolution}), the call leaves the program register in $|\varphi_k\rangle$, unentangled from the ancillas. Hence the entire block leaves the program register in $|\varphi_k\rangle$.
\end{enumerate}
The segment $U_k$ then acts on \textsf{prog} alone, mapping $|\varphi_k\rangle$ to $|\varphi_{k+1}\rangle$ exactly as in the bare execution, so $\sigma_{k+1}^{\pazocal{I}} = |\varphi_{k+1}\rangle\langle\varphi_{k+1}|$, completing the induction.
\end{proof}

\begin{theorem}[Strategy-Independent Semantics of \textsc{ExistFail} and \textsc{FirstFail}]
\label{thm:stable}
Assume the gap promise, and let $\pazocal{I}$ be any instrumentation with enabled set $\mathcal{E}$ (Def.~\ref{def:instr}). In the execution of $\pazocal{I}(\pazocal{Q})$, say that a checker call \emph{records a failure} with probability $p$ if its fail branch in~\eqref{eq:checker-action} carries weight $p$.
\begin{enumerate}[topsep=1pt,itemsep=1pt,leftmargin=1.5em]
\item If $~\emph{\textsc{FirstFail}}_\eta(\pazocal{Q}) = \bot$, i.e., every failure probability is $0$, then every checker call meets a program state satisfying its predicate with certainty: no execution can record a failure at any position.
\item If $~\emph{\textsc{FirstFail}}_\eta(\pazocal{Q}) = i^* \,{\in}\, [n]$, then no checker call at a position $j < i^*$ can record a failure; and if $i^* \,{\in}\, \mathcal{E}$, the first checker call at position $i^*$ meets the undisturbed bare-execution state $|\varphi_{i^*}\rangle$ and records a failure with probability exactly $p_{i^*} \ge \eta$. Any disturbance caused by this checker call can affect only subsequent
assertion positions $j>i^*$.
\end{enumerate}
The answers to $\emph{\textsc{ExistFail}}_\eta(\pazocal{Q})$ and $\emph{\textsc{FirstFail}}_\eta(\pazocal{Q})$ are defined on the bare execution and hence do not depend on the instrumentation; the two items state that no instrumented execution can observe anything inconsistent with them. In particular, for any strategy whose rounds jointly enable all assertions with $i^{*}$ checked in some round, then some round can record a failure iff $~\emph{\textsc{ExistFail}}_\eta(\pazocal{Q}) = 1$, and when the answer to $\emph{\textsc{FirstFail}}_\eta(\pazocal{Q})$ is $i^* \,{\in}\, [n]$, the least position at which a failure can be recorded is exactly $i^*$.
\end{theorem}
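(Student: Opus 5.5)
The plan is to derive both items from the Prefix Invariance Lemma (Lem.~\ref{thm:prefix-invariance-firstfail}). We also need a small extension of that lemma to the case $\textsc{FirstFail}_\eta(\pazocal{Q}) = \bot$.

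\emph{Item 1.} If every $p_i = 0$, I would rerun the induction of Lem.~\ref{thm:prefix-invariance-firstfail} over all $k \in [n]$ instead of only $k \le i^*$. The inductive step only uses $p_k = 0$, which now holds for every $k$. So before each position the joint state factorizes as $|\varphi_k\rangle_{\textsf{prog}} \otimes |\xi\rangle_{\textsf{anc}}$ with $|\varphi_k\rangle \in \mathrm{Im}(P_k)$. Then every checker call in $B_k$ has a vanishing fail branch in~\eqref{eq:checker-action}, i.e.\ weight $0$, and by linearity over ancilla basis components the block leaves the product form intact. Induction over $k$ shows that no call at any position records a failure.

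\emph{Item 2.} Now assume $i^* \in [n]$. For $j < i^*$, the lemma gives $\sigma_j^{\pazocal{I}} = |\varphi_j\rangle\langle\varphi_j|$, hence the product form, together with $p_j = 0$. The same block-internal argument then shows that every call at $j$ has a zero-weight fail branch. At position $i^*$, the lemma gives the pure state $|\varphi_{i^*}\rangle \otimes |\xi\rangle$. For the first call $C_{i^*\to a}$ inside $B_{i^*}$, the preceding gates of the block are ancilla-only, so \textsf{prog} is still $|\varphi_{i^*}\rangle$ when the call acts. The fail-branch weight is therefore $\Vert (I-P_{i^*})|\varphi_{i^*}\rangle\Vert^2 \cdot \Vert \xi'\rangle\Vert^2 = p_{i^*}$, where $|\xi'\rangle$ is the unit ancilla state. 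By the gap promise, $p_{i^*} \ge \eta$. Disturbance is confined to later positions by causality: everything at positions $\le i^*$, up to and including this call, is fixed by the earlier analysis, and the call's effect on \textsf{prog} propagates only forward through $U_{i^*}, \ldots$.

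\emph{Consequences.} The answers are defined via the $p_i$ on the bare execution, so they are instrumentation-independent by definition. For the ``in particular'' clause, take any strategy whose rounds jointly enable all assertions and check $i^*$ in some round. If $\textsc{ExistFail}_\eta = 0$, Item~1 applied to each round's instrumentation shows that no failure can be recorded. If it equals $1$, then $i^*$ exists, and the round enabling $i^*$ records a failure with probability $\ge \eta > 0$ by Item~2. The least position at which a failure can be recorded is exactly $i^*$: Item~2 excludes every $j < i^*$ in all rounds, and position $i^*$ is attained.

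\emph{Main obstacle.} I expect the delicate point to be making precise what ``records a failure with probability $p$'' means for a call in the middle of a block, where the ancilla may be in superposition. I would define it as the squared norm of the fail-branch component of the joint state after applying $(I-P)\otimes I$ to the pre-call state. With that definition, the product form reduces the weight to $\Vert(I-P)|\varphi\rangle\Vert^2$, independent of the ancilla, and the remaining steps are routine.
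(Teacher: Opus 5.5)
Your proposal is correct and follows the paper's proof essentially step for step. You prove item~1 by rerunning the induction of Lemma~\ref{thm:prefix-invariance-firstfail} over all positions, item~2 by applying that lemma up to $i^*$ and evaluating the first checker call on $|\varphi_{i^*}\rangle$, and the ``in particular'' clause by combining the two items; your explicit definition of the fail-branch weight for a mid-block call (the squared norm after applying $(I-P_{i^*})\otimes I$, which the product form reduces to $p_{i^*}$) just makes precise what the paper leaves implicit.
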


\begin{proof}
For item (1), every $p_k = 0$, so the induction in the proof of Lem.~\ref{thm:prefix-invariance-firstfail} applies to all indices $k \in [n]$: under any instrumentation, the program state before every assertion position equals the bare-execution state $|\varphi_k\rangle$, which lies entirely in $\mathrm{Im}(P_k)$ and remains there throughout the block. Every checker call therefore records a failure with probability $0$.

For item (2), by Lem.~\ref{thm:prefix-invariance-firstfail}, for every $k \le i^*$ the program state before the $k$-th assertion position equals $|\varphi_k\rangle$. For an enabled position $j < i^*$, we have $p_j = 0$, so $|\varphi_j\rangle$ lies entirely in $\mathrm{Im}(P_j)$ and remains there throughout the block; every checker call there records a failure with probability $0$. If $i^* \in \mathcal{E}$, the first checker call of the block at position $i^*$ acts on $|\varphi_{i^*}\rangle$ and records a failure with probability $\Vert(I - P_{i^*})|\varphi_{i^*}\rangle\Vert^2 = p_{i^*} \ge \eta$. Since the program register is undisturbed up to position $i^*$, any disturbance is confined to positions strictly after $i^*$.

The final claims follow: if $\textsc{ExistFail}_\eta(\pazocal{Q}) = 1$, any round enabling $i^*$ can record a failure there by item (2); if $\textsc{ExistFail}_\eta(\pazocal{Q}) = 0$, no round ever records one by item (1). Likewise, positions before $i^*$ can never record a failure in any round, while position $i^*$ can in any round enabling it, so the least recordable position is $i^*$.
\end{proof}

Now we show the instability of \textsc{ListAll}$_\eta$.

\begin{example}
\label{exp:instability}
Consider a correct program with 2 assertions, on a single qubit $x$ initialized to $|0\rangle$:
\[
\pazocal{Q}_0 \triangleq  (\textsf{prog}; \ket{0};  U_0 = I, A_1 = \mathrm{assert}(x;\, |0\rangle\langle 0|), U_1 = H, A_2 = \mathrm{assert}(x;\, |{+}\rangle\langle{+}|); \pazocal{M}).
\]
For this correct program, both two assertions pass with certainty because the bare-execution states $|\varphi_1\rangle = |0\rangle$ and $|\varphi_2\rangle = |{+}\rangle$, and $\textsc{ListAll}_\eta(\pazocal{Q}_0) = \varnothing$.
Now consider two buggy variants \emph{$\pazocal{Q}_1$} and \emph{$\pazocal{Q}_2$}:
\begin{itemize}[leftmargin=15pt,itemsep=2pt]
    \item $\pazocal{Q}_1$ (\emph{Misplaced Gates}): the two gates $I$ and $H$ are swapped, i.e., $U_0 = H$, $U_1 = I$. Formally,
    \[
    \pazocal{Q}_1 \triangleq  (\textsf{prog}; \ket{0};  U_0 = H, A_1 = \mathrm{assert}(x;\, |0\rangle\langle 0|), U_1 = I, A_2 = \mathrm{assert}(x;\, |{+}\rangle\langle{+}|); \pazocal{M}).
    \]
    Its bare program evolves as \smash{$\ket{0} \rightarrow_{H} \ket{+} \rightarrow_{I} \ket{+}$}, hence the bare-execution states  $|\varphi_1\rangle = |\varphi_2\rangle = |{+}\rangle$. So $p_1 = \tfrac12$ and $p_2 = 0$; with $\eta = \tfrac12$ the gap promise holds and $\textsc{ListAll}_\eta(\pazocal{Q}_1) = \{1\}$.

    \item $\pazocal{Q}_2$ (\emph{Additional Phase Flip}):
    the identity segment $U_1$ of $\pazocal{Q}_1$ is replaced by a phase flip. Formally,
    \[
    \pazocal{Q}_2 \triangleq  (\textsf{prog}; \ket{0};  U_0 = H, A_1 = \mathrm{assert}(x;\, |0\rangle\langle 0|), U_1 = Z, A_2 = \mathrm{assert}(x;\, |{+}\rangle\langle{+}|); \pazocal{M}).
    \]
    Its bare program evolves as \smash{$\ket{0} \rightarrow_{H} \ket{+} \rightarrow_{Z} \ket{-}$}, hence the bare-execution states  $|\varphi_1\rangle = |{+}\rangle $ and $|\varphi_2\rangle = |{-}\rangle$. So $p_1 = \tfrac12$ and $p_2 = 1$. Therefore $\textsc{ListAll}_\eta(\pazocal{Q}_2) = \{1, 2\}$.
\end{itemize}

We use the two buggy variants to prove the following instability theorem.
\end{example}

\begin{theorem}[Semantic instability of \textsc{ListAll}$_\eta$]
\label{thm:instable}
For some $\eta \,{>}\, 0 $, there exist two programs-with-assertions satisfying the gap promise whose $\emph{\textsc{ListAll}}_\eta$ answers differ, yet whose ancilla readouts under a single-round/$n$-ancillas instrumentation are identically distributed, leaving the answers irrecoverable; the single-ancilla/$n$-rounds strategy reports the reference answer of each under sampling.
\end{theorem}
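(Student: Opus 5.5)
We take $\eta = \tfrac12$ and use the two buggy variants $\pazocal{Q}_1$ and $\pazocal{Q}_2$ of Example~\ref{exp:instability}. Both satisfy the gap promise: $(p_1,p_2) = (\tfrac12, 0)$ for $\pazocal{Q}_1$ and $(\tfrac12, 1)$ for $\pazocal{Q}_2$. Their reference answers differ, $\{1\}$ versus $\{1,2\}$. What remains is to compute the joint ancilla distribution under the single-round/$n$-ancillas instrumentation (Strategy~\ref{str:one-round-n-anc}). We need to show it coincides for the two programs, and that Strategy~\ref{str:n-round-one-anc} separates them.

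For the single-round computation, we track the joint pure state of $x$ and the two ancillas $a_1, a_2$.

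\emph{First assertion.} In both programs the state before $A_1$ is $\ket{+}\ket{00}$. The checker $C_{1\to a_1}$ with $P_1 = \ketbra{0}{0}$ gives
\[
\tfrac{1}{\sqrt2}\bigl(\ket{0}\ket{0}_{a_1} + \ket{1}\ket{1}_{a_1}\bigr)\ket{0}_{a_2}.
\]
This is the disturbance step: $x$ is now entangled with $a_1$.

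\emph{Segment $U_1$.} For $\pazocal{Q}_1$ ($U_1 = I$) the state is unchanged. For $\pazocal{Q}_2$ ($U_1 = Z$) we obtain $\tfrac{1}{\sqrt2}(\ket{0}\ket{0} - \ket{1}\ket{1})\ket{0}$. The two states differ only by a relative phase on the $a_1{=}1$ branch.

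\emph{Second assertion.} We apply $C_{2\to a_2}$ with $P_2 = \ketbra{+}{+}$. On $\ket{0}$ and $\ket{\pm1}$-branches, each basis state of $x$ splits evenly between $\ket{+}$ (ancilla $0$) and $\ket{-}$ (ancilla $1$), with amplitude magnitude $\tfrac12$ each. The resulting amplitudes over $(a_1, a_2) \in \{0,1\}^2$ therefore have modulus $\tfrac12$ in every branch, regardless of the $\pm$ sign introduced by $Z$.

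\emph{Readout.} Measuring $a_1 a_2$, both programs give the uniform distribution on $\{0,1\}^2$. Rather than rely on the magnitude argument alone, I would verify this by tracing out $x$ explicitly. The ancilla readout is then identically distributed for $\pazocal{Q}_1$ and $\pazocal{Q}_2$, so no decoder can output $\{1\}$ on one and $\{1,2\}$ on the other with any advantage; the answers are irrecoverable.

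For the second half of the theorem, consider the single-ancilla/$n$-rounds strategy. Each round enables only one assertion $A_k$, so by the argument of Lem.~\ref{thm:prefix-invariance-firstfail} (no earlier checker is applied) the checker meets the undisturbed bare state $\ket{\varphi_k}$. It therefore records a failure with probability exactly $p_k$. Sampling estimates each $p_k$, and the gap promise $p_k \in \{0\}\cup[\eta,1]$ lets us threshold. This yields $\{1\}$ for $\pazocal{Q}_1$ (since $p_2 = 0$) and $\{1,2\}$ for $\pazocal{Q}_2$ (since $p_2 = 1$).

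The main obstacle is the explicit calculation of the final ancilla marginal in the second assertion step. The sign from $Z$ must be checked to affect only phases within orthogonal $a_1$-branches, and not interfere across them. Since $a_1$ distinguishes the branches, no cross-branch interference occurs, which is exactly why the marginal is identical.
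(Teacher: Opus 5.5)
Your proposal is correct and follows the paper's proof essentially step for step. It uses the same witnesses $\pazocal{Q}_1,\pazocal{Q}_2$ with $\eta=\tfrac12$ and the same explicit state tracking, in which every ancilla transcript has probability $\tfrac14$ in both programs because the $Z$ sign lives inside $a_1$-distinguished branches; it also uses the same per-round observation that a lone enabled checker fires with probability exactly $p_k$, which sampling plus the gap promise turns into $\{1\}$ versus $\{1,2\}$. One small nit: invoking Lem.~\ref{thm:prefix-invariance-firstfail} is unnecessary, and its hypothesis $k\le i^*$ fails for $A_2$ in $\pazocal{Q}_2$. Your parenthetical reason, that disabled assertions act as the identity so the program reaches position $k$ in its bare state, is the whole argument, exactly as in the paper.
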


\begin{proof}
    The witnesses are $\pazocal{Q}_1$ and $\pazocal{Q}_2$ of Example~\ref{exp:instability}, with $\eta \,{=}\, \tfrac12$.
    \begin{enumerate}[leftmargin=15pt, itemsep=3pt]
    \item[(i)] Applying a \emph{single-round/$n$-ancillas} instrumentation: 

    \vspace{0.5ex}
    On $\pazocal{Q}_1$, the gate $U_0 = H$ prepares $|{+}\rangle_x$, and the checker call $C_{1\to a_1}$ flips $a_1$ on the $|1\rangle_x$ component, yielding \smash{$\tfrac{1}{\sqrt2}\bigl(|0\rangle_x|0\rangle_{a_1} + |1\rangle_x|1\rangle_{a_1}\bigr) \otimes |0\rangle_{a_2}$}. The gate $U_1 \,{=}\, I$ acts trivially, and $C_{2\to a_2}$ flips $a_2$ on the $|{-}\rangle_x$ component; the final joint state is
     \[
        \tfrac12\bigl(\,
              |{+}\rangle_x|0\rangle_{a_1}|0\rangle_{a_2}
            + |{-}\rangle_x|0\rangle_{a_1}|1\rangle_{a_2}
            + |{+}\rangle_x|1\rangle_{a_1}|0\rangle_{a_2}
            - |{-}\rangle_x|1\rangle_{a_1}|1\rangle_{a_2}
            \,\bigr).
    \]
    Every ancilla transcript $(a_1, a_2) \,{\in}\, \{0,1\}^2$ carries squared amplitude $\tfrac14$.
    
    \vspace{1ex}
    On $\pazocal{Q}_2$, the execution proceeds identically through $C_{1\to a_1}$, and $U_1 = Z$ negates the $\ket{1}_x$ component, giving \smash{$\tfrac{1}{\sqrt2}\bigl(|0\rangle_x |0\rangle_{a_1} - |1\rangle_x |1\rangle_{a_1}\bigr)\otimes |0\rangle_{a_2}$}; after $C_{2\to a_2}$, the final joint state is
    \[
    \tfrac12\bigl(\,
      |{+}\rangle_x|0\rangle_{a_1}|0\rangle_{a_2}
    + |{-}\rangle_x|0\rangle_{a_1}|1\rangle_{a_2}
    - |{+}\rangle_x|1\rangle_{a_1}|0\rangle_{a_2}
    + |{-}\rangle_x|1\rangle_{a_1}|1\rangle_{a_2}
    \,\bigr),
    \]
    so every ancilla transcript again carries squared amplitude \smash{$\tfrac14$}. 

    \vspace{1ex}
    The ancilla readouts on $\pazocal{Q}_1$ and $\pazocal{Q}_2$ are thus identically distributed, as are the joint readouts of any number of independent repetitions; any decoder therefore has the same output distribution on both programs and errs on at least one of them, so no decoder recovers both answers.

    \item[(ii)] Applying the \emph{single-ancilla/$n$-rounds} strategy: 

    \vspace{0.5ex}
    in the round enabling $A_i$ alone, the other assertion is replaced by the identity, so the program reaches position $i$ in its bare state and the readout equals $1$ with probability exactly $p_i$. That is, the two rounds read out $1$ with probabilities $\tfrac12$ and $0$ on $\pazocal{Q}_1$, and with probabilities $\tfrac12$ and $1$ on $\pazocal{Q}_2$. Declaring $A_i$ failing iff its readout equals $1$ in at least one of $k$ repetitions of its round reports $\{1\}$ on $\pazocal{Q}_1$ and $\{1,2\}$ on $\pazocal{Q}_2$, each with probability at least $1 - 2^{-k}$.
    \end{enumerate}
    \vspace{-2ex}
\end{proof}

%% file: appendix/exist-lb.tex
\section{Proof of Lemma~\ref{thm:unitary-dim-lb}}
\label{app:unitary-dim-lb-proof}

Lemma~\ref{thm:unitary-dim-lb}.
\textit{
Let $\pazocal{U}_n=(\pazocal{H},\{\pazocal{T}^{\scriptscriptstyle (i)}_0\}_{i=1}^n,\{\pazocal{T}^{\scriptscriptstyle (i)}_1\}_{i=1}^n,|\psi_0\rangle)$ be a one-round finite-dimensional unitary transition system as in Def.~\ref{def:unitary-transition-system}.
Assuming that $\forall x\neq 0^n$, $\langle \psi(0^n) \,|\, \psi(x)\rangle = 0$, we have $\dim(\pazocal{H})\ge n+1$.
}

\begin{proof}
We first prove a pumping-style property (Lem.~\ref{lem:pumping}), then use it for a counting argument.

\paragraph{Abbreviations.}
Let $|z\rangle\triangleq |\psi(0^n)\rangle$.
For each $k\in\{0,\ldots,n\}$, define the intermediate state along the all-zero-string transitions $|s_k\rangle \triangleq |\psi(0^k)\rangle
~=~
(\pazocal{T}^{\scriptscriptstyle (k)}_{0}\cdots \pazocal{T}^{\scriptscriptstyle (2)}_{0}\pazocal{T}^{\scriptscriptstyle (1)}_{0})|\psi_0\rangle$,
with $|s_0\rangle=|\psi_0\rangle$ and $|s_n\rangle=|z\rangle$.
For $0\le i<n$ and a bit-string $u=u_1\cdots u_\ell\in\{0,1\}^\ell$ with
$1\le \ell\le n-i$, define the segment unitary
\[
\pazocal{T}^{[i]}_{u}
~\triangleq~
\pazocal{T}^{\scriptscriptstyle (i+\ell)}_{u_\ell}\cdots \pazocal{T}^{\scriptscriptstyle (i+2)}_{u_2}\pazocal{T}^{\scriptscriptstyle (i+1)}_{u_1}.
\]
By the definition of $|s_k\rangle$ we have $|s_{i+\ell}\rangle = \pazocal{T}^{[i]}_{0^\ell}|s_i\rangle$. For the empty string $\epsilon$, define $\pazocal{T}^{[i]}_{\epsilon} \triangleq I$.
Intuitively, \smash{$\pazocal{T}^{[i]}_{u}$} is the step-$i$ to step-$(i+\ell)$ transition induced by reading the length-$\ell$ segment $u$: for any intermediate state $|\phi\rangle$ that may arise after an $i$-bit prefix, applying the next $\ell$ input bits $u$ maps it to the state $\pazocal{T}^{[i]}_{u}|\phi\rangle$.

\begin{lemma}
\label{lem:pumping}
Assume that $\forall x\neq 0^n$, $\langle z \,|\, \psi(x)\rangle=0$. Then for all integers $i,\ell$ with $0\le i<n$ and $1\le \ell\le n-i$, and for all $u\in\{0,1\}^\ell$, we have
\[
\quad u\neq 0^\ell
\quad \text{ implies } \quad
\braketU{s_{i+\ell}}{\pazocal{T}^{[i]}_{u}}{s_i} = 0.
\]
\end{lemma}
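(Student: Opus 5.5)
We prove the contrapositive by building a concrete input whose final state overlaps $|z\rangle$. Fix $i,\ell$ with $0\le i<n$, $1\le \ell\le n-i$, and a nonzero $u\in\{0,1\}^\ell$. Set $x \triangleq 0^i\,u\,0^{n-i-\ell}$. Because $u\neq 0^\ell$, we have $x\neq 0^n$, so the hypothesis gives $\langle z\,|\,\psi(x)\rangle = 0$. The goal is to show that this inner product equals $\braketU{s_{i+\ell}}{\pazocal{T}^{[i]}_{u}}{s_i}$.

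First we factor the final state of $x$ into its three segments. The prefix $0^i$ drives $|\psi_0\rangle$ to $|s_i\rangle$. The middle segment contributes $\pazocal{T}^{[i]}_{u}$, and the suffix contributes $R \triangleq \pazocal{T}^{[i+\ell]}_{0^{n-i-\ell}}$, taken to be $I$ when $i+\ell=n$. Thus
\[
|\psi(x)\rangle = R\,\pazocal{T}^{[i]}_{u}\,|s_i\rangle .
\]
By the same factorization applied to the all-zero input, $|z\rangle = R\,|s_{i+\ell}\rangle$, since $|s_{i+\ell}\rangle$ followed by the remaining zeros reaches $|s_n\rangle = |z\rangle$.

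Next we use unitarity of $R$, i.e. $R^\dagger R = I$, which each $\pazocal{T}^{(j)}_b$ and hence their product satisfies. Then
\[
\langle z\,|\,\psi(x)\rangle = \langle s_{i+\ell}|\,R^\dagger R\,\pazocal{T}^{[i]}_{u}\,|s_i\rangle = \braketU{s_{i+\ell}}{\pazocal{T}^{[i]}_{u}}{s_i}.
\]
Combined with the hypothesis, this gives $\braketU{s_{i+\ell}}{\pazocal{T}^{[i]}_{u}}{s_i}=0$, as claimed.

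There is no real obstacle here. The only care needed is bookkeeping: the segment operators must compose in the order of Def.~\ref{def:unitary-transition-system}, so that $\pazocal{T}^{[i]}_{u}$ and the suffix compose correctly, and the boundary case $i+\ell=n$ must be handled with an empty suffix and $R=I$. The essential ingredient is that the shared suffix acts by the \emph{same} unitary on both inputs and so cancels in the inner product. This is the reversibility-based step that the later counting argument also relies on.
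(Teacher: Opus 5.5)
Your proof is correct and follows the paper's argument: both use the input $0^i u\, 0^{n-i-\ell}$, factor the final state through the shared zero-suffix unitary, and cancel that unitary via $R^\dagger R = I$. The paper phrases this as a contradiction, whereas your argument is actually direct despite being announced as a contrapositive. Your handling of the case $i+\ell=n$ with $R=I$ matches the paper's convention $\pazocal{T}^{[i]}_{\epsilon}\triangleq I$.
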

\begin{proof}
We argue by contradiction. Suppose there exist integers $0\leq i <n $, $1 \leq \ell \leq n-i$, and a string
$u\in\{0,1\}^\ell$ with $u\neq 0^\ell$ but $\braketU{s_{i+\ell}}{\pazocal{T}^{\scriptscriptstyle [i]}_{u}}{s_i} \neq 0$.
Consider the length-$n$ input
\[
\widebar{x}  =  0^{\,i}\,u\,0^{\,n-i-\ell}.
\]
By expanding the computation at the cut after reading $0^{\,i}$, we get
\[
|\psi(\widebar{x})\rangle
~=~
\pazocal{T}^{[i+\ell]}_{0^{\,n-i-\ell}}  \pazocal{T}^{[i]}_{u}|s_i\rangle.
\]
Now use the equation $|z\rangle=|s_n\rangle=\pazocal{T}^{[i+\ell]}_{0^{\,n-i-\ell}}|s_{i+\ell}\rangle$.
Because $\pazocal{T}^{[i+\ell]}_{0^{\,n-i-\ell}}$ is unitary, we have
\[
\langle z|\,\pazocal{T}^{[i+\ell]}_{0^{\,n-i-\ell}}
~=~
\langle s_{i+\ell}|.
\]
Therefore, because $\widebar{x} \neq 0^n$, we have
\[
0 = \langle z\,|\,\psi(\widebar{x}) \rangle
~=~
\langle z|\,\pazocal{T}^{[i+\ell]}_{0^{\,n-i-\ell}}\,\pazocal{T}^{[i]}_{u}|s_i\rangle
~=~
\braketU{s_{i+\ell}}{\pazocal{T}^{[i]}_{u}}{s_i}
\neq 0,
\]
which is a contradiction.
\end{proof}

We now use Lemma~\ref{lem:pumping} to derive a counting argument that completes the proof.
For each $k\in\{0,\ldots,n\}$, consider the input string
\[
x_k  =  0^{\,n-k}1^k,
\]
and define the corresponding final state $|t_k\rangle\triangleq |\psi(x_k)\rangle$.
We claim that the states $|t_0\rangle,|t_1\rangle,\ldots,|t_n\rangle$ are pairwise orthogonal.
Because all $|t_k\rangle$ are unit vectors (unitaries preserve norm), pairwise orthogonality
implies linear independence, yielding $\dim(\pazocal{H})\ge n+1$.

\begin{enumerate}[label=(\roman*),leftmargin=20pt]
\item For $k=0$, $x_0=0^n$, hence $|t_0\rangle=|\psi(0^n)\rangle=|z\rangle$.
For every $k\ge 1$, $x_k\neq 0^n$, so by assumption
\[
\langle t_0 \,|\, t_k\rangle
~=~
\langle z \,|\, \psi(x_k)\rangle
~=~0.
\]
Thus $|t_0\rangle$ is orthogonal to all $|t_k\rangle$ with $k\ge 1$.

\item It remains to show that $\langle t_i \,|\, t_j\rangle=0$ for all $1\le i<j\le n$.
Fixing $i,j$, we write each $|t_k\rangle$ by cutting right before the final $k$ input bits of one:
\[
|t_i\rangle
~=~
\pazocal{T}^{[n-i]}_{1^i}|s_{n-i}\rangle,
\quad
|t_j\rangle
~=~
\pazocal{T}^{[n-j]}_{1^j} |s_{n-j}\rangle.
\]
Crucially, we can factor the common \emph{suffix}
unitary corresponding to the last $i$ ones:
\[
\pazocal{T}^{[n-j]}_{1^j}
~=~
\pazocal{T}^{[n-i]}_{1^i}\circ \pazocal{T}^{[n-j]}_{1^{j-i}}.
\]
Therefore,
\begin{align*}
\langle t_i \,|\, t_j\rangle
&=
\langle s_{n-i}|\bigl(\pazocal{T}^{[n-i]}_{1^i}\bigr)^\dagger
\Bigl(\pazocal{T}^{[n-j]}_{1^j}|s_{n-j}\rangle\Bigr) \\
&=
\langle s_{n-i}|\bigl(\pazocal{T}^{[n-i]}_{1^i}\bigr)^\dagger
\Bigl(\pazocal{T}^{[n-i]}_{1^i} \, \pazocal{T}^{[n-j]}_{1^{j-i}}|s_{n-j}\rangle\Bigr) \\
&=
\braketU{s_{n-i}}{\pazocal{T}^{[n-j]}_{1^{j-i}}}{s_{n-j}}
\quad
(\text{because } (\pazocal{T}^{[n-i]}_{1^i})^\dagger \pazocal{T}^{[n-i]}_{1^i}=I).
\end{align*}
Observe that the final line is in the form of Lemma~\ref{lem:pumping}
with $i' \,{=}\, n\,{-}\,j$, $\ell'\,{=}\,j\,{-}\,i$, and $u\,{=}\,1^{j{-}i}$:
\[
\langle t_i \,|\, t_j\rangle
=
\braketU{s_{i'+\ell'}}{\pazocal{T}^{[i']}_{u}}{s_{i'}}.
\]
Because $u=1^{j-i}\neq 0^{j-i}$, Lemma~\ref{lem:pumping} implies
$\langle t_i \,|\, t_j\rangle=0$.
\end{enumerate}

We have thus exhibited $n+1$ pairwise orthogonal unit vectors
$|t_0\rangle,|t_1\rangle,\ldots,|t_n\rangle$ in $\pazocal{H}$,
and therefore $\dim(\pazocal{H})\ge n+1$.
\end{proof}

%% file: appendix/alternative.tex
\section{An Alternative Upper Bound Strategy for \textsc{FirstFail}}
\label{app:firstfail-alternative}

Here we provide an alternative construction to witness the matching upper bound of \textsc{FirstFail} in a single round. Different from the \textsf{index transposition} strategy (Strategy~\ref{str:single-round-index-transposition}) that uses index-dependent ancilla updates for each index $i$, it gives an index-independent ancilla update, provided that the ancilla register is updated not only on seeing assertion failures but also on passes.

\begin{strategy}[\textsc{\emph{Single-Round Modulo Decrement}}]
\label{str:single-round-modulo-decrement}
\vspace{-0.5ex}
Let $\ell \,{\triangleq}\, \lceil \log(n \,{+}\, 1)\rceil$.
The instrumentation $\pazocal{I}$ uses $\ell \,{+}\, 1$ ancillas, partitioned into an $\ell$-qubit index register $\textsf{idx}$ and a one-qubit flag $\textsf{fail}$, both initialized to $\ket{0}$.
For each assertion $A_i$, the assertion-handling block used by $\pazocal{I}$ is:
{
\setlength{\abovedisplayskip}{3pt}
\setlength{\belowdisplayskip}{3pt}
\setlength{\abovedisplayshortskip}{2pt}
\setlength{\belowdisplayshortskip}{2pt}
\begin{align*}
B_i
=
C_{i \to \textsf{fail}};
\ V;
\ C_{i \to \textsf{fail}},
\end{align*}}

\noindent
where $V$ is fixed and acts on $\textsf{idx}$ as follows.
If $\textsf{fail} \,{=}\, \ket{1}$, it applies $V_{f=1}$, which rotates the basis states $|0^\ell\rangle, |\mathrm{bin}(n)\rangle, |\mathrm{bin}(n{-}1)\rangle, \ldots, |\mathrm{bin}(1)\rangle$ by one step cyclically, acting as an arbitrary fixed unitary on the remaining basis states.
If $\textsf{fail} \,{=}\, \ket{0}$, it applies $V_{f=0}$, which fixes $|0^\ell\rangle$ and rotates $|\mathrm{bin}(1)\rangle, |\mathrm{bin}(n)\rangle, |\mathrm{bin}(n{-}1)\rangle, \ldots, |\mathrm{bin}(2)\rangle$ by one step.
Equivalently, viewed as permutations:
{\setlength{\abovedisplayskip}{3pt}
\setlength{\belowdisplayskip}{3pt}
\setlength{\abovedisplayshortskip}{2pt}
\setlength{\belowdisplayshortskip}{2pt}
\begin{align*}
V_{f=0}: (0)\,(1,n,n{-}1,\ldots,2)
\quad \text{and} \quad
V_{f=1}: (0,n,n{-}1,\ldots,1).
\end{align*}
}

\noindent
Intuitively, the first failure writes the value $n$ into $\textsf{idx}$, and every subsequent assertion, whether pass or fail, decrements this nonzero value by one.
Hence, after processing all assertions, $\textsf{idx} \,{=}\, |\mathrm{bin}(i^*)\rangle$, where $i^*$ is the first failing index, or $|0^\ell\rangle$ if none fail.
Measuring $\textsf{idx}$ therefore solves \textsc{FirstFail}.
\end{strategy}

\begin{example}
\vspace{-0.5ex}
Let $n=4$ and suppose the failure pattern is $F=(0,1,0,1)$, so the first failure is at $2$.
Then $\ell=\lceil \log 5\rceil=3$, and the ancilla register evolves as:

\vspace{-2.5ex}
{
\small
\setlength{\fboxsep}{0.5pt}
\setlength{\fboxrule}{0pt}
\[
\begin{aligned}
\ket{000}_{\textsf{idx}}\ket{0}_{\textsf{fail}}
&\xrightarrow{\cdots}
&& \fcolorbox{gray!100}{gray!7.5}{$\displaystyle
\xrightarrow[\text{write \textsf{fail}}]{C_{1 \to \textsf{fail}}}\ket{000}\ket{0}
\xrightarrow[\text{apply }V_{f=0}:~0\mapsto 0]{V} \ket{000}\ket{0}
\xrightarrow[\text{uncompute \textsf{fail}}]{C_{1 \to \textsf{fail}}}\ket{000}\ket{0}
~$} && A_1~\text{pass} \\[0ex]
&\xrightarrow{\cdots}
&& \fcolorbox{gray!100}{gray!7.5}{$\displaystyle
\xrightarrow[\text{write \textsf{fail}}]{C_{2 \to \textsf{fail}}}\ket{000}\ket{1}
\xrightarrow[\text{apply }V_{f=1}:~0\mapsto 4]{V} \ket{100}\ket{1}
\xrightarrow[\text{uncompute \textsf{fail}}]{C_{2 \to \textsf{fail}}}\ket{100}\ket{0}
~$} && A_2~\text{fail} \\[0ex]
&\xrightarrow{\cdots}
&& \fcolorbox{gray!100}{gray!7.5}{$\displaystyle
\xrightarrow[\text{write \textsf{fail}}]{C_{3 \to \textsf{fail}}}\ket{100}\ket{0}
\xrightarrow[\text{apply }V_{f=0}:~4\mapsto 3]{V} \ket{011}\ket{0}
\xrightarrow[\text{uncompute \textsf{fail}}]{C_{3 \to \textsf{fail}}}\ket{011}\ket{0}
~$} && A_3~\text{pass} \\[0ex]
&\xrightarrow{\cdots}
&& \fcolorbox{gray!100}{gray!7.5}{$\displaystyle
\xrightarrow[\text{write \textsf{fail}}]{C_{4 \to \textsf{fail}}}\ket{011}\ket{1}
\xrightarrow[\text{apply }V_{f=1}:~3\mapsto 2]{V} \ket{010}\ket{1}
\xrightarrow[\text{uncompute \textsf{fail}}]{C_{4 \to \textsf{fail}}}\ket{010}\ket{0}
~$} && A_4~\text{fail} \\[0ex]
&\xrightarrow{\cdots}
&& \text{execution ends} \xrightarrow{\text{measure \textsf{idx}}} 010 \xrightarrow{\textsf{Dec}} 2
\end{aligned}
\]
}

\vspace{-0.5ex}
The final index register is $|\mathrm{bin}(2)\rangle \,{=}\, \ket{010}$, and measuring $\textsf{idx}$ yields the smallest failing index.
\end{example}

\paragraph{Cost of Strategy~\ref{str:single-round-modulo-decrement}}
\vspace{-0.5ex}
The strategy has $T \,{=}\, 1$ and $S \,{=}\, \lceil \log(n \,{+}\, 1)\rceil \,{+}\, 1$.
The checker unitary is invoked twice per assertion, so $C \,{=}\, 2$, and only $\textsf{idx}$ is measured at the end, hence $M \,{=}\, \lceil \log(n \,{+}\, 1)\rceil$.
For gate cost, each assertion applies one ancilla-processing unitary $V$.
Both branches can be realized with $O(\ell)$ elementary gates, via the staircase decrement and Gray-path transpositions of Strategy~\ref{str:single-round-index-transposition}.
Thus each assertion costs ${O}(\ell)$ gates, yielding total non-checker gate cost $G={O}(n\log n)$.

%% file: appendix/first-lb.tex
\section{Proof of Lemma~\ref{lem:first-packing}}
\label{app:first-packing-proof}
 
Throughout this appendix, fix a $T$-round unitary transition system $(\pazocal{U}^{1}_n, \ldots, \pazocal{U}^{T}_n)$ that solves \textsc{First} with a classical decoder $\textsf{Dec}$ (Def.~\ref{def:unitary-transition-system}, Def.~\ref{def:distinguish}), and let $\pazocal{H}_t$, $|\psi^{t}_0\rangle$, and $d_t \triangleq \dim(\pazocal{H}_t)$ denote the Hilbert space, initial state, and dimension of $\pazocal{U}^{t}_n$.
For a cut position $i$ and a string $u = u_1 \cdots u_\ell \in \{0,1\}^{\ell}$ with $\ell \le n - i$, let \smash{$\pazocal{T}^{ t,[i]}_{u}$} denote the round-$t$ unitary segment induced by reading $u$ from step $i+1$ to $i+\ell$. For a full input $x \,{\in}\, \{0,1\}^n$, we abbreviate its \emph{input operator} as \smash{$\pazocal{T}^{t}_{x} \triangleq \pazocal{T}^{t,[0]}_{x}$}.
Each segment is unitary, segments compose as {$\pazocal{T}^{t,[0]}_{wu} = \pazocal{T}^{t,[|w|]}_{u}\, \pazocal{T}^{t,[0]}_{w}$}, and the round-$t$ final state on $x$ is $|\psi^{t}(x)\rangle = \pazocal{T}^{t}_{x}|\psi^{t}_0\rangle$.
 
\begin{lemma}[Pairwise Separation]
\label{lem:first-separation}
For every pair of distinct inputs $x, y \in \{0,1\}^n$,
{\setlength{\abovedisplayskip}{3pt}
\setlength{\belowdisplayskip}{2pt}
\setlength{\abovedisplayshortskip}{0pt}
\setlength{\belowdisplayshortskip}{0pt}
\begin{align*}
\max_{t \in [T]} \big\Vert \pazocal{T}^{t}_{x} - \pazocal{T}^{t}_{y} \big\Vert_{\mathrm{op}} \ \ge\ \sqrt{2}, \quad \text{where $\Vert\cdot\Vert_{\mathrm{op}}$ denotes the operator norm.}
\end{align*}
}
\end{lemma}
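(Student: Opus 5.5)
The plan is to follow the witness-vector argument already sketched for Lemma~\ref{lem:first-packing}, and to fill in the transport identities carefully.

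Fix distinct $x,y \in \{0,1\}^n$. Let $i$ be the first position where they differ. Without loss of generality (the operator norm is symmetric in $x,y$) assume $x_i = 0$ and $y_i = 1$. Write $w \triangleq x_1\cdots x_{i-1} = y_1\cdots y_{i-1}$ for the common prefix, and $x = w\,0\,x'$, $y = w\,1\,y'$ with $x', y' \in \{0,1\}^{n-i}$.

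For each round $t$, define the witness
\[
|v_t\rangle \triangleq \bigl(\pazocal{T}^{t}_{w}\bigr)^\dagger\,\pazocal{T}^{t,[0]}_{0^{i-1}}|\psi^t_0\rangle ,
\]
where $\pazocal{T}^t_w$ abbreviates $\pazocal{T}^{t,[0]}_w$. This is a unit vector because both factors are unitary and $|\psi^t_0\rangle$ is a unit vector. (When $i=1$ it is simply $|\psi^t_0\rangle$.)

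The first key step is the transport identity, which follows from the composition rule for segments:
\[
\pazocal{T}^t_x|v_t\rangle = \pazocal{T}^{t,[i]}_{x'}\,\pazocal{T}^{t,[i-1]}_{0}\,\pazocal{T}^t_w\bigl(\pazocal{T}^t_w\bigr)^\dagger\pazocal{T}^{t,[0]}_{0^{i-1}}|\psi^t_0\rangle = \pazocal{T}^{t,[0]}_{0^{i}x'}|\psi^t_0\rangle = |\psi^t(\tilde x)\rangle,
\]
with $\tilde x \triangleq 0^i x'$. Likewise $\pazocal{T}^t_y|v_t\rangle = |\psi^t(\tilde y)\rangle$ with $\tilde y \triangleq 0^{i-1}1y'$. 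The essential point is that $|v_t\rangle$ is the same vector for both inputs, because the prefix $w$ is shared. This holds in every round simultaneously.

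Next, compare the answers on the transported inputs. $\textsc{First}(\tilde y) = i$, while $\tilde x$ has zeros in positions $1,\dots,i$, so $\textsc{First}(\tilde x) \in \{\bot\}\cup\{i+1,\dots,n\}$. The answers therefore differ, and Lemma~\ref{lem:answer-separation} gives a round $t$ with $\langle\psi^t(\tilde x)|\psi^t(\tilde y)\rangle = 0$. In that round,
\[
\|(\pazocal{T}^t_x-\pazocal{T}^t_y)|v_t\rangle\|^2 = 2 - 2\,\mathrm{Re}\langle\psi^t(\tilde x)|\psi^t(\tilde y)\rangle = 2 .
\]
Since $|v_t\rangle$ is a unit vector, this yields $\|\pazocal{T}^t_x-\pazocal{T}^t_y\|_{\mathrm{op}} \ge \sqrt2$, and hence the maximum over $t$ is at least $\sqrt2$.

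The only delicate step is the transport identity: one must get the segment indexing right so that the inverse of the shared prefix exactly cancels. The remainder is immediate from Lemma~\ref{lem:answer-separation}. The orthogonality needed there holds only for the transported pair $\tilde x,\tilde y$, not for $x,y$ themselves. That is why the witness is chosen to be the same vector in every round, so a single separating round suffices regardless of which round it turns out to be.
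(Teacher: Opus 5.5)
Your proposal is correct and takes the same approach as the paper's proof: the first differing position with $x_i=0$, $y_i=1$, the witness $|v_t\rangle = (\pazocal{T}^{t,[0]}_{w})^\dagger \pazocal{T}^{t,[0]}_{0^{i-1}}|\psi^t_0\rangle$, transport to $\tilde x = 0^i x'$ and $\tilde y = 0^{i-1}1y'$, then Lemma~\ref{lem:answer-separation} and the norm computation, with your segment indexing also correct. One wording slip: your closing remark that the witness is ``the same vector in every round'' should say it is the same vector for both inputs and is defined in every round, since $|v_t\rangle$ depends on $t$.
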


\begin{proof}
Fix distinct $x, y \in \{0,1\}^n$, let $i \triangleq \min\{\,j \in [n] \mid x_j \neq y_j\,\}$ be their first differing position, and assume $x_i = 0$ and $y_i = 1$ (exchanging $x$ and $y$ if necessary).
Write $w \triangleq x_1 \cdots x_{i-1} = y_1 \cdots y_{i-1}$ for the common prefix, and define, for each round $t$, the witness vector
{\setlength{\abovedisplayskip}{3pt}
\setlength{\belowdisplayskip}{2pt}
\setlength{\abovedisplayshortskip}{0pt}
\setlength{\belowdisplayshortskip}{0pt}
\begin{align*}
|v_t\rangle \ \triangleq\ \big(\pazocal{T}^{t,[0]}_{w}\big)^{\dagger}\, \pazocal{T}^{t,[0]}_{0^{i-1}}\, |\psi^{t}_0\rangle,
\end{align*}
}
 
\noindent
a unit vector satisfying $\pazocal{T}^{t,[0]}_{w}|v_t\rangle = \pazocal{T}^{t,[0]}_{0^{i-1}}|\psi^{t}_0\rangle$; when $i = 1$, both segments are the empty product and $|v_t\rangle = |\psi^{t}_0\rangle$.
Using the composition of segments and $x_i = 0$,
{\setlength{\abovedisplayskip}{3pt}
\setlength{\belowdisplayskip}{2pt}
\setlength{\abovedisplayshortskip}{0pt}
\setlength{\belowdisplayshortskip}{0pt}
\begin{align*}
\pazocal{T}^{t}_{x}|v_t\rangle
~=~ \pazocal{T}^{t,[i-1]}_{x_i \cdots x_n}\, \pazocal{T}^{t,[0]}_{w}\, |v_t\rangle
~=~ \pazocal{T}^{t,[i-1]}_{0\, x_{i+1} \cdots x_n}\, \pazocal{T}^{t,[0]}_{0^{i-1}}\, |\psi^{t}_0\rangle
~=~ |\psi^{t}(\tilde{x})\rangle,
\end{align*}
}
 
\noindent
where $\tilde{x} \triangleq 0^{i} x_{i+1} \cdots x_n$; similarly, using $y_i = 1$, we get $\pazocal{T}^{t}_{y}|v_t\rangle = |\psi^{t}(\tilde{y})\rangle$ with $\tilde{y} \triangleq 0^{i-1} 1\, y_{i+1} \cdots y_n$.

\vspace{0.5ex}
The two modified inputs have different answers: $\textsc{First}(\tilde{y}) = i$, whereas $\textsc{First}(\tilde{x}) \in \{\bot\} \cup \{i+1, \ldots, n\}$.
By Lemma~\ref{lem:answer-separation}, there is a round $t$ in which
$|\psi^t(\tilde{x})\rangle$ and $|\psi^t(\tilde{y})\rangle$ have disjoint basis supports; in particular $\langle \psi^t(\tilde{x})\,|\,\psi^t(\tilde{y})\rangle = 0$, i.e., $\langle \pazocal{T}^t_x v_t \,|\, \pazocal{T}^t_y v_t\rangle = 0$.
Both vectors are unit vectors, since the segments are unitary and $|v_t\rangle$ is a unit vector; hence
{\setlength{\abovedisplayskip}{2pt}
\setlength{\belowdisplayskip}{3pt}
\setlength{\abovedisplayshortskip}{0pt}
\setlength{\belowdisplayshortskip}{0pt}
\begin{align*}
\big\Vert\big(\pazocal{T}^{t}_{x} - \pazocal{T}^{t}_{y}\big)|v_t\rangle\big\Vert^2
~=~ 1 + 1 - 2\,\mathrm{Re}\,\langle \pazocal{T}^{t}_{x} v_t \,|\, \pazocal{T}^{t}_{y} v_t\rangle
~=~ 2,
\end{align*}
}
 
\noindent
and therefore $\Vert\pazocal{T}^{t}_{x} - \pazocal{T}^{t}_{y}\Vert_{\mathrm{op}} \ge \big\Vert\big(\pazocal{T}^{t}_{x} - \pazocal{T}^{t}_{y}\big)|v_t\rangle\big\Vert = \sqrt{2}$.
\end{proof}
 
\begin{lemma}[Volumetric Packing]
\label{lem:volumetric-packing}
Let $(V, \Vert\cdot\Vert)$ be an $m$-dimensional real normed vector space, and let $z_1, \ldots, z_N \in V$ satisfy $\Vert z_i\Vert \le R$ for every $i$ and $\Vert z_i - z_j\Vert \ge \delta$ for all $i \neq j$. Then $N \le (1 + 2R/\delta)^m$.
\end{lemma}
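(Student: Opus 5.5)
The plan is the standard volume comparison in $V$, the $m$-dimensional real normed space of the statement. Let $B$ denote its closed unit ball. Because all norms on a finite-dimensional real space are equivalent, we fix a linear isomorphism $V \cong \mathbb{R}^m$ and use Lebesgue measure $\mathrm{vol}$ there. The unit ball $B$ is compact, convex, symmetric and has nonempty interior, so $0<\mathrm{vol}(B)<\infty$. Lebesgue measure also scales by translation-invariance and homogeneity: $\mathrm{vol}(a + rB) = r^m\,\mathrm{vol}(B)$ for every $a\in V$ and $r>0$.

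The key step is to place around each point $z_i$ the ball $B_i \triangleq z_i + (\delta/2)B^{\circ}$, where $B^{\circ}$ is the open unit ball (which has the same volume as $B$). These balls are pairwise disjoint. Indeed, if $w\in B_i\cap B_j$ for some $i\neq j$, the triangle inequality would give $\Vert z_i - z_j\Vert \le \Vert z_i - w\Vert + \Vert w - z_j\Vert < \delta$, contradicting the separation hypothesis. Each $B_i$ is also contained in the enlarged ball $(R+\delta/2)B$, since $\Vert w\Vert \le \Vert z_i\Vert + \delta/2 \le R+\delta/2$ for every $w\in B_i$.

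By additivity of measure over the disjoint sets, and by monotonicity, we then get
\[
N\,(\delta/2)^m\,\mathrm{vol}(B) \;\le\; (R+\delta/2)^m\,\mathrm{vol}(B).
\]
Dividing by $(\delta/2)^m\,\mathrm{vol}(B)>0$ yields $N \le (1+2R/\delta)^m$, as claimed.

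The only point needing care is the justification that $\mathrm{vol}(B)$ is finite and positive, so that the division is legitimate. This follows from norm equivalence: $B$ is sandwiched between two Euclidean balls of positive radius, so its measure lies strictly between $0$ and $\infty$. In the application to Lemma~\ref{lem:first-packing}, $V$ is $\bigoplus_t \mathbb{C}^{d_t\times d_t}$ viewed as a real space of dimension $m = 2\sum_t d_t^2$, with norm $\max_t\Vert\cdot\Vert_{\mathrm{op}}$. We take $R=1$ because the input operators are unitary, and $\delta=\sqrt2$ by Lemma~\ref{lem:first-separation}. This gives $2^n\le(1+\sqrt2)^{2\sum_t d_t^2}$.
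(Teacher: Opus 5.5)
Your proof is correct and follows the paper's argument essentially verbatim. Both proofs use disjoint open balls of radius $\delta/2$ around the $z_i$, all contained in the ball of radius $R+\delta/2$, and compare them via Lebesgue measure under a linear identification $V \cong \mathbb{R}^m$. One small simplification the paper uses: since $\Vert w\Vert < R+\delta/2$ strictly for every $w \in B_i$, you can contain the $B_i$ in the open ball $(R+\delta/2)B^{\circ}$ and so avoid needing $\mathrm{vol}(B^{\circ}) = \mathrm{vol}(B)$, which is true but left unjustified in your write-up; conversely, your explicit check that $0<\mathrm{vol}(B)<\infty$ is a detail the paper leaves implicit.
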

 
\begin{proof}
Let $r \triangleq \delta/2$ and consider the open balls $B(z_i, r) \triangleq \{z \,{\in}\, V : \Vert z - z_i\Vert < r\}$.
They are pairwise disjoint: a common point $z$ of $B(z_i, r)$ and $B(z_j, r)$ would give $\Vert z_i - z_j\Vert \le \Vert z_i - z\Vert + \Vert z - z_j\Vert < 2r = \delta$, contradicting the separation.
Moreover, every $B(z_i, r)$ is contained in $B(0, R + r)$, since any of its points satisfies $\Vert z\Vert \le \Vert z - z_i\Vert + \Vert z_i\Vert < r + R$.
Fix any linear identification $V \cong \mathbb{R}^m$ and use the standard volume in these coordinates; the choice of identification scales all volumes by a common positive constant, which cancels below.
Balls scale as $\mathrm{Vol}(B(0, a)) = a^m \, \mathrm{Vol}(B(0,1))$, so disjointness and containment give $N \cdot r^m \le (R + r)^m$, i.e., $N \le (1 + 2R/\delta)^m$.
\end{proof}
 
\begin{proof}[Proof of Lemma~\ref{lem:first-packing}]
Consider the real vector space $V \triangleq \bigoplus_{t=1}^{T} M_{d_t}(\mathbb{C})$, where $M_{d_t}(\mathbb{C})$ is the space of $d_t \times d_t$ complex matrices regarded as a real vector space, so that $m \triangleq \dim_{\mathbb{R}} V = 2\sum_{t=1}^{T} d_t^{\,2}$.
Equip $V$ with the norm $\Vert(X_1, \ldots, X_T)\Vert \triangleq \max_{t \in [T]} \Vert X_t\Vert_{\mathrm{op}}$, and associate with each input $x \in \{0,1\}^n$ the tuple of its input operators
{\setlength{\abovedisplayskip}{0pt}
\setlength{\belowdisplayskip}{2pt}
\setlength{\abovedisplayshortskip}{0pt}
\setlength{\belowdisplayshortskip}{0pt}
\begin{align*}
Z(x) \ \triangleq\ \big(\pazocal{T}^{1}_{x}, \ldots, \pazocal{T}^{T}_{x}\big) \in V.
\end{align*}
}
 
\noindent
Every component is unitary, so $\Vert Z(x)\Vert = 1$ and all $2^n$ tuples lie in the unit ball of $V$; by Lemma~\ref{lem:first-separation}, they are pairwise $\sqrt{2}$-separated.
Applying Lemma~\ref{lem:volumetric-packing} with $N = 2^n$, $R = 1$, and $\delta = \sqrt{2}$ gives
{\setlength{\abovedisplayskip}{3pt}
\setlength{\belowdisplayskip}{2pt}
\setlength{\abovedisplayshortskip}{0pt}
\setlength{\belowdisplayshortskip}{0pt}
\begin{align*}
2^n \ \le\ \big(1 + \tfrac{2}{\sqrt{2}}\big)^{m} \ =\ (1+\sqrt{2})^{\,2\sum_{t=1}^{T} d_t^{2}}.
\end{align*}
}
 
\noindent
Taking logarithms yields $n \le 2\log(1+\sqrt{2}) \cdot \sum_{t=1}^{T} d_t^{\,2}$, which rearranges to the claimed bound.
\end{proof}